\pdfoutput=1
\documentclass[final]{siamart190516}
\usepackage{ifthen}
\usepackage{siunitx}

\newboolean{useSISC}
\setboolean{useSISC}{true}

\newboolean{presentAllData}
\setboolean{presentAllData}{false}

\usepackage[utf8]{inputenc}
\usepackage{amssymb,amsmath}

\usepackage{algorithm}
\usepackage{algpseudocode}

\usepackage{listings}

\usepackage[sort]{cite}

\usepackage{multirow}
\newcommand{\STAB}[1]{\begin{tabular}{@{}c@{}}#1\end{tabular}}

\usepackage{hyperref}

\usepackage{graphicx}
\graphicspath{ {./images/} }

\usepackage[final]{changes}

\definechangesauthor[name=Review1, color=blue]{R1}
\definechangesauthor[name=Review2, color=olive]{R2}
\definechangesauthor[name=Review3, color=red]{R3}
\definechangesauthor[name=Us, color=gray]{Add}

\newcounter{implementation}
\newenvironment{implementation}[1][]{
  \refstepcounter{implementation}\par\medskip
  \emph{Implementation remark~\theimplementation. #1} \rmfamily
}
{$\square$ \medskip}


\newcommand{\TheTitle}{A multiresolution Discrete Element Method for
triangulated objects with implicit time stepping}

\title{\TheTitle}

%
%
\ifthenelse{ \boolean{useSISC} }
{
 \title{
  \TheTitle
  \thanks{Submitted to the editors DATE.
   \funding{
    The work was funded by an EPSRC DTA PhD scholarship (award no. 1764342).
   } 
  }
 }
 \author{
   Peter J.~Noble 
    \thanks{
     Department of Computer Science, 
     Durham University
     (\email{peter.j.noble@durham.ac.uk}).
    }
   \and 
   Tobias Weinzierl
    \thanks{
     Department of Computer Science, 
     Durham University
     (\email{tobias.weinzierl@durham.ac.uk}).
    }
 }
 \headers{A Multiresolution Discrete Element Method}{Peter J.~Noble and Tobias Weinzierl}
}{
 \usepackage{hyperref}
 \usepackage{graphicx}
 \usepackage{geometry}
 \geometry{left=4.1cm,textwidth=14.2cm,top=2cm,textheight=24cm}

 \title{
  \TheTitle
 }
 \author{Peter Noble \and Tobias Weinzierl}
 \newtheorem{definition}{Definition}
 \newtheorem{corollary}{Corollary}
 \newtheorem{lemma}{Lemma}
 \newtheorem{proof}{Proof}
}

\newtheorem{observation}{Observation}
\newtheorem{assumption}{Assumption}

\DeclareMathOperator*{\argmin}{arg\,min}

\begin{document}

  \algrenewcommand\algorithmicindent{1.0em}%

  \maketitle

  \begin{abstract}
    Simulations of many rigid bodies colliding with each other sometimes yield
particularly interesting results if the colliding objects differ significantly in size and
are non-spherical.
The most expensive part within such a simulation code is the collision detection.
We propose a family of novel multiscale collision
detection algorithms that can be applied to triangulated objects within
explicit and implicit time stepping methods.
They are
well-suited to handle objects that cannot be
represented by analytical shapes or assemblies of analytical objects.
Inspired by multigrid methods and adaptive mesh refinement, we determine
collision points iteratively over a resolution hierarchy, and combine a
functional minimisation plus penalty parameters with the actual comparision-based
geometric distance calculation.
Coarse surrogate geometry representations identify ``no collision'' scenarios
early on and otherwise yield an educated guess which triangle
subsets of the next finer level \replaced[id=Add]{might}{potentially} yield
collisions.
They prune the search tree\deleted[id=Add]{,} and furthermore feed conservative
contact force estimates into the iterative solve behind an implicit time
stepping.
Implicit time stepping and non-analytical shapes often yield prohibitive high
compute cost for rigid body simulations. 
Our approach reduces \replaced[id=R1]{the object-object comparison cost}{these
cost} algorithmically by one to two orders of magnitude.
It also exhibits high
vectorisation efficiency due to its iterative nature.

  \end{abstract}
  \ifthenelse{ \boolean{useSISC} }{
 \begin{keywords}
   Discrete Element Method, triangle collision checks, implicit time stepping,
   multiscale methods, surrogate geometries
 \end{keywords}


 \begin{AMS}
  70E55, 70F35, 68U05, 51P05, 37N15
 \end{AMS}
  }{
  }
  
  \section{Introduction}
\label{section:introduction}

%
%
The simulation of rigid \added[id=Add]{or incompressible} bodies
\deleted[id=Add]{or rigid body parts} is a challenge that arises in many fields.
Notably,
it is at the heart of Discrete Element
\replaced[id=Add]{Method (DEM) simulations,}{simulations (DEM),} where millions
of these objects are studied.
Progress on
simulations with rigid, impenetrable objects depends
on whether we can handle high geometric detail \added[id=Add]{accurately}:
For the analysis of particle flow such as powder, it is mandatory to simulate
billions of particles\added[id=Add]{, i.e.~incompressible bodies}, while the
realism of some simulations hinges on the ability to handle
particles of different shapes and sizes
\cite{Alonso:09:EfficientDEM,Iglberger:2010:GranularFlowNonSpherical,Krestenitis:15:FastDEM,Krestenitis:17:FastDEM,Li:1998:HierarchicalBoundingVolumes,Zhong2016, Weinhart2016InfluenceOC, alonsomarroquin2008efficient, grain3d, Zhao2019APA}.
It is the support of different shapes and
sizes that allows us to simulate complex mixture
\replaced[id=Add]{phenomena, separation of scales,}{and separation of scales
phenomena,} or blockage if many objects, aka particles, try to squeeze through narrow geometries, e.g.

\begin{figure}[h]
 \begin{center}
  \includegraphics[width=\textwidth]{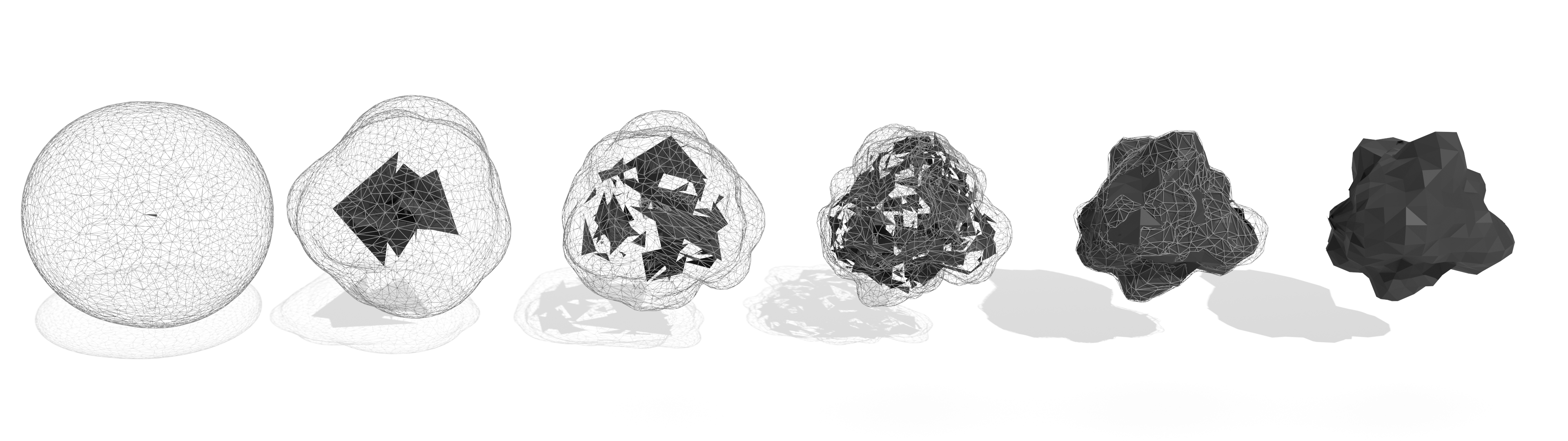}
 \end{center}
 \vspace{-0.8cm}
 \caption{
  Surrogate triangle hierarchy for a ``bumped sphere''.
  The actual sphere geometry $\mathbb{T}_h$ is shown on the right.
  \replaced[id=R3]{From left to right:
  Finer and finer}{From right to left: 
  Coarser and coarser} surrogate representations incl.~their $\epsilon
  $-environments.
  The coarser a representation, the lower the triangle count and the larger the
  corresponding $\epsilon $.
  The surrogate triangles are weakly connected.
  \added[id=R3]{The right most geometry is the actual particle with a closed surface.}
  \label{figure:multiresolution:weakly-connected}
 }
\end{figure}

%
%
DEM codes spend most of their runtime on
collision detection
\cite{Iglberger:2010:GranularFlowNonSpherical,Krestenitis:17:FastDEM,Li:1998:HierarchicalBoundingVolumes,Rakotonirina:2018:ConvexShapes}\deleted[id=Add]{---even
if they restrict to analytical shapes only}.
We tighten the challenge and study a DEM prototype over particles where 
(i) rigid particles have massively differing size, 
(ii) rigid particles are discretised by many triangles, and
(iii) rigid particles have complicated, non-convex shapes.
Our code supplements each rigid particle with an $\epsilon $-area
\cite{Alonso:09:EfficientDEM, alonsomarroquin2008efficient, grain3d, Zhao2019APA}
and considers two particles
to be ``in contact'' if their $\epsilon $-environment overlaps.
This yields a weak compressibility model, where the contact points are unique
up to an $\epsilon $-displacement.
\replaced[id=Add]{Yet, both}{Both} the arrangement and the topology of these
points \added[id=Add]{still} can change significantly between any two time
steps, while the collision models using the
contact data \replaced[id=Add]{remain}{are} inherently stiff.
\added[id=R1]{
Large-scale DEM codes require the comparison of many particles per time
step.
Efficient many-body simulations employ techniques such
as neighbour lists, cell or tree meta data structures
}
\cite{EricsonRTCD,Krestenitis:17:FastDEM, alonsomarroquin2008efficient,grain3d} 
\added[id=R1]{
to narrow down the potential collisions, i.e.~to identify particle pairs which might collide as they are spatially close.
After this pre-processing or filtering step, they detect collision points
between particle pairs.
We focus exclusively on the particle-particle comparison challenge, as 
efficient algorithms for the neighbourhood identification---including techniques
for challenging shapes and massively differing sizes---are known.
}

Our work proposes a multiscale contact detection scheme which brings down the
compute time for the contact detection aggressively.
We can handle complex, non-convex shapes and even speed up implicit
time stepping significantly.
The latter is, so far, prohibitively expensive for most codes.
To reduce the runtime, our approach phrases the contact search
\added[id=Add]{between two particles} as an iterative algorithm over multiple
resolutions where coarser particle resolutions act as surrogates.
This idea enables us to introduce five algorithmic optimisations:
First, the surrogates
help us to identify ``no contact'' constellations quickly.
\added[id=R1]{
 In this case, we can immediately terminate the search algorithm.
 The surrogates yield a generalisation of classic bounding sphere checks---if two
 bounding spheres do not overlap, the underlying objects can not overlap---to
 highly non-spherical geometries.
} Second, \replaced[id=Add]{we}{they} exploit that we do not
 represent surrogate resolutions as a plain level of detail \cite{RealTimeRendering} but make them form
a tree:
If we identify a potential collision, only those sections of the geometry are
``up-pixeled'' from where a collision point might arise from.
We increase the resolution locally.
\added[id=R1]{
 The complexity per iterative step thus does not grow exponentially as we
 switch to more accurate geometric representations.
 Instead we tend to have a linear increase of cost as we use finer and finer
 geometric models.
}
Third, the surrogate resolutions yield conservative estimates of the force that might result from
a contact point.
Once we employ an implicit time stepping scheme with a Picard iteration, 
we can permute the iterative
solver loop and the resolution switches such that the 
Picard iteration forms the outer loop which zaps through resolution
levels upon demand.
\added[id=R1]{
 The cheap surrogates provide an educated guess to the iterative force
 calculation and thus accelerate the convergence.
}
Fourth,  we phrase the contact detection as a distance minimisation
problem \cite{Krestenitis:15:FastDEM,Krestenitis:17:FastDEM}.
The minimisation problem is solved iteratively through an additional, embedded
Newton which approximates the Jacobian via a
diagonal matrix and runs through a prescribed number of sweeps.
Once more, we permute the loop over triangle pairs and \added[id=Add]{the}
Picard iterations to improve the vectorisation suitability.
Finally, we acknowledge that an iterative minimisation subject to a prescribed iteration count
can fail if the underlying geometric problem is ill-posed. 
In such cases, we eventually postprocess it by falling back to a comparison-based
distance calculation.
However, this is only required on the finest mesh, whereas we use non-convergence within the surrogate tree as 
sole ``refinement criterion'' (use a finer mesh) that does not
\replaced[id=Add]{feed forces into}{contribute forces towards} the implicit time
stepping.

To the best of our knowledge, our rigorous multiscale idea, which can be read
as a combination of (i) loop permutation and fusion, (ii) adaptive mesh refinement, (iii)
a generalisation of volume bounding hierarchies
\cite{Barequet,Dammertz,Eisenacher,Gottschalk,Held1996RealtimeCD} and (iv) an
approximate, weak closest-triangle formulation 
\cite{Krestenitis:15:FastDEM,Krestenitis:17:FastDEM}, is unprecedented.
Its reduction of computational cost plus its excellent vectorisation character
in combination with the fact that rigid body simulations 
scale well by construction---the extremely short-range
interactions fit well to domain decomposition---brings implicit DEM simulations for triangulated, non-convex
shapes within practical applications into reach.
The core algorithmic ideas furthermore have an impact well beyond
the realm of DEM.
The search for nearest neighbours, i.e.~contact within a certain environment,
is, for example, also a core challenge behind
fluid-structure interaction (FSI).

%
%
The manuscript is organised as follows:
We first introduce our algorithmic challenge and a textbook implementation of
both explicit and implicit time stepping for it (Section~\ref{section:algorithmic-framework}).
An efficient contact detection between triangulated
surfaces of two particles via a minimisation problem is introduced next (Section
\ref{section:iterative}), before
we rewrite the underlying geometric problem as a multiresolution challenge and
introduce our notion of a surrogate
data structure (Section~\ref{section:multiresolution-model}).
In Section~\ref{section:multiresolution-contact-detection}, we bring both the efficient triangle comparisons and the
tree idea together as we plug them into an implicit time stepping
code, before we allow the non-linear equation system solvers' iterations
to move up and down within the surrogate tree.
This is the core contribution of the manuscript. 
Following the discussion of some numerical results
(Section~\ref{section:results}), we sketch future
work and close the discussion.

  \section{Algorithmic framework}
\label{section:algorithmic-framework}

We study a system of $| \mathbb{P} |$ rigid bodies (particles).
Each particle $p \in \mathbb{P}$ has \deleted[id=Add]{a position $x(p,t)$,} a velocity
$v(p,t)$ and \replaced[id=Add]{an angular velocity $\omega(p,t)$ which
determine its change of position and rotation.}{a rotation $r(p,t)$.} 
Each is described by a triangular tessellation
$\mathbb{T}(p,t)$.
$t \geq 0$ is the simulation time.
We may assume that $\mathbb{T}(p,t)$ spans a well-defined, closed surface
represented by a \replaced[id=Add]{mesh where no}{conformal mesh:
No} two triangles intersect, \deleted[id=Add]{two triangles share at
most one complete edge or exactly one vertex,} and we can ``run around'' a
particle infinitely often without falling into a gap.
While the triangulation of the object is time-invariant, it moves and rotates
over time and therefore depends on $t$.

\begin{algorithm}[htb]
  \caption{
    High-level pseudo code for an explicit Euler for rigid particles.
    The continues properties \replaced[id=Add]{$v(p,t), \omega(p,t)$}{$v(p,t), r(p,t)$} and $\mathbb{T}(p,t)$ are
    discretised in time and thus become \replaced[id=Add]{$v(p), \omega(p)$}{$v(p), r(p)$} and $\mathbb{T}(p)$.
    $t$ is the (discretised) time, $\Delta t$ the time step size.
    \label{algorithm:algorithmic-framework:explicit-Euler}  
  }
 {\footnotesize
  \begin{algorithmic}[1]
    \While{$t<T_{\mbox{terminal}}$}
      \Comment{We simulate over a time span}
      \State $\forall p_i \in \mathbb{P}: \ \mathbb{C}(p_i) \gets \emptyset$ 
        \Comment{Clear set of collisions for particle $p_i$}
      \For{$p_i,p_j \in \mathbb{P},\ p_i \not= p_j$}
        \Comment{Run over all particle pairs}
        \State $\mathbb{C}(p_i) \gets \mathbb{C}(p_i) \cup $
        \Call{findContacts}{$\mathbb{T}(p_i),\mathbb{T}(p_j$)}
        \State $\mathbb{C}(p_j) \gets \mathbb{C}(p_j) \cup $
        \Call{findContacts}{$\mathbb{T}(p_i),\mathbb{T}(p_j)$}
      \EndFor
      \For{$p_i \in \mathbb{P}$}
        \State $\mathbb{T}(p_i) \gets $
        \Call{update}{$\mathbb{T}(p_i),v(p_i),\omega(p_i),\Delta t$}
        \Comment{Update geometry}
      \EndFor
        \Comment{using velocity, rotation and time step size}
      \For{$p_i \in \mathbb{P}$}
        \State $(dv,d\omega) \gets$ \Call{calcForces}{$\mathbb{C}(p_i)$}
        \State $(v,\omega)(p_i) \gets (v,\omega)(p_i) + \Delta t \cdot (dv,d\omega)$
        \Comment{Update velocity and rotation}
      \EndFor
      \State $t \gets t + \Delta t$
    \EndWhile
  \end{algorithmic}
  }
\end{algorithm}

\begin{algorithm}[htb]
  \caption{
    Contact identification between two particles $p_i$ and $p_j$.
    \label{algorithm:algorithmic-framework:contacts}  
  }
 {\footnotesize
  \begin{algorithmic}[1]
    \Function{findContacts}{$\mathbb{T}(p_i),\mathbb{T}(p_j)$}
     \State $\mathbb{C} = \emptyset$
        \For{$t_i \in \mathbb{T}(p_i), t_j \in \mathbb{T}(p_j), \ t_i \not= t_j$}
        \Comment{Run over all triangles pairs}
          \State $c \gets \textsc{contact}(t_i,t_j)$
            \Comment{Find closest point in-between $t_i$ and $t_j$ and compare normal} 
          \If{$c \not= \bot$}
            \Comment{$|n|$ against $\epsilon$; return
              $\bot $  if $|n| > \epsilon$} 
            \State $\mathbb{C} \gets \mathbb{C} \cup \{c\}$ 
          \EndIf
        \EndFor
     \State \Return $\mathbb{C}$
    \EndFunction
  \end{algorithmic}
  }
\end{algorithm}

\paragraph{Time stepping}

%
%
A straightforward high-level implementation of an explicit Euler for DEM
consists of a time loop hosting a sequence of further loops (Algorithm
\ref{algorithm:algorithmic-framework:explicit-Euler}):
The first inner loop identifies all contact points between the particles.
Once we know all contact points per particle, we can determine a velocity 
and rotation update ($dv(p)$ and \replaced[id=Add]{$d\omega(p)$}{$dr(p)$}) per
particle.
Before we do so, we update the particles' positions using their velocity and
angular momentum.
Finally, we progress in time.
Euler-Cromer would result from a permutation of the update sequence.

%
%
It is impossible to simulate exact incompressibility with explicit time stepping
schemes
\added[id=Add]{
when no interpenetration is allowed}: 
Everytime we update a particle, we run 
risk that it slightly penetrates another
one due to the finite time step size $\Delta t$.
At the same time, particles exchange no momentum as long as they are not in
direct contact yet.
The momentum exchange remains ``trivial'' until we have violated the
rigid body constraint.
For these two reasons, we switch to a weak incompressibility
model where each particle is surrounded by an $\epsilon >0$ area
\cite{Alonso:09:EfficientDEM, Zhao2019APA, grain3d,
alonsomarroquin2008efficient}.
The area is spanned by the Minkovski sum of the triangles from
$\mathbb{T}(p,t)$ and a sphere of radius $\epsilon $ minus the actual rigid object.
Without loss of generality, we assume a uniform $\epsilon $ per particle.
Our formalism is equivalent to a soft boundary formulation with an
extrusive surface.

\begin{definition}[Contact point]
 \label{definition:contact-point}
 Each particle is surrounded by an $\epsilon $-environment.
 Two particles are in contact, if their $\epsilon $-environment overlaps. 
 Overlaps yield contact points which in turn yield forces.
\end{definition}

\noindent
We parameterise each contact point with its penetration depth:
Our contact detection in Algorithm
\ref{algorithm:algorithmic-framework:contacts} identifies the closest path
between two triangles.
A potential contact point $c$ is located at the centre of
this line.
It \deleted[id=R3]{furthermore} is equipped with a normal $n(c)$, which points \replaced[id=R3]{from the contact point}{along the line}
towards either of the closest triangles (Figure~\ref{figure:multiresolution-model:contact-point}). 
There is an overlap between the two $\epsilon $-augmented triangles if and only
if $|n(c)|\leq \epsilon$\added[id=R3]{, that is if the $\epsilon $-environments penetrate}.
\replaced[id=R3]{In this case}{Such a point} $c$ is added to the set of collision points.

With a set of contact points plus their normals \added[id=R3]{and dimensionless masses $M(p_i)$ and $M(p_j)$}, 
we can derive \replaced[id=Add]{the}{a} force acting on a particle\deleted[id=R3]{ and, taking
the centre of mass and the mass of a particle into account, determine its acceleration and torque }.

\[
  F(c) = \frac{n(c)}{|n(c)|} K_s (1 - \frac{|n(c)|}{\epsilon})
  \sqrt{\frac{1}{\frac{1}{M(p_i)} + \frac{1}{M(p_j)}}}
\]

\noindent
\replaced[id=Add]{computes the force arising from one contact point by mapping
the $\epsilon$-area onto}{We compute forces by modelling the $\epsilon$-area as} 
a simple spring with a perpendicular friction force \added[id=R3]{yet without
any empirical damping} \cite{Cundall}.
\added[id=R1]{
 The force depends on the contact normal $n(c)$ and applies to both colliding
 particles} \added[id=R3]{$p_i$ and $p_j$} 
\added[id=R1]{
 subject to a minus sign for one of
 them.
 It is calibrated by a spring constant $K_s = 1,000$}
\deleted[id=R3]{ and the
 mass of the two involved particles $M(p_i)$ and $M(p_j)$}.
\added[id=R1]{
 A particle's total force is}
\added[id=R3]{then} 
\added[id=R1]{ the sum over the
 individual contact forces.
}
\added[id=R3]{
Taking the centre of mass, the total mass and the mass distribution of a
particle into account, the total force and total torque determine its
acceleration and angular acceleration}  \cite{Baraff97anintroduction}.

This is a simplistic presentation---we ignore for example sophisticated interaction functions
which distinguish contact points from contact faces---\replaced[id=R1]{
 where the discontinuous force that arises at a contact is approximated by a
 ``fade-in'' force as two particles approach each other:
 Over an interval of size $\epsilon $, the force smoothly approximates the
 target force, as the penetration depth $|n(c)|$ increases.
 The simple physics allow}
{yet it allows} 
us to focus on the core challenge how to find contact points
efficiently.

\begin{implementation}
 As we work with triangulated objects and derive contact points from
 triangle-triangle comparisons, the algorithm identifies some contact points
 redundantly:
 If the closest distance between two objects is spanned by two object
 vertices $x_i$ and $x_j$, every triangle combination $t_i$ and $t_j$ where
 $t_i$ is adjacent to $x_i$ and $t_j$ to $x_j$ finds the same
 contact point and consequently adds it to $\mathbb{C}$ (Algorithm
 \ref{algorithm:algorithmic-framework:contacts}). The set notation for
 $\mathbb{C}$ highlights that we do not work with redundant contact points:
 In the implementation, we run over $\mathbb{C}$ and merge close-by contact
 points, i.e.~points closer than $\epsilon$, into one average point.
 This filter step prior to any use of the elements from $\mathbb{C}$ is
 necessary, as we work with floating point arithmetics and an 
 $\epsilon > 0$, i.e.~a contact point is not a unique point in
 space and might temporarily exist multiple times within $\mathbb{C}$ with
 slightly different coordinates.
\end{implementation}

\begin{figure}[h]
 \begin{center}
  \includegraphics[width=0.70\textwidth]{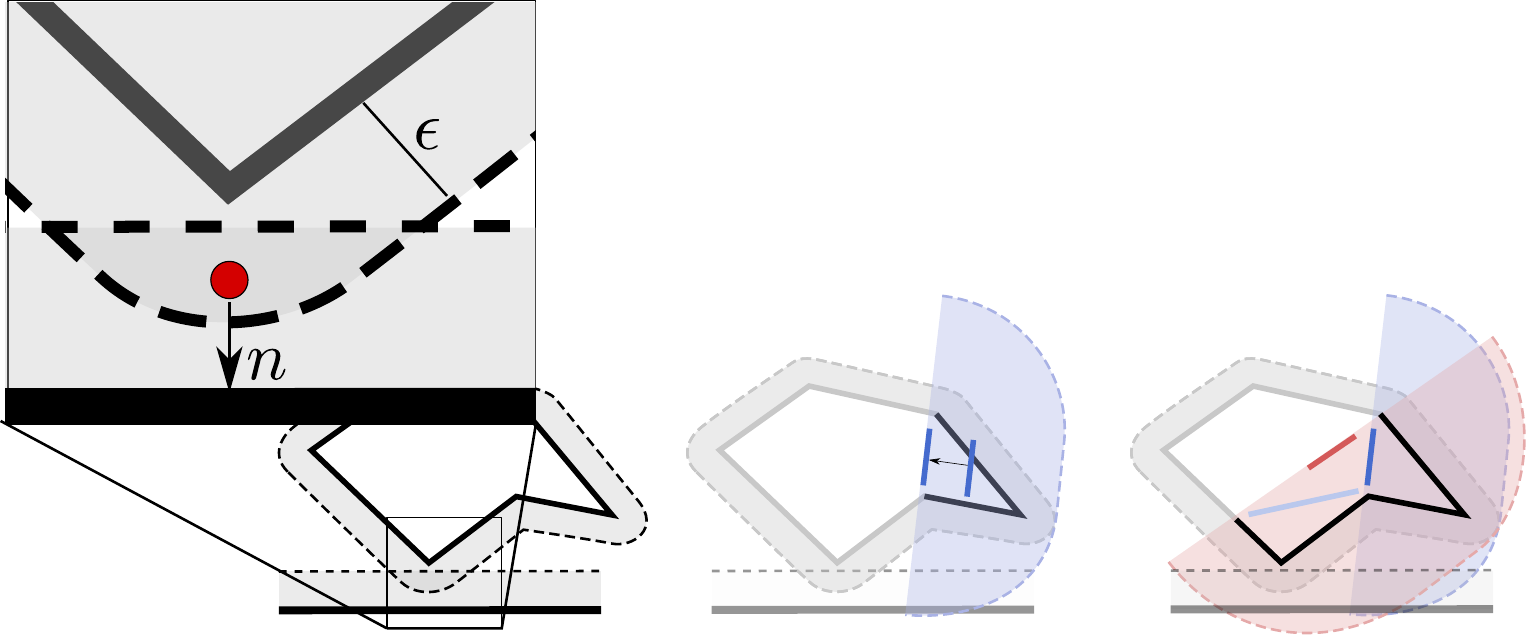}
 \end{center}
 \caption{
  A pair of objects (black, solid) with their $\epsilon $-environment (black,
  dotted) collide.
  In the present sketch, one object is a ``spherical'' particle spanned by six
  edges, while the other object is a plane at the bottom.
  Left:
  Two-dimensional sketch of the contact point concept.
  The zoom-in shows the contact point which is located in the middle of
  the overlap region.
  The contact normal is directed from the middle point towards the closest point
  on one of the objects involved.
  Middle:
  When an object hosts very extruded features, we slightly shrink the surrogate
  such that the surrogate without $\epsilon $ becomes 
  a closer fit around the "un-bumped" real geometry.
  We trade such a surrogate for a bigger $\epsilon $.
  Empirical evidence suggests that this yields slightly advantageous forces
  within a multiscale iterative solve.
  Right:
  The conservative property of the surrogate triangles states
  that all fine level geometry (including its epsilon boundary) must be encompassed by the surrogate's epsilon.
  This doesn't suggest that all surrogate children are included.
  \label{figure:multiresolution-model:contact-point}
 }
\end{figure}

\begin{algorithm}[htb]
 \caption{
  High-level pseudo code for an implicit Euler.
  \label{algorithm:algorithmic-framework:implicit-Euler}  
 }
 {\footnotesize
 \begin{algorithmic}[1]
  \While{$t<T_{\mbox{terminal}}$}
    \Comment{We simulate over a time span}
   \State $\forall p_i \in \mathbb{P}: \ 
     \mathbb{T}^{\text{guess}}(p_i) \gets \mathbb{T}(p_i),\ 
     v^{\text{guess}}(p_i) \gets v(p_i),\ 
     \omega^{\text{guess}}(p_i) \gets \omega(p_i) 
    $
   \While{$\mathbb{T}^{\text{guess}}, v^{\text{guess}}$ or $\omega^{\text{guess}}$
    change significantly for any $p$} 
   \State $\forall p_i \in \mathbb{P}: \ \mathbb{C}(p_i) \gets
   \emptyset$
     \Comment{Clear set of collisions for particle $p_i$}
    \For{$p_i,p_j \in \mathbb{P},\ p_i \not= p_j$}
     \Comment{Run over all particle pairs}
     \State $\mathbb{C}(p_i) \gets\mathbb{C}(p_i) \cup $
      \Call{findContacts}{$\mathbb{T}^{\text{guess}}(p_i),\mathbb{T}^{\text{guess}}(p_j)$}
     \State
      $\mathbb{C}(p_j) \gets\mathbb{C}(p_j) \cup $
      \Call{findContacts}{$\mathbb{T}^{\text{guess}}(p_i),\mathbb{T}^{\text{guess}}(p_j)$}
    \EndFor
    \For{$p_i \in \mathbb{P}$}
     \State $(dv,d\omega) \gets$ \Call{calcForces}{$\mathbb{C}(p_i)$}
     \State $(v^{\text{guess}},\omega^{\text{guess}})(p_i) \gets (v,\omega)(p_i) +
      \Delta t \cdot (dv,d\omega)$ 
     \State $\mathbb{T}^{\text{guess}}(p_i) \gets $
     \Call{update}{$\mathbb{T}(p_i),v^{\text{guess}}(p_i),\omega^{\text{guess}}(p_i),\Delta t$}
    \EndFor
   \EndWhile
   \State $\forall p_i \in \mathbb{P}: \ 
     \mathbb{T}(p_i) \gets \mathbb{T}^{\text{guess}}(p_i),\ 
     v(p_i) \gets v^{\text{guess}}(p_i),\ 
     \omega(p_i) \gets \omega^{\text{guess}}(p_i) 
    $
   \State $t \gets t + \Delta t$
  \EndWhile
 \end{algorithmic}
 }
\end{algorithm}

\paragraph{Implicit time stepping}

An implicit Euler for \replaced[id=Add]{a DEM code}{this challenge} has to solve
a non-trivial, non-linear equation system per time step.
Non-trivial means that the system's sparsity
pattern depends on the solution of the system itself.
It is determined by the contact point search: we obtain entries
in the interaction matrix where the corresponding normal $|n|\leq \epsilon$.
Non-linearity means that the quantities in
the interaction matrix (forces) depend on the (guess of the) geometries'
arrangement.

\begin{assumption}
 \label{assumption:contraction}
 Our implicit time stepping problem exhibits a contraction property, i.e.~Picard iterations 
 can solve the underlying non-linear equation system.
\end{assumption}

\noindent
As our non-linear system is ``well-behaved''---we employ reasonably small
$\Delta t$---we rely on a fixed-point
formulation of the implicit \replaced[id=Add]{time stepping}{timestepping} and
employ Picard iterations, i.e.~we approximate the velocity and 
\replaced[id=Add]{angular velocity}{rotation}
\replaced[id=Add]{$(dv,d\omega)(p)(t)$}{$(dv,dr)(p)(t)$} through a repeated
application of the contact detection plus its following force calculation (Algorithm \ref{algorithm:algorithmic-framework:implicit-Euler}).
We assume that we remain within the Picard iterations' region of convergence.

Picard avoids the assembly and inversion of an equation system.
However, many contact points enter the
algorithm temporarily throughout the iterations, which eventually are not identified as contacts.
This happens, for example, if we rotate the particles ``too far'' throughout the iterations.
Despite small $\Delta t$, 
we cannot provide an upper bound on the number of 
Picard iterations required or make
assumptions on the (temporarily) identified contact points, i.e.~the cost per iteration.

\paragraph{Notation and terminology}

Any particle $p \in \mathbb{P}$ has a (closed) volume $V(p)$
which is spanned by its triangular surface $\mathbb{T}(p)$.
Since we stick to explicit and implicit single-step, single shot methods, we
omit the parameterisation $(t)$ from hereon.
\[
  p_1 \cap p_2 = \emptyset \qquad \forall p_1, p_2 \in \mathbb{P} \ \text{with}
  \ p_1 \not= p_2,
  \label{equation:framework:non-overlapping-particles}
\]

\noindent
as we have rigid, non-penetrating objects.
Consequently, $ \mathbb{T}(p_1) \cap \mathbb{T}(p_2) = \emptyset $.
The particles' triangles do not intersect.
Yet, as each particle is surrounded by an $\epsilon $-layer, our particles'
triangles $t \in \mathbb{T}(p)$ unfold into a set of volumetric objects
$t^{\epsilon } \in \mathbb{T}^{\epsilon }(p) $, 
and our particles' volumes are extended $p \subset p^{\epsilon }$, too.
Overlaps between extended volumes do exist and yield contact
points:
\begin{equation}
  p_1^{\epsilon } \cap p_2^{\epsilon } \not= \emptyset \ \Rightarrow
  \mbox{contact point}
  \label{equation:framework:contact-point}
\end{equation}

\noindent
which is equivalent to
\[
  \exists t_1 \in \mathbb{T}(p_1),
          t_2 \in \mathbb{T}(p_2):
          \quad
  t_1^{\epsilon } \cap t_2^{\epsilon } \not= \emptyset \ \Rightarrow
  \mbox{contact point}.
\]

\noindent
A contact point $c$ between two
triangles $t_i$ and $t_j$ in 
(\ref{equation:framework:contact-point}) is located at 
\begin{equation}
  x(c) = \frac{1}{2}(t_1(a_1,b_1) + t_2(a_2,b_2)) 
  \ \text{with} \
  \argmin _{a_1,b_1,a_2,b_2 \in [0,1] } \frac{1}{2} \| t_1(a_1,b_1) -
  t_2(a_2,b_2) \|^2,
  \label{equation:framework:minimisation}
\end{equation}

\noindent
if $t_1$ and $t_2$
are surrounded by the same $\epsilon $ halo.
For different $\epsilon $s, the weights for $x(c)$ have to be
adapted accordingly.
$a_1,b_2$ are Barycentric coordinates within $t_1$, i.e.~$t_1(a_1,b_1)$ returns
a position within the triangle.
$a_2$ and $b_2$ are the counterparts for $t_2$.
If $x(c)$ is the position of the contact point $c$, the corresponding normal 
$ n(c) = t_1(a_1,b_1) - x(c)$ or $n(c) = t_2(a_2,b_2) - x(c)$.

  \section{Iterative contact detection via distance minimisation}
\label{section:iterative}

%
%
To find the closest point between two triangles is a classic
computational geometry problem \cite{EricsonRTCD}.  
We rely on three different algorithms to \replaced[id=Add]{solve}{identify} it:

\paragraph{Direct distance calculation (comparison-based)} 

A comparison-based identification of contact points consists of
two steps.
First, we compute
the distance from each vertex of the two triangles to the
closest point on the other triangle as well as the distance between each pair of edges
between the two triangles \cite{EricsonRTCD}.
This yields six point-to-triangle distance tests and
nine edge-to-edge distance tests.
In a second step, we select the minimum distance out of the 15 combinations.
This \replaced[id=Add]{brute force}{\emph{brute force}} calculation yields an
exact solution---agnostic of truncation errors---yet requires up to 30+14 comparisons (if statements) per triangle pair,
which we tune via masking, blending and early termination \cite{SSETriangle}.
\added[id=Add]{Dealing with the cases where intersections between triangles is possible
requires an additional six edge-to-plane distance tests,
where intersections outside the area of the triangle are discounted.}

\paragraph{Iterative distance calculation}

As an alternative to a comparison-based approach, we replace 
the geometric checks with a functional where 
we minimise the distance between the two planes spanned by the triangles but
add the admissibility conditions over the Barycentric coordinates as Lagrangian parameters
\cite{Krestenitis:15:FastDEM,Krestenitis:17:FastDEM}.
In line with (\ref{equation:framework:minimisation}), let $a,b \in [0,1]$  
describe any point on their respective triangle:
\begin{eqnarray}
 \argmin _{a_1,b_1,a_2,b_2} J(a_1,b_1,a_2,b_2) :=
 &&
 \argmin _{a_1,b_1,a_2,b_2} 
 \underbrace{
 \frac{1}{2} \| t_i(a_1,b_1) - t_j(a_2,b_2) \| ^2
 }_{=:\hat J(a_1,b_1,a_2,b_2)}
 + \alpha _{\text{iterative}} \Big(   
 \nonumber \\  
 &&\max (0,a_1-1) + \min(-a_1,0) + \max (0,b_1-1) + \nonumber\\
 &&\max (-b_1,0)  + \max (0,a_1+b_1-1) + \nonumber\\
 &&\max (0,a_2-1) + \min(-a_2,0) + \max (0,b_2-1) + \nonumber\\
 &&\max (-b_2,0)  + \max (0,a_2+b_2-1)
 \Big).
  \label{equation:iterative-contact-detection:energy-functional}
\end{eqnarray}

\noindent
This is a weak formulation of the challenge, since \replaced[id=Add]{any $\alpha
_{\text{iterative}}<\infty $}{it} allows the closest distance line between two
triangles to be rooted slightly outside the very triangles.

The minimisation problem can be solved via Newton iterations.
However, the arising \replaced[id=R3]{Hessian}{Hamiltonian} becomes difficult to invert or non-invertible for close-to-parallel or actually parallel
triangles.
We therefore regularise it by adding a diagonal matrix.
After that, we approximate the regularised
\replaced[id=R3]{Hessian}{Hamiltonian} and update the minimisation and constraints alternatingly:
\begin{eqnarray}
 (a_1,b_1,a_2,b_2)^{(n+0.5)} & = &
 (a_1,b_1,a_2,b_2)^{(n)} -
 \text{diag} ^{-1} \Big( \hat H(a_1,b_1,a_2,b_2)^{(n)} 
 \nonumber
 \\
 && + \alpha _{\text{regulariser}} \ id \Big) 
 \nabla_{a_1,b_1,a_2,b_2} \hat J(a_1,b_1,a_2,b_2)^{(n)}  
 \nonumber
 \\
 (a_1,b_1,a_2,b_2)^{(n+1)} & = &
 (a_1,b_1,a_2,b_2)^{(n+0.5)} -
 \text{diag} ^{-1} \Big( \tilde{H}(a_1,b_1,a_2,b_2)^{(n+0.5)}
 \nonumber
 \\
 && + \alpha _{\text{regulariser}} \ id \Big) 
 \nabla_{a_1,b_1,a_2,b_2} \tilde{J}(a_1,b_1,a_2,b_2)^{(n+0.5)}  
 \label{equation:iterative:modified-gradient-descent}
\end{eqnarray}

\noindent
$\alpha _{\text{regulariser}}>0$ is small, while 
$^{(n)}$ is the iteration index.
The $\hat J$ and its \replaced[id=R3]{Hessian}{Hamiltonian} correspond to the quadratic
term of the actual functional in (\ref{equation:iterative-contact-detection:energy-functional}) aka (\ref{equation:framework:minimisation}).
$\tilde{J}$ and its \replaced[id=R3]{Hessian}{Hamiltonian} cover the remaining penalty terms, i.e.~$J - \hat J$.
However, we omit the Dirac \deleted[id=R3]{jump} terms in there, i.e.~we explicitly drop terms that enter the formulae for $a_1=0,
a_1=1, b_1=0, \ldots$.
Our solver iterates back and forth between the $\hat J$-minimisation and
a fullfillment of the constraints. 
This modified Newton becomes a gradient descent, where \replaced[id=Add]{the}{an
approximate} step size is adaptively chosen by analysing an approximation of the
inverse to the \replaced[id=R3]{Hessian}{Hamiltonian}.

\paragraph{Hybrid distance calculation}

If two subsequent iterates $|J(a_1,b_1,a_2,b_2)^{(n+1)} -
J(a_1,b_1,a_2,b_2)^{(n)}| \leq C\epsilon$\added[id=Add]{,} we have found an
actual minimum over functional (\ref{equation:iterative-contact-detection:energy-functional}) and can terminate the minimisation.
In this case, we assume that $a_1,b_1,a_2,b_2$ identify the minimum distance. 
Without further analysis, it is impossible to make a statement on the upper bound on
$n$.
Our hybrid algorithm therefore eliminates the termination criterion and imposes $n \leq N_{\texttt{iterative}}$.
Consequently, it labels a distance calculation as
invalid if $|J(a_1,b_1,a_2,b_2)^{(N_{\texttt{iterative}})} - J(a_1,b_1,a_2,b_2)^{(N_{\texttt{iterative}}-1)}| > C\epsilon $.
The iterative code's result realises a three-valued logic:
``found a contact point'', ``there is no contact'', or ``has not
\replaced[id=Add]{terminated}{termianted}'' (Algorithm
\ref{algorithm:triangle-soup}).

Our hybrid algorithm \deleted[id=Add]{relies upon this iterative building block.
It first} invokes the modified iterative algorithm.
If the result equals ``not terminated'' ($\odot $), the hybrid algorithm falls back to the 
comparison-based distance calculation.
It is thus not really a third algorithm to find a contact point, but a 
combination of the iterative scheme with the comparison-based approach
serving as a posteriori limiter.

\begin{algorithm}[htb]
 \caption{
  A \added[id=Add]{hybrid, batched} reformulation of the iterative distance
  calculation.
  The
  iterative \replaced[id=Add]{sweep}{blueprint (top)} over a whole batch of
  triangles \replaced[id=Add]{is stripped of a dynamic stopping criterion (top)
  and therefore yields three types of results per triangle pair:}{and early
  convergence stopping criteria is removed to rewrite it into a version that vectorises more aggressively (bottom).}
  $x$ \deleted[id=Add]{is a result value that} holds a coordinate if the
  Barycentric coordinates yield a contact point, or $\bot$ if they yield no
  contact point, or $\odot $ if the triangle combination has to be postprocessed with the comparison-based algorithm\added[id=Add]{ (bottom)}.
  \label{algorithm:triangle-soup}  
 }
 {\footnotesize
 \begin{algorithmic}[1]
  \For{$t_i \in \mathbb{T}(p_i), \ t_j \in \mathbb{T}(p_j), \ t_i \not= t_j$}
    \State $(a_1,b_1,a_2,b_2,n)(t_i,t_j) \gets 0$
     \Comment Variables per triangle pair allow us to vectorise 
    \State $J_\text{old}(t_i,t_j)  \gets \infty$
     \Comment over triangles in $\mathbb{T}$ 
    \State $J(t_i,t_j)=J(a_1(t_i,t_j),b_1(t_i,t_j),a_2(t_i,t_j),b_2(t_i,t_j))$
     \Comment Functional from
     (\ref{equation:iterative-contact-detection:energy-functional}) 
    \While{$n\leq N_{\text{iterative}}$}
    \Comment Fixed iteration count instead of $|J-J_\text{old}| > C\epsilon $
      \State $J_\text{old}(t_i,t_j) \gets J(t_i,t_j)$
      \State update $a_1(t),b_1(t),a_2(t),b_2(t)$
       \Comment Modified gradient descent from
       (\ref{equation:iterative:modified-gradient-descent}) 
      \State $J(t_i,t_j)=J(a_1(t_i,t_j),b_1(t_i,t_j),a_2(t_i,t_j),b_2(t_i,t_j))$
      \State $n \gets n+1$
    \EndWhile
   \State $x(t_i,t_j) \gets
    \left\{
     \begin{array}{cc}
      \frac{1}{2} ( t_i(a_1,b_1) - t_j(a_2,b_2) )(t_i,t_j)
        & \text{if} \ |J(t_i,t_j)-J_\text{old}(t_i,t_j)| \leq C\epsilon
      \wedge \hat J(t_i,t_j) \leq 2\epsilon ^2 \\
      \bot & \text{if} \ |J(t_i,t_j)-J_\text{old}(t_i,t_j)| \leq C\epsilon \wedge \hat J(t_i,t_j)
      > 2\epsilon ^2 \\
      \odot & \text{otherwise}
     \end{array} 
    \right.$ 
  \EndFor
  \\\hrulefill
  \For{$t_i \in \mathbb{T}(p_i), \ t_j \in \mathbb{T}(p_j), \ t_i \not= t_j$}
    \If{$x(t_i,t_j)=\odot$}
      \State $\hat n(t_i,t_j) \gets $ shortest distance vector between $t_i$ and
      $t_j$ 
      \Comment Use comparison-based algorithm
      \State $\hat x(t_i,t_j) \gets $ is central point on line identified
      by $\hat n(t_i,t_j)$ 
      \State $x(t_i,t_j) \gets
    \left\{
     \begin{array}{cc}
      \hat x(t_i,t_j) & \text{if} \ |\hat n(t_i,t_j)| \leq 2\epsilon \\
      \bot & \text{otherwise}
     \end{array} 
    \right.$ 
      \Comment Eliminate $\odot $ entries in result
    \EndIf
  \EndFor
 \end{algorithmic}
 }
\end{algorithm}

\begin{implementation}
Triangle meshes are typically held as graphs over vertex sets. 
We flatten this graph prior to the contact detection: 
A sequence of $|\mathbb{T}|$ triangles is converted into a sequence of $3
\cdot 3 \cdot |\mathbb{T}|$ floating point values, i.e.~each triangle is
represented by the coordinates (three components) of its three vertices.
Such a flat data structure can be generated once prior to the first contact
detection.
All topology is lost in this representation and vertex data is replicated---it
is a triangle soup---but the data is well-suited to be streamed without
indirect memory lookups.
Per update of the position and rotation, every coordinate is
subject to an affine transformation.

\deleted[id=R2]{
Let $\mathbb{T}_i$ and $\mathbb{T}_j$ be two triangle sets that have to be compared.
To improve the vector efficiency, we
tile one of the flattened input triangle sets, which are stored as struct of
arrays, into batches of size $N_{\text{batch}}$ (Algorithm
\ref{algorithm:triangle-soup}), such that the memory access patterns fit to the vector register
width.
We next exploit the fact that the iteration count is bounded or 
hardcoded---i.e.~we omit an early termination criterion---and consequently
permute the loop over the comparision triangles and the modified Newton
iterations.
As we compute the result for $N_{\text{batch}}$ triangle combinations in one embarassingly parallel 
rush, we can vectorise aggressively.
}
\end{implementation}


  \section{A multiresolution model}
\label{section:multiresolution-model}
The cost to compare two particles $p_i$ and $p_j$ is
determined by their triangle counts $| \mathbb{T}(p_j) |$ and $| \mathbb{T}(p_j) |$.
To reduce this cardinality,
we construct geometric cascades of triangle models
per particle---the \emph{surrogate models} (Fig.~\ref{figure:multiresolution:weakly-connected})---and plug
representations from this cascade into Algorithm \ref{algorithm:algorithmic-framework:contacts}.

\begin{definition}
 A surrogate model $\mathbb{T}_{k}(p),\ k\geq 1$ is a triangle-based geometric
 approximation of a particle described by
 $\mathbb{T}(p)$.
 A sequence of surrogate models
 $\{\mathbb{T}_{1}(p),\mathbb{T}_{2}(p),\mathbb{T}_{3}(p),\ldots\}$
 for $p$ with its volumetric extensions $\mathbb{T}_{k}^{\epsilon}(p)$ hosts efficient (Definition \ref{definition:multiresolution:efficient}),
 \emph{conservative}
 (Definition \ref{definition:multiresolution:conservative}), and \emph{weakly
 connected}
 (Definition \ref{definition:multiresolution:weakly-connected})
 abstractions of $\mathbb{T}(p)$.
\end{definition}

\noindent
Surrogate models are different representations of a particle.
 The term ``cascade'' highlights that each particle is assigned a whole sequence
 of \deleted[id=R3]{accurate} representations.
 These representations 
can step in for our real geometry and are a special class of bounding volume techniques \cite{EricsonRTCD}.
To simplify our notation, $\mathbb{T}_{0}(p) := \mathbb{T}(p)$\added[id=R1]{,
i.e.~the $k=0$ surrogate model is the geometric object itself.
The bigger $k$, the coarser, i.e.~more abstract the surrogate.
}.
\deleted[id=R1]{We}{Furthermore, we} emphasise that the $\epsilon $ is a generic
symbol, i.e.~each surrogate model hosts its own, bespoke $\epsilon$-environment.

\begin{definition}
 \label{definition:multiresolution:efficient}
 A surrogate model
 $\mathbb{T}^\epsilon_{k_i}(p_i)$ is
 \emph{efficient} if, for any other model 
 $\mathbb{T}^\epsilon_{k_j}(p_j)$,
 finding all contact points between 
 $\mathbb{T}^\epsilon_{k_i}(p_i)$
 and 
 $\mathbb{T}^\epsilon_{k_j}(p_j)$
 is cheaper than finding all contact points between 
 $\mathbb{T}^\epsilon_{\hat{k}_i}(p_i)$
 and 
 $\mathbb{T}^\epsilon_{k_j}(p_j)$
 for all $0 \leq \hat{k}_i<k_i$.
\end{definition}

\noindent
We assume an almost homogeneous cost per triangle-to-triangle
comparison---an assumption that is shaky for the hybrid
\replaced[id=Add]{comparison}{comparision} and subject to vectorisation
efficiency and memory management effects.
Hence, the triangle count of surrogate models decreases with increasing $k$,
i.e.~$|\mathbb{T}_{k}| \ll |\mathbb{T}_{k+1}|$.

\begin{definition}
 \label{definition:multiresolution:conservative}
 A surrogate model 
 $\mathbb{T}_{k_i}(p_i)$ inducing $\mathbb{T}_{k_i}^{\epsilon }(p_i)$
 is \emph{conservative} if 
 \[
  \forall p_i, p_j, k_i, k_j: \
  \mathbb{T}_{k_i}^{\epsilon }(p_i) \cap \mathbb{T}_{k_j}^{\epsilon
  }(p_j) = \emptyset \ \Rightarrow \
  \mathbb{T}^{\epsilon }(p_i) \cap \mathbb{T}^{\epsilon }(p_j)
  = \emptyset.
 \]
\end{definition}

\noindent
Conservative means that any two surrogates of two
particles are in contact (overlap) if the two particles are in contact.
Yet, \replaced[id=Add]{this does not have to hold the other way round: If}{if}
their surrogates are in contact, there might still be gaps between the particles, i.e.~there might be no contact point.

\begin{corollary}
 Let a surrogate model hierarchy $\mathbb{T}_{k}(p), \ k \geq 1$
 be \emph{monotonous} if 
 \[
  \forall 1 \leq \hat{k} < k: 
  \quad
  \forall 
  t_{\hat{k}}^{\epsilon } 
  \in
  \mathbb{T}_{\hat{k}}^{\epsilon }(p):
  \quad
  t_{\hat{k}}^{\epsilon _{\hat{k}}} 
  \subseteq
  \bigcup _{t \in \mathbb{T}_{k}^{\epsilon }(p)} t.
 \]
  Our surrogate hierarchies do not have to be monotonous.
\end{corollary}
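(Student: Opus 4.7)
The plan is to prove the corollary by exhibiting a counterexample rather than a positive derivation, since the statement asserts the absence of an implication. I would first isolate the strongest consequence of the required properties: conservativity (Definition \ref{definition:multiresolution:conservative}) forces each $\mathbb{T}_k^\epsilon(p)$ to contain the finest envelope $\mathbb{T}^\epsilon(p)$—apply the definition with $k_j=0$ and imagine probing any point outside $\mathbb{T}_{k_i}^\epsilon(p_i)$ with a small auxiliary particle. Crucially, this one-sided containment says nothing about nesting between two different coarse surrogates $\mathbb{T}_{\hat k}^\epsilon(p)$ and $\mathbb{T}_k^\epsilon(p)$. Monotonicity asks for precisely such inter-level nesting, so it is at least plausible that it can fail without violating conservativity, efficiency, or weak connectivity.

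I would then construct a concrete hierarchy along the lines of the bumped-sphere cascade of Figure~\ref{figure:multiresolution:weakly-connected} combined with the shrink-and-inflate trade-off from the middle panel of Figure~\ref{figure:multiresolution-model:contact-point}. At level $\hat k = 1$ I would take a moderately tessellated surrogate that tracks the bumps closely with small $\epsilon_1$. At level $k = 2$ I would take an aggressively shrunken surrogate whose triangulation sits inside the bumps and compensate with a large $\epsilon_2$ calibrated so that $\mathbb{T}_2^{\epsilon_2}(p)$ just covers $\mathbb{T}^\epsilon(p)$. Both levels satisfy conservativity by construction, $|\mathbb{T}_2| < |\mathbb{T}_1|$ gives efficiency in the sense of Definition~\ref{definition:multiresolution:efficient}, and weak connectivity is orthogonal to the radii and can be imposed independently.

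To finish, I would exhibit a witness of non-monotonicity: on an outward-pointing facet of the level-1 surrogate sitting over a flat, bump-free region of the real geometry, the inflated triangle $t_1^{\epsilon_1}$ protrudes in a direction where $\mathbb{T}_2$ has deliberately been shrunk and where $\epsilon_2$ was only calibrated against the bump amplitude. By a simple radial comparison, one finds a point on the outer boundary of $t_1^{\epsilon_1}$ that lies beyond the reach of $\bigcup_{t \in \mathbb{T}_2^{\epsilon}(p)} t$, so the containment required by the monotonicity definition is violated.

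The main obstacle is calibrating the three radii consistently: $\epsilon_2$ must be large enough globally to swallow all bumps (needed for conservativity) yet small enough in non-bump directions to leave a gap with $\mathbb{T}_1^{\epsilon_1}$. Because bumps are spatially localised while the level-1 faceting offset is uniform, these two constraints decouple, and a quantitative separation between the bump amplitude and the level-1 faceting offset makes the counterexample rigorous without any further analysis.
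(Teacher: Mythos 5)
The paper does not actually supply a formal proof for this corollary. What follows the statement in the paper is a design rationale rather than a derivation: the authors note that a monotonous cascade would force the $\epsilon$-environments to grow as one coarsens, that monotonicity would hand them conservativeness ``for free,'' but that abandoning it lets them work with tighter $\epsilon$ per level and yields better fitting surrogates in practice. In other words the ``corollary'' is treated as a definitional remark about what the surrogate axioms (Definitions~\ref{definition:multiresolution:efficient}, \ref{definition:multiresolution:conservative}, \ref{definition:multiresolution:weakly-connected}) do \emph{not} impose, and no counterexample is exhibited. Your proposal, by contrast, reads the sentence as a genuine mathematical existence claim and sets out to prove it by constructing a non-monotonous hierarchy. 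That is a different and strictly more rigorous route; it is correct in spirit.

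Two remarks on the substance of your argument. First, your opening observation is exactly the right leverage point, and it can be pushed further than you do: \emph{all three} defining properties only relate each $\mathbb{T}_k$ to $\mathbb{T}_0$ (conservativeness and efficiency are quantified against a generic other particle and against $\mathbb{T}_0$; weak connectivity is an intrinsic property of a single level). None of them constrains $\mathbb{T}_{\hat k}^{\epsilon}$ against $\mathbb{T}_{k}^{\epsilon}$ for $1 \leq \hat k < k$. Once you say this plainly, the claim is essentially immediate: since the axioms are silent about inter-level relationships, one can always perturb a single intermediate level---translate $\mathbb{T}_1$ by a small vector and enlarge $\epsilon_1$ by the same amount to restore the containment $\mathbb{T}^{\epsilon}\subseteq\mathbb{T}_1^{\epsilon_1}$---and destroy containment in $\mathbb{T}_2^{\epsilon_2}$ whenever $\epsilon_2$ was chosen tight, without touching any axiom. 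Second, your bumped-sphere construction is plausible but, as you concede, not yet rigorous: the claim that $\epsilon_2$ ``calibrated only against the bump amplitude'' still satisfies conservativeness while leaving a gap against $t_1^{\epsilon_1}$ needs an explicit inequality, and the shrink-and-inflate trade-off (Figure~\ref{figure:multiresolution-model:contact-point}, middle) is motivational rather than quantitative in the paper. The translation-based perturbation sidesteps all of this. Either way, your approach buys a genuine proof where the paper offers only motivation; the paper's informal treatment buys the reader insight into \emph{why} dropping monotonicity is desirable (tighter $\epsilon$) and what it costs (no induction-based correctness argument, cf.~the closing sentence of that passage and the case analysis in Lemma~\ref{lemma:multiresolution:explicit}).
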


\noindent
A monotonous cascade of triangles plus $\epsilon $-environments would
grow in space as
we move up the hierarchy of models.
Therefore, we do not impose it, even though monotonicity would imply
conservativeness ``for free''.
Empirical evidence suggests that abandoning monotonicity 
allows us to work with significantly tighter $\epsilon $-choices per level.
Yet, it also implies that we generally cannot show algorithm correctness through plain induction.

Classic level-of-detail algorithms require a coarsened representation of a
triangulated model to preserve certain properties such as connected triangle
surfaces or the preservation of certain features such as sharp edges. Our
surrogate models however are to be used as temporary replacements within our
calculations. 
We thus can ask for weak representations of the geometries:

\begin{definition}
 \label{definition:multiresolution:weakly-connected}
  The triangles of a particle have to span a connected surface.
  For surrogate models, there is no such constraint.
  Therefore, the surrogate models
  can be \emph{weakly connected}:
  Their triangles can be disjoint with gaps in-between or they can intersect each other.
  A surrogate's $\epsilon $-environment is connected however, and
  it covers (overlaps) all connectivity of the original
  model.
\end{definition}

\noindent

\noindent
The connectivity addendum in Definition
\ref{definition:multiresolution:weakly-connected} motivates the term weakly
connected.
It clarifies that we---despite the disjoint, anarchic configuration of a
surrogate triangle set (Figure~\ref{figure:multiresolution:weakly-connected})---
 cannot miss out on some geometry extrema such as
sharp edges due to tests with surrogates:
If there is no contact between two surrogates, there is also no contact between
their real discretisations.

Definition \ref{definition:multiresolution:conservative} implies that we
can use surrogate models as guards and run through them for coarse to fine:
If there are no contact points between two surrogate models,
there can be no contact points for the more detailed models.
We can stop searching for contact points immediately.
It does however not hold the other way round.
Definition \ref{definition:multiresolution:efficient} implies immediately that
the number of triangles that we examine in such an iterative approach is
monotonously growing.
Definition \ref{definition:multiresolution:weakly-connected} gives us the
freedom to construct such triangle hierarchies, as it strips us from many
geometric constraints.


\begin{definition}
 \label{definition:multiresolution-model:surrogate-tree}
  \added[id=R3]{
   Let}
    $\mathcal{T}$ \added[id=R3]{be a directed acycling graph where each node
    represents a set of triangles.
   The level} $\ell $ \added[id=R3]{of a node is its distance (edge count) from
   the root node in} $\mathcal{T}$.
   \added[id=R3]{The resulting graph is a} \emph{surrogate tree}
   \added[id=R3]{if and only if
  }
\begin{enumerate}
   \item \added[id=R3]{
 the root node hosts the coarsest surrogate model}
 $\mathbb{T}_{k_{\text{max}}}$;
   \item \added[id=R3]{the union over all leaf sets yields the particle triangulation
   } $\mathbb{T}_0$;
    \item \replaced[id=R3]{
    any triangle is a surrogate for the union over its
 children's triangle sets.
    }{
    The union over all sets with the same level yields the surrogate model
    }
\end{enumerate}    
\end{definition}

 \noindent
 \added[id=R3]{
  With Definition}
 \ref{definition:multiresolution-model:surrogate-tree}, \added[id=R3]{the union
 over all sets with the same level yields the surrogate model} $\mathbb{T}_{k_{\text{max}}-\ell}$.
 \deleted[id=R3]{
 To make the last property in Definition
 \ref{definition:multiresolution-model:surrogate-tree} hold, a triangle within a
 node of the tree which has children has to be surrogate for the union over its
 children's triangle sets.
 }
 We use $N_{\text{surrogate}}$ to denote the number of children of a surrogate
 triangle.
 $N_{\text{surrogate}}$\added[id=R1] {
  does not have to be uniform over the tree, i.e.~is a
 generic symbol.
 If we have a surrogate model, take the nodes within the tree which hold its
 triangles, and replace the triangles with those triangles stored within
 children nodes, we obtain the next finer surrogate.
 A surrogate tree is a generalisation of the concept of a cascade of surrogates:
 The tree formalism allows us also to construct different,
 hybrid-level surrogate models if we only replace some triangles of a model with their
 children.
}

\begin{implementation}
 There are multiple ways to construct surrogate trees.
 We construct our trees through a recursive algorithm.
 It starts from the triangle set $\mathbb{T} = \mathbb{T}_0$ of the particle and
 splits this set into $N_{\text{surrogate}}$ subsets of roughly
 the same size hosting close-by triangles. 
 Per subset, we construct one surrogate and thus obtain a tree of depth one
 where the root node hosts $|N_{\text{surrogate}}|$ triangles.
 As long as a node within the tree hosts more triangles than a prescribed
 threshold, we apply the splitting recursively and thus disentangle the triangle
 sets further and further:
 Existing tree levels are pushed down or sieved through the tree hierarchy (Appendix
 \ref{appendix:surrogate-trees}).
 A follow-up bottom-up traversal of the tree constructs well-suited surrogate
 triangles with appropriate $\epsilon $ choices.
 As we only require weakly connected surrogates, the steps within this bottom-up
 traversal are independent of operations on sibling nodes within the tree and
 can be mapped onto local minimisation problems (Appendix
 \ref{appendix:surrogate-triangles}).
\end{implementation}

  \section{Multiresolution contact detection}
\label{section:multiresolution-contact-detection}

With our surrogate tree definition, we are in the position to propose a
multiscale algorithm for \replaced[id=R3]{an}{the} explicit Euler which utilises the tree as early
stopping criterion\replaced[id=Add]{ for the search for contacts. The
observations for the explicit time stepping then allow us to}{, and we can,
starting from the explicit Euler observation,}  derive two implicit
time stepping algorithms that exploit the multiscale nature of the
geometry\added[id=Add]{ to terminate searches early and to supplement the
underlying iterative algorithm with educated guesses}:

\subsection{Explicit Euler}
\label{subsection:multiscale:explicit-Euler}
%
%
Our explicit Euler exploits the multiscale hierarchy by looping over the
 the resolutions held in $\mathcal{T}$
 top down.
The tree is unfolded depth-first, and we
implement an early stopping criterion: 
If a surrogate triangle and a triangle set from another particle do not collide,
\deleted[id=Add]{then} the children of the surrogate triangle in our triangle
tree cannot collide either. 
The depth-first traversal along this branch of the tree thus can terminate
early.
The surrogate model \deleted[id=Add]{used ``refines''
(}unfolds\deleted[id=Add]{)} adaptively.

\begin{algorithm}[htb]
  \caption{
    Multiresolution contact detection within explicit time stepping. 
    It compares two particles $p_i$ and $p_j$ given by their surrogate trees
    $\mathcal{T}(p_i)$ and $\mathcal{T}(p_j)$ with each other.
    \label{algorithm:multiresolution-contact-detection:explicit-Euler}
    }
 {\footnotesize
 \begin{algorithmic}[1]
  \State $\mathbb{A}_i \gets root(\mathcal{T}(p_i))$,
         $\mathbb{A}_j \gets root(\mathcal{T}(p_j))$
   \Comment Set of active triangles to check
  \While{$ \mathbb{A}_i \not= \emptyset \vee \mathbb{A}_j \not= \emptyset  $}
   \State $\mathbb{A}_{i,\text{new}} \gets \emptyset$,
          $\mathbb{A}_{j,\text{new}} \gets \emptyset$
   \For{$t_i \in \mathbb{A}_i, t_j \in \mathbb{A}_j$}
    \State $c \gets \textsc{contact\_iterative}(t_i,t_j)$
     \Comment{Use context-specific $\epsilon $ depending on $t_i,t_j$}
    \If{$c = \odot \wedge t_i \in \mathbb{T}^{\epsilon}_0(p_i) \wedge t_j \in
     \mathbb{T}^{\epsilon}_0(p_j)$}
      \Comment{Not converged on non-surrogate triangles}
     \State $c \gets \textsc{contact\_comparison}(t_i,t_j)$
      \Comment{Use comparison-based algorithm this time}
    \EndIf
    \If{$c\not = \bot$}
      \If{$t_i \in \mathbb{T}^{\epsilon}_0(p_i) \wedge t_j \in
       \mathbb{T}^{\epsilon}_0(p_j)$}
       \Comment{No surrogate triangles, }
       \State $\mathbb{C}(p_i) \gets \mathbb{C}(p_i) \cup \{c\}$,
              $\mathbb{C}(p_j) \gets \mathbb{C}(p_j) \cup \{c\}$ 
       \Comment{i.e.~proper contact point}
      \Else
        \Comment Unfold
        \If{$t_i \in \mathbb{T}^{\epsilon}_0(p_i)$}
         \State $\mathbb{A}_{i,\text{new}} \gets \mathbb{A}_{i,\text{new}} \cup
          \{ t_i \}$
        \Else
         \State $\mathbb{A}_{i,\text{new}} \gets \mathbb{A}_{i,\text{new}} \cup
          \{\hat t: \hat t \sqsubseteq _{\text{child}} t_i\}$
        \EndIf 
        \If{$t_j \in \mathbb{T}^{\epsilon}_0(p_j)$}
         \State $\mathbb{A}_{j,\text{new}} \gets \mathbb{A}_{j,\text{new}} \cup
          \{ t_j \}$
        \Else
         \State $\mathbb{A}_{j,\text{new}} \gets \mathbb{A}_{j,\text{new}} \cup
          \{\hat t: \hat t \sqsubseteq _{\text{child}} t_j\}$
        \EndIf 
      \EndIf 
    \EndIf 
   \EndFor
   \State $\mathbb{A}_{i} \gets \mathbb{A}_{i,\text{new}}$,
          $\mathbb{A}_{j} \gets \mathbb{A}_{j,\text{new}}$
  \EndWhile
 \end{algorithmic}
 }
\end{algorithm}

%
%
The concept defines a marker
  (Algorithm~\ref{algorithm:multiresolution-contact-detection:explicit-Euler}):
Let $\mathbb{A} $ identify a set of active
nodes from $\mathcal{T}$. 
The union of all sets \replaced[id=Add]{labelled}{identified} by $\mathbb{A} $
yields all triangles from a particle that participate in collision checks.
At the begin of a particle-to-particle comparison, only the particles' roots are
active.
From there, we work our way down into finer and finer geometric representations
as long as the surrogate models suggest that there might be some contacts, until
we eventually identify real contact points stemming from the finest mesh.

%
%

\begin{lemma}
 The hierarchical algorithm yields exactly the same outcome as our baseline code
 over sets $\mathbb{T}^{\epsilon}(p_i)$ and $\mathbb{T}^{\epsilon}(p_j)$.
 Algorithm~\ref{algorithm:multiresolution-contact-detection:explicit-Euler} is correct.
 \label{lemma:multiresolution:explicit}
\end{lemma}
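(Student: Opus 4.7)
The plan is to prove Lemma~\ref{lemma:multiresolution:explicit} by establishing a double inclusion between the contact sets $\mathbb{C}(p_i),\mathbb{C}(p_j)$ produced by Algorithm~\ref{algorithm:multiresolution-contact-detection:explicit-Euler} and those produced by the baseline brute-force traversal over $\mathbb{T}^\epsilon_0(p_i)\times\mathbb{T}^\epsilon_0(p_j)$. Throughout, I would rely on the fact that the leaf-level contact routine invoked by the algorithm is, by construction, the same hybrid procedure that the baseline calls: the iterative scheme followed by a comparison-based fallback whenever the solver returns $\odot$.

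For soundness ($\subseteq$), inspecting the algorithm shows that points are only inserted into $\mathbb{C}$ inside the guard $t_i\in\mathbb{T}^\epsilon_0(p_i)\wedge t_j\in\mathbb{T}^\epsilon_0(p_j)$, so surrogate comparisons never contribute contacts directly. Because $\odot$ at the leaf level is resolved with the same comparison-based post-processor as in the baseline, every inserted point coincides with one the baseline would have produced for the same pair $(t_i,t_j)$.

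For completeness ($\supseteq$), I would fix an arbitrary baseline contact point $c$ arising from a leaf pair $(t_i^0,t_j^0)$ and follow the unique ancestor chains of $t_i^0$ and $t_j^0$ up to the roots of $\mathcal{T}(p_i)$ and $\mathcal{T}(p_j)$. The induction hypothesis is that at some iteration of the outer \texttt{while} loop each ancestor pair $(\tilde t_i,\tilde t_j)$ along these chains sits in the active product $\mathbb{A}_i\times\mathbb{A}_j$. The base case is the root pair, installed at initialisation. For the inductive step, invoking Definition~\ref{definition:multiresolution:conservative} on $\tilde t_i$ viewed as a surrogate for the union of its descendant leaves (and symmetrically on $\tilde t_j$) shows that the existence of the real leaf-level contact forces $\tilde t_i^\epsilon\cap\tilde t_j^\epsilon\neq\emptyset$; the iterative routine therefore cannot return $\bot$ and must return either a valid point or $\odot$. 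Since $c\neq\bot$ and at least one of $\tilde t_i,\tilde t_j$ is an interior node of its tree, control enters the unfolding branch and the next ancestor pair on the chains is pushed into $\mathbb{A}_{i,\mathrm{new}}\times\mathbb{A}_{j,\mathrm{new}}$. Repeating the argument descends all the way to $(t_i^0,t_j^0)$, at which point the leaf guard fires and $c$ is inserted.

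The main obstacle is to translate Definition~\ref{definition:multiresolution:conservative}, which is phrased at the level of whole surrogate models, into the triangle-level implication that drives the induction. I would resolve this via the surrogate-tree axiom of Definition~\ref{definition:multiresolution-model:surrogate-tree} that every internal triangle is itself a surrogate for the union over its children's triangle sets, so the model-level conservativeness specialises to each ancestor–descendant-union pair along the chosen chains. A secondary subtlety, worth spelling out but not a genuine obstacle, is the three-valued iterative logic: I would verify that the $\odot$ outcome never triggers pruning at internal nodes, so the induction cannot be short-circuited by an unconverged surrogate check, while finite tree depth together with the monotone refinement of the active sets guarantees termination.
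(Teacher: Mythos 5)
Your proposal is correct and follows essentially the same argument as the paper's proof: soundness comes from the leaf-only insertion guard, and completeness comes from triangle-level conservativeness (via Definition~\ref{definition:multiresolution-model:surrogate-tree}) forcing the active sets to descend along the ancestor chains of any leaf pair that yields a real contact. Your double-inclusion framing and explicit treatment of the $\odot$ outcome at interior nodes are somewhat more careful than the paper's presentation, which defers the latter point to Corollary~\ref{corollary:multiresolution:iterative}, but the underlying ingredients and the induction over the tree are the same.
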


\begin{proof}
The argument relies on three properties:
\begin{enumerate}
  \item If a contact point is identified for a surrogate triangle, it is not
  added to the set of contact points. Therefore, a given active set never
  identifies artificial/too many contact points.
  \item A contact point is added if it stems from the comparison of
  two triangles from the fine grid tessellations which are in the active sets.
  \item Let two triangles $t_i^{\epsilon }$ and
  $t_j^{\epsilon }$ yield a contact point.
  \replaced[id=Add]{
   As surrogates are conservative,
  }{
   As of Algorithm \ref{algorithm:construct_surrogate_tree} (or
   the more generic definition of conservatism for surrogates),
  }
  they belong to
  nodes (triangle sets) $\mathbb{T}(p_i)$ and
  $\mathbb{T}(p_j)$ with 
  $\mathbb{T}(p_i) \sqsubseteq _{\text{child}} \hat t_i$ and $\mathbb{T}(p_j)
  \sqsubseteq _{\text{child}} \hat t_j$
  in $\mathcal{T}(p_i)$ or $\mathcal{T}(p_j)$, respectively. These
  surrogates fulfil
  \[
   t_i^{\epsilon } \cap t_j^{\epsilon } \not= \emptyset \Rightarrow
   \hat t_i^{\epsilon } \cap \hat t_j^{\epsilon } \not= \emptyset. 
  \]
  and therefore are replaced in the active set
  by their children \replaced[id=Add]{in}{by the} Algorithm
  \ref{algorithm:algorithmic-framework:explicit-Euler} before the respective
  algorithm terminates.
\end{enumerate}
\noindent
The correctness of the algorithm follows from bottom-up induction
over the 
levels of $\mathcal{T}$:
The property holds directly for the finest surrogate levels $\mathbb{T}_1$ of
the tree.
Any violation thus has to arise from $\mathbb{T}_k, k \geq 2$ in $p_i$ or $p_j$.
We apply the arguments recursively.
\end{proof}

\noindent
We have two triangle-to-triangle comparison strategies on the table (hybrid and
comparison-based) which are robust, i.e.~always yield the correct solution.
If we employ the comparison-based approach only, the $c = \odot$ condition never
holds and the corresponding branch is never executed.
Otherwise, our algorithmic blueprint implements the hybrid's fall-back as it
automatically re-evaluates the contact search for $c = \odot$.
However, it is indeed sufficient to rerun this a posteriori contact search if
and only if both triangles stem from the finest triangle discretisation:

\begin{corollary}
 \label{corollary:multiresolution:iterative}
 On the surrogate levels within the tree, it is sufficient to use
 the (efficient) iterative collision detection algorithm (Algorithm
 \ref{algorithm:triangle-soup}, bottom), without falling back to the
 comparison-based variant.
\end{corollary}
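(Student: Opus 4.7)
The plan is to isolate the role played by surrogate-level triangle comparisons and show that their outcome never directly contributes to $\mathbb{C}$; it only decides whether to unfold the surrogate tree further. The argument then reduces to checking that treating the undecided result $\odot$ as ``potential contact, unfold'' is safe on surrogate triangles.

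First, I would inspect Algorithm~\ref{algorithm:multiresolution-contact-detection:explicit-Euler} and observe that the only insertion into $\mathbb{C}(p_i)$ or $\mathbb{C}(p_j)$ happens inside the branch guarded by $t_i \in \mathbb{T}^{\epsilon}_0(p_i) \wedge t_j \in \mathbb{T}^{\epsilon}_0(p_j)$. Any iterative outcome involving at least one surrogate triangle only determines whether the corresponding children are pushed into the next active set or dropped. Hence no surrogate-level result can be mistaken for a ``real'' contact point.

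Next, I would dispose of the three possible outcomes of \textsc{contact\_iterative} on a surrogate pair. The $\bot$ verdict is handled by Definition~\ref{definition:multiresolution:conservative}: conservativeness forbids any contact among descendants, so pruning the branch is safe. A concrete coordinate triggers an unfold, which is exactly the baseline behaviour. For $c = \odot$, reinterpreting the outcome as ``potential contact, unfold'' is conservative: in the worst case the recursion explores a subtree that eventually confirms no contact, but no real contact can be missed, since every branch that could host one remains active. Combining these three cases with the bottom-up induction already used in the proof of Lemma~\ref{lemma:multiresolution:explicit} yields the claimed equivalence between the variant that uses the hybrid algorithm everywhere and the variant that uses only the iterative algorithm on the surrogate levels.

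The main obstacle I expect is ruling out the pathological scenario in which repeated $\odot$ verdicts on surrogates propagate upwards and inject spurious entries into $\mathbb{C}$. This is resolved by termination at the leaves: the guard $t_i \in \mathbb{T}^{\epsilon}_0(p_i) \wedge t_j \in \mathbb{T}^{\epsilon}_0(p_j)$ in Algorithm~\ref{algorithm:multiresolution-contact-detection:explicit-Euler} activates the comparison-based fallback precisely there, so every genuine ambiguity is resolved by a robust test before anything is inserted into $\mathbb{C}$. Consequently the iterative scheme alone suffices on all surrogate levels, whereas the hybrid variant is needed only at the finest discretisation.
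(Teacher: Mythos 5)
Your proof is correct and captures the same essential idea as the paper, but you present it as an exhaustive three-way case analysis plus a structural observation about where $\mathbb{C}$ is written, whereas the paper's proof is a single tight contradiction argument aimed at the one genuinely dangerous failure mode. The paper simply posits that the unfolding terminates prematurely—i.e.~some surrogate pair $(t_i,t_j)$ returns $\bot$ even though a descendant pair $(t_{0,i},t_{0,j})$ with $t_i^{\epsilon}\cap t_j^{\epsilon}\neq\emptyset$ exists—and then observes this contradicts conservativeness directly. Your case analysis is sound and more explicit: you correctly note that surrogate-level outcomes never touch $\mathbb{C}$, that a $\bot$ prune is justified by conservativeness, and that $\odot$ is harmlessly treated as ``unfold'' because $\odot\neq\bot$ in the algorithm (though your phrase ``reinterpreting the outcome'' is slightly misleading; the algorithm already does this by construction, not by a new convention you introduce). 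Your worry about $\odot$ ``propagating upwards'' into $\mathbb{C}$ is a non-issue the paper does not bother to raise, but you dispose of it correctly via the finest-level guard. One point both you and the paper leave implicit, and which is really the linchpin of the $\bot$ case: a $\bot$ verdict only arises after the iterative scheme has \emph{converged}, so it is a geometrically trustworthy ``no overlap'' statement about the surrogate pair—only then does conservativeness apply. Making that chain of reasoning explicit would tighten either proof, but since you and the paper elide it in the same way, you have matched the paper's level of rigor while offering a more pedagogical decomposition.
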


\begin{proof}
 Let $\mathbb{T}^\epsilon(p_i) \cap \mathbb{T}^\epsilon(p_j) \not= \emptyset$, i.e.~two particles collide.
 We assume the lemma is wrong, i.e.~the tree unfolding terminates prematurely.
 This assumption formally means
  \[
   \exists t_i \in \mathcal{T}(p_i), t_j \in \mathcal{T}(p_j): r(p_i,p_j) =
   \bot,
  \]
  \noindent
  with 
  \[
   \exists t_{0,i} \in \mathbb{T}(p_i), t_{0,j} \in \mathbb{T}(p_j): \
   t_{0,i} \sqsubseteq _{\text{child}} \ldots \sqsubseteq _{\text{child}} t_i
   \wedge
   t_{0,j} \sqsubseteq _{\text{child}} \ldots \sqsubseteq _{\text{child}}
   t_j
   \wedge
   t_i^{\epsilon} \cap t_j^{\epsilon} \not= \emptyset.
  \]
 This assumption is a direct violation of the definition of a surrogate model
 which has to be conservative.
\end{proof}

\subsection{Implicit Euler with multiresolution acceleration}
\label{subsection:implicit:multiresolution-acceleration}

%
%
Picard iterations can exploit the multiscale hierarchy by looping
over the hierarchy levels top down:
Per iteration of Algorithm \ref{algorithm:algorithmic-framework:implicit-Euler},
we have to identify all contact points for the current
particle configuration.
This search for contact points is the same search as we use it in an explicit
Euler.
If we replace the contact detection within the inner loop with our multiscale
contact detection from Section~\ref{subsection:multiscale:explicit-Euler}, we
obtain an implicit Euler where the surrogate concept is used \emph{within the Picard loop} as
multiresolution acceleration.
The surrogate concept enters the algorithm's implementation as a black-box.

\begin{corollary}
  An implicit Euler using surrogates within the Picard loop body to speed up the
  search for contact points yields the same output as a flat
 implicit code with the same number of Picard iterations.
 \label{corollary:multiresolution:implicit:acceleration}
\end{corollary}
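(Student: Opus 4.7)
The plan is to reduce this corollary directly to Lemma~\ref{lemma:multiresolution:explicit}, by arguing that the multiresolution contact detection enters Algorithm~\ref{algorithm:algorithmic-framework:implicit-Euler} purely as a black-box replacement for \textsc{findContacts} and hence cannot alter any other state of the Picard iteration.

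First, I would isolate the coupling between the Picard sweep and the geometric routine. In Algorithm~\ref{algorithm:algorithmic-framework:implicit-Euler}, each Picard iteration reads the current guesses $\mathbb{T}^{\text{guess}}(p_i), v^{\text{guess}}(p_i), \omega^{\text{guess}}(p_i)$, invokes \textsc{findContacts} on the pairs of guessed triangulations, then calls \textsc{calcForces} on the resulting collision sets $\mathbb{C}(p_i)$, and finally updates velocity, angular velocity and triangulation. The only place where the geometric algorithm influences the sweep body is through the returned sets $\mathbb{C}(p_i)$ and $\mathbb{C}(p_j)$; the subsequent \textsc{calcForces} and \textsc{update} depend exclusively on these sets and on the previous guesses.

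Next, I would invoke Lemma~\ref{lemma:multiresolution:explicit}, which guarantees that Algorithm~\ref{algorithm:multiresolution-contact-detection:explicit-Euler} produces exactly the same contact set as the flat baseline from Algorithm~\ref{algorithm:algorithmic-framework:contacts} when applied to the same pair of triangulated particles. Substituting the multiresolution routine for \textsc{findContacts} in the Picard body therefore yields identical $\mathbb{C}(p_i)$ in every sweep, for any given pair of guesses $\mathbb{T}^{\text{guess}}(p_i), \mathbb{T}^{\text{guess}}(p_j)$.

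The conclusion then follows by induction on the Picard iteration index. The base case is trivial: both codes initialise the guesses identically from $\mathbb{T}(p_i), v(p_i), \omega(p_i)$. For the inductive step, assume that after $m$ sweeps both codes hold the same guesses; by the previous paragraph, they compute the same collision sets, hence the same force updates, hence the same guesses after sweep $m{+}1$. Thus, after any prescribed number of Picard iterations the two codes produce bit-identical states, and so the committed $v(p_i), \omega(p_i), \mathbb{T}(p_i)$ after time step advance agree. The only point that deserves care, rather than being a genuine obstacle, is the implementation remark about deduplicating close-by contact points: since that post-filter is applied identically in both variants after \textsc{findContacts} returns, and operates solely on $\mathbb{C}(p_i)$, it preserves equivalence.
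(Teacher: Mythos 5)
Your proposal is correct and follows exactly the paper's route: the paper's proof is the one-line reduction to Lemma~\ref{lemma:multiresolution:explicit}, and your argument is simply a careful unpacking of why that reduction suffices (black-box substitution of the contact routine plus induction over Picard sweeps). The added attention to the deduplication filter is a reasonable sanity check but not needed beyond what the lemma already guarantees.
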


\begin{proof}
 This is a direct consequence of Lemma  \ref{lemma:multiresolution:explicit}
 and implies the algorithm's correctness.
\end{proof}

\noindent
Though we end up with exactly the same number of Picard iterations,
the individual \replaced[id=Add]{iterations}{iterates} are accelerated
\deleted[id=Add]{internally} by the multiresolution technique:
\replaced[id=R1]{
 For a localised contact between two particles, the surrogate tree is
 unfolded along a single or few branches of the tree.
 If the nodes within the tree hold roughly the same number of triangles,
 the number of triangles to be compared grows linearly with the number of Picard
 steps.
 We benefit both from a zapping through the resolution levels and the 
 localisation of contacts, i.e.~the fact that two particles usually collide
 only in a small area compared to the overall geometric object.
}{
Per Picard step, we expect the surrogate trees' height times
$N_{\text{surrogate}}$ to dominate the compute cost---instead of $N^2_{\text{surrogate}}$ a plain code's.
}

\subsection{Implicit multi-resolution Euler}

%
%
A more bespoke implicit multiresolution algorithm arises from
\deleted[id=Add]{two} ideas inspired by multilevel non-linear equation system
solvers.
\replaced[id=Add]{The}{On the one hand, the} multiscale Algorithm
\ref{subsection:implicit:multiresolution-acceleration} consists of two nested
while loops---the outer loop stems from the Picard iterations, the inner loop
realises the tree unfolding---which we can permute.
We obtain an algorithm that runs top-down via the active sets through the
surrogate hierarchies and unfolds the trees step by step.
Per unfolding step, it uses the Picard loop to converge on the selected
hierarchy level.
The rationale behind such a permutation is the observation that the efficiency
of a nonlinear equation system solver hinges on the availability of a good
initial guess.
Surrogate resolution levels might be well-suited to deliver a good initial
guess of what $\mathbb{T}$ looks like in the next time step.
This train of thought is similar to the extension of multigrid into full
multigrid.
\replaced[id=Add]{The}{
On the other hand, the} same multigrid analogy suggests that we do not have to
converge on a surrogate level, as the level supplements only a guess anyway.
In the extreme case, it is thus sufficient to run one Picard iteration per
unfolding step only.

\begin{algorithm}[htb]
  \caption{
   Implicit time stepping algorithm where the
   Picard and multiresolution loop are intermingled.
   \label{algorithm:multiresolution-contact-detection:implicit-Euler}
  }
 {\footnotesize
 \begin{algorithmic}[1]
  \State $\forall p_j \not= p_i \in \mathbb{P}: \ $
         $\mathbb{A}(p_i,p_j) \gets root(\mathcal{T}(p_i))$
   \Comment Active sets are now parameterised over interactions 
  \While{$\mathbb{T}^{\text{guess}}(p_i), v^{\text{guess}}(p_i),
  \omega^{\text{guess}}(p_i)$ or any $\mathbb{A}$ change significantly for
  any $p_i$}
   \State $\forall p_j \not= p_i \in \mathbb{P}: \ $
          $\mathbb{A}_{\text{new}}(p_i,p_j) \gets \emptyset$,
          $\mathbb{A}_{\text{new}}(p_j,p_i) \gets \emptyset$,
          $\mathbb{C}(p_i,p_j) = \emptyset$,
          $\mathbb{C}(p_j,p_i) = \emptyset$
   \For{$p_j \not= p_i \in \mathbb{P}$}
    \For{$t_i \in \mathbb{A}(p_i,p_j), t_j \in \mathbb{A}(p_j,p_i)$}
    \State $c \gets \textsc{contactIterative}(t_i,t_j)$
     \Comment{Use context-specific $\epsilon $ depending on $t_i,t_j$}
    \If{$c = \odot \wedge t_i \in \mathbb{T}^{\epsilon}_0(p_i) \wedge t_j \in
     \mathbb{T}^{\epsilon}_h(p_j)$}
      \Comment{Not converged on non-surrogate triangles}
     \State $c \gets \textsc{contactComparisonBased}(t_i,t_j)$
      \Comment{Use comparison-based algorithm this time}
    \EndIf
    \If{$c\not = \bot \wedge c \not= \odot$}
       \Comment{Implicit guess}
       \State $\mathbb{C}(p_i) \gets \mathbb{C}(p_i) \cup \{c\}$, 
              $\mathbb{C}(p_j) \gets \mathbb{C}(p_j) \cup \{c\}$ 
    \EndIf
    \If{$c = \bot$}
        \Comment Add only parents
      \State 
        $\mathbb{A}_{\text{new}}(p_i,p_j) \gets \mathbb{A}_{\text{new}}(p_i,p_j)
        \cup \{\hat t: t_i \sqsubseteq _{\text{child}} \hat t\}$
      \State
        $\mathbb{A}_{\text{new}}(p_j,p_i) \gets \mathbb{A}_{\text{new}}(p_j,p_i)
        \cup \{\hat t: t_j \sqsubseteq _{\text{child}} \hat t\}$
    \Else
        \Comment Widen active sets
      \State \ldots
        \Comment Compare to Algorithm
        \ref{algorithm:multiresolution-contact-detection:explicit-Euler}
    \EndIf 
   \EndFor
   \EndFor
   \State $\forall p_j \not= p_i \in \mathbb{P}: \ $
          $\mathbb{A}(p_i,p_j) \gets \mathbb{A}_{\text{new}}(p_i,p_j)$,
          $\mathbb{A}(p_j,p_i) \gets \mathbb{A}_{\text{new}}(p_j,p_i)$
   \For{$p_i \in \mathbb{P}$}
     \State $(dv,d\omega) \gets$ \Call{calcForces}{$\mathbb{C}(p_i)$}
     \Comment Remove redundant contact points first
     \State $(v^{\text{guess}},\omega^{\text{guess}})(p_i) \gets (v,\omega)(p_i) +
      \Delta t \cdot (dv,d\omega)$ 
     \Comment Additional damping might be required
     \State $\mathbb{T}^{\text{guess}}(p_i) \gets $
      \Call{update}{$\mathbb{T}(p_i),v^{\text{guess}}(p_i),\omega^{\text{guess}}(p_i),\Delta
      t$}
   \EndFor
  \For{$p_j \not= p_i \in \mathbb{P}$}
   \Comment Clean-up, i.e. add siblings
   \State
   \Comment Obsolete if surrogate nodes host only one triangle 
   \State $\forall t, \hat t \in \mathbb{A}(p_i,p_j) \ \text{with} \ 
     t \sqsubseteq _{\text{child}} \hat t: $
     $\mathbb{A}(p_i,p_j) \gets \mathbb{A}(p_i,p_j)
       \cup
       \{ t' \in \mathcal{T}(p_i): t' \sqsubseteq _{\text{child}} \hat t \} $
   \State $\forall t, \hat t \in \mathbb{A}(p_j,p_i) \ \text{with} \ 
     t \sqsubseteq _{\text{child}} \hat t: $
     $\mathbb{A}(p_j,p_i) \gets \mathbb{A}(p_j,p_i)
       \cup
       \{ t' \in \mathcal{T}(p_j): t' \sqsubseteq _{\text{child}} \hat t \} $
   \State
   \Comment and remove ``redundant'' parents
   \State $\forall t \in \mathbb{A}(p_i,p_j): \ 
       \mathbb{A}(p_i,p_j) \gets \mathbb{A}(p_i,p_j)
       \setminus
       \{ \hat t \in \mathcal{T}(p_i): t \sqsubseteq _{\text{child}} \hat t \} $
   \State $\forall t \in \mathbb{A}(p_j,p_i): \ 
       \mathbb{A}(p_j,p_i) \gets \mathbb{A}(p_j,p_i)
       \setminus
       \{ \hat t \in \mathcal{T}(p_j): t \sqsubseteq _{\text{child}} \hat t \} $
  \EndFor
  \EndWhile
  \State $\forall p_i \in \mathbb{P}: \ \mathbb{T}(p_i) \gets
   \mathbb{T}^{\text{guess}}(p_i),\ v(p_i) \gets v^{\text{guess}}(p_i),\ 
     \omega(p_i) \gets \omega^{\text{guess}}(p_i) 
    $
  \State $t \gets t + \Delta t$
 \end{algorithmic}
 }
\end{algorithm}

%
%
Our advanced variant of the implicit Euler \deleted[id=Add]{thus}
is an \emph{outer-loop multi-resolution Picard scheme}.
Let the Picard loop start from the coarsest
surrogate representation per particle (Algorithm
\ref{algorithm:multiresolution-contact-detection:implicit-Euler}).
These representations form our initial active sets.
\deleted[id=R1]{
Different to the explicit scheme, we maintain an
active set $\mathbb{A}(p_i,p_j)$ per particle-particle
combination $p_i,p_j$:
A particle $p_i$ can exhibit a very coarse surrogate representation against one
particle, while \replaced[id=Add]{using}{use} a very detailed mesh when we
compare it to another one.
}
After the Picard step, any surrogate triangle for which the hybrid
algorithm has not terminated or for which we identified a contact point is
replaced by its next finer representation\replaced[id=Add]{.}{in the respective active set combination, i.e.~for the particular
comparison counterpart}.
In the tradition of value-range analysis, we widen the active set
\cite{Apinis:2016:WideningNarrowing}.
The Picard loop terminates if the plain algorithm's termination criteria hold,
i.e.~the outcome of two subsequent iterations does not change dramatically
anymore, and no surrogate tree node has unfolded anymore throughout the
previous iterate.

The algorithm is completed by a clean up which
removes ``\replaced[id=Add]{obsolete}{redundant}'' triangles from the active
set\replaced[id=Add]{:}{and ensures that the set is consistent with the tree:
It runs through the active set of a particle-particle
combination once again.
If any of a surrogate triangle's children is part of the active set, the
surrogate is removed from the set and the routine ensures that all of its
children are in the set.}
If all children of a surrogate triangle do certainly not contribute a contact
point anymore, they are \deleted[id=Add]{thus automatically} replaced with their
parent surrogate triangle.
We narrow the active set.
\deleted[id=R1]{Our algorithm discussion closes with the observation that the
number of particle-particle combination is potentially huge yet small in practice, as
particles are rigid and thus cannot cluster arbitrarily dense.}

\begin{implementation}
 Different to the explicit scheme, we maintain an
 active set $\mathbb{A}(p_i,p_j)$ per particle-particle
 combination $p_i,p_j$:
 A particle $p_i$ can exhibit a very coarse surrogate representation against one
 particle, while \replaced[id=Add]{using}{use} a very detailed mesh when we
 compare it to another one.
 While the number of particle-particle combinations is potentially huge, it is 
 small in practice, as particles are rigid and thus cannot cluster arbitrarily dense.
\end{implementation}

\noindent
Our genuine multiscale formulation stresses the convergence
\replaced[id=Add]{assumptions:}{requirements further}:
While Assumption \ref{assumption:contraction} guarantees the convergence of the
Picard iterations on the finest level, 
our multi-resolution approach may push the solution into the wrong direction via
the surrogate levels and thus make the initial guess on the next finer level
leave the single level's convergence domain.

\begin{assumption}
 \label{assumption:multiscale-contraction}
 We assume that a Picard iteration on any level of the surrogate trees yields a
 new solution on the same or a finer resolution which
 \replaced[id=Add]{preserves the Picard iteration's contraction
 property.}{remains inside the respectve Picard
 iteration's region of convergence.}
\end{assumption}

\deleted[id=Add]{
In practice, Assumption \ref{assumption:multiscale-contraction} might require
a damping of the Picard iterations
with a relaxation parameter $\theta _{\text{Picard}} \leq 1$
such that an iteration update does not overshoot.
That is, the damping becomes stronger with coarser surrogate levels.
}

\begin{lemma}
 \label{lemma:implicit-multiresolution:correctness}
  If Assumption \ref{assumption:multiscale-contraction} holds and if Algorithm
  \ref{algorithm:multiresolution-contact-detection:implicit-Euler}
  terminates, it delivers the correct solution.
\end{lemma}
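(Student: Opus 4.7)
The plan is to decompose the correctness claim into two sub-claims and then combine them with the single-level result from Lemma \ref{lemma:multiresolution:explicit} and Corollary \ref{corollary:multiresolution:implicit:acceleration}. Sub-claim (i): upon termination the contact set $\mathbb{C}$ produced by Algorithm \ref{algorithm:multiresolution-contact-detection:implicit-Euler} equals the contact set that the flat implicit Euler (Algorithm \ref{algorithm:algorithmic-framework:implicit-Euler}) would compute on the same $(\mathbb{T}^{\text{guess}},v^{\text{guess}},\omega^{\text{guess}})$. Sub-claim (ii): this state is itself a fixed point of the flat Picard map. Given (i) and (ii), Assumption \ref{assumption:multiscale-contraction} supplies contraction and therefore uniqueness, so the fixed point coincides with the implicit Euler solution.

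For (i) I would argue as follows. The termination criterion forces every $\mathbb{A}(p_i,p_j)$ to be stationary between two consecutive iterations. Consider any pair $(t_i,t_j)$ in the product of active sets during the final iteration in which at least one of $t_i,t_j$ is a genuine surrogate (not in $\mathbb{T}_0$). If it had produced $c\neq\bot$, the widening branch of Algorithm \ref{algorithm:multiresolution-contact-detection:implicit-Euler} would have added children to $\mathbb{A}_{\text{new}}$ and $\mathbb{A}$ would have changed, contradicting termination. Hence every such surrogate pair returns $c=\bot$, and by the conservative property (Definition \ref{definition:multiresolution:conservative}) no descendant of that pair at the finest resolution can contribute a contact. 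Every pair in $\mathbb{T}_0\times\mathbb{T}_0$ appearing in $\mathbb{A}$ is processed exactly as in the baseline through the hybrid detector with its comparison-based fallback. Combining these two facts with the coverage invariant that the union of fine-level descendants of $\mathbb{A}(p_i,p_j)$ spans every triangle of $\mathbb{T}_0(p_i)$ whose $\epsilon$-environment could intersect $\mathbb{T}_0^{\epsilon}(p_j)$, one obtains $\mathbb{C}=\mathbb{C}_{\text{baseline}}$, exactly as in Lemma \ref{lemma:multiresolution:explicit}.

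Sub-claim (ii) is then immediate: stationarity of $v^{\text{guess}}$, $\omega^{\text{guess}}$ and $\mathbb{T}^{\text{guess}}$ combined with $\mathbb{C}=\mathbb{C}_{\text{baseline}}$ means that the update $(dv,d\omega)$ computed in the final iteration of Algorithm \ref{algorithm:multiresolution-contact-detection:implicit-Euler} coincides with that of Algorithm \ref{algorithm:algorithmic-framework:implicit-Euler} and leaves the state unchanged. Hence the terminal state is a fixed point of the flat Picard map, and Assumption \ref{assumption:multiscale-contraction} (which preserves the contraction property of Assumption \ref{assumption:contraction} across resolution levels) guarantees that this fixed point is the unique implicit Euler solution.

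The main obstacle I expect is rigorously establishing the coverage invariant that underlies (i). Because the clean-up phase can move triangles back up the tree after they have been unfolded, and because widening and narrowing happen on different particle-particle interaction pairs independently, a naive induction over iterations is awkward: one must rule out an oscillatory steady state in which a fine-level contact is transiently ``masked'' by a surrogate that returns $\bot$ only because the current $\mathbb{T}^{\text{guess}}$ is an unconverged intermediate. I would handle this by maintaining, as an invariant throughout both the main widening/narrowing loop and the clean-up, that every $t_0 \in \mathbb{T}_0(p_i)$ with $t_0^{\epsilon} \cap \mathbb{T}_0^{\epsilon}(p_j)\neq\emptyset$ is a descendant of some element of $\mathbb{A}(p_i,p_j)$, and verifying that each of the three active-set operations in Algorithm \ref{algorithm:multiresolution-contact-detection:implicit-Euler} preserves it by virtue of conservativeness applied one level at a time.
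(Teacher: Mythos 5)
Your proposal is correct and, at its core, follows the same strategy as the paper's own argument: show that at a terminal state the contact set can contain neither a false positive nor a false negative, each case leading to a contradiction with termination via the conservativeness of surrogates. What you add is a cleaner structural decomposition. The paper argues directly over contact points — case one shows an invalid contact forces a surrogate to widen; case two shows a missed contact has an active ancestor that must widen — and then implicitly treats "correct $\mathbb{C}$ at a stationary state'' as synonymous with "delivers the correct solution.'' You make the implicit step explicit via sub-claim (ii): the terminal state is a fixed point of the flat Picard map, and Assumptions~\ref{assumption:contraction}/\ref{assumption:multiscale-contraction} supply the contraction that makes the fixed point unique, so it must coincide with the implicit Euler solution. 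This is a genuine improvement in rigour; the paper never spells out why a stationary state with the right $\mathbb{C}$ equals \emph{the} solution.

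Your identification of the coverage invariant as the load-bearing step is also more careful than the paper, whose line beginning ``Due to Definition~\ref{definition:multiresolution:conservative}'' quietly conflates conservativeness of individual surrogates (a local property) with the claim that every fine-level triangle has an active ancestor (a global property of $\mathbb{A}$). Conservativeness alone does not give this; one needs the inductive fact you describe, that the root covers everything and that each of the widening, narrowing, and clean-up operations replaces active triangles in a way that does not lose descendants. Your observation that narrowing (replacing a triangle by its parent) trivially preserves coverage, and that the clean-up's sibling-addition plus parent-removal preserves it as well, is precisely the missing justification. The one thing worth tightening in your sketch is the remark about ``oscillatory steady state'': an oscillation that revisits the same active set but with a different $\mathbb{T}^{\text{guess}}$ is already excluded by the termination criterion (which requires $\mathbb{T}^{\text{guess}}$, $v^{\text{guess}}$, $\omega^{\text{guess}}$ \emph{and} $\mathbb{A}$ to stop changing), so you do not need to rule it out separately once you establish the invariant — the paper itself relegates the real danger of narrowing to Corollary~\ref{corollary:implicit-multiresolution:correctness}, which concerns non-termination rather than incorrect terminal states.
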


\begin{proof}
  We have to study two cases over  
  the active sets 
  $ \mathbb{A}(p_i,p_j)$ and $\mathbb{A}(p_j,p_i)$.
  First, assume that 
  $ (t_{\mathbb{A}(p_i,p_j)}, t_{\mathbb{A}(p_j,p_i)})
  \in \mathbb{A}(p_i,p_j) \times \mathbb{A}(p_j,p_i)$
  yields an invalid contact point, i.e.~a
  contact point
  that does not exist in $\mathbb{T}_0(p_i)$ compared to $\mathbb{T}_0(p_j)$.
  One of the triangles has to be a surrogate triangle.
  They are
  replaced by their children and the algorithm has not
  terminated.
  Instead, we approach the solution further.

  In the other case, assume that the  
  algorithm has terminated yet misses a
  triangle pair 
  $ (t(p_i), t(p_j)) \not \in \mathbb{T}(p_i) \times \mathbb{T}(p_j)$
  which contributes a contact point in the plain model.
  Due to Definition 
  \ref{definition:multiresolution:conservative} over conservative surrogates,  
  
\begin{equation}
 \forall t(p_i) \in \mathbb{T}_0(p_j), \exists \hat t_{\mathbb{A}(p_i,p_j)} \in
 \mathbb{A}(p_i):
 \quad t(p_i) \sqsubseteq _{\text{child}} \ldots \sqsubseteq
 _{\text{child}} \hat t_{\mathbb{A}(p_i,p_j)}
 \label{equation:05c_implicit:inclusion}
\end{equation}
  such that $\hat t_{\mathbb{A}(p_i,p_j)}$ yields a contact point.
  This point is \deleted[id=Add]{``invalid'', i.e.~not found in the baseline and
  thus covered by the first case.
  It} eventually \deleted[id=Add]{is} removed as the corresponding
  $\hat t_{\mathbb{A}(p_i,p_j)}$ is replaced by its
  children.
  \replaced[id=Add]{The algorithm has not terminated yet.}{The analogous
  argument can be made over $t_\mathbb{U}$.}
\deleted[id=Add]{
 Both case distinctions argue over the widening of the search space 
 $ \mathbb{A}^{(n)}(p_j) \times \mathbb{U}^{(n)}(p_j) $.
 The modification of this space by the algorithm implies that the comparison
 sequence 
 $ \mathbb{A}_0^{(n)}(p_j) \times \mathbb{U}_0^{(n)}(p_j) $ has to be changed
 after the respective modification, too.
 As we assume that we remain within the region of convergence of the Picard
 iteration, this harms the convergence speed but does not imply that we diverge.
 Little additional work is eventually required to handle the narrowing case:}
 \end{proof}

\begin{lemma}
 \deleted[id=Add]{
  If Algorithm
  \ref{algorithm:multiresolution-contact-detection:implicit-Euler}
  remains within the region of convergence of the Picard
  iteration, it terminates.
 }
\end{lemma}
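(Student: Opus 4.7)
The plan is to combine two finiteness arguments: the surrogate trees are finite, which bounds how often the active sets can change, and the standing Picard convergence hypothesis which bounds how long the guesses take to stabilise once the active sets stop changing.

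First, I would track the widening of each active set $\mathbb{A}(p_i,p_j)$. Every time a triangle in $\mathbb{A}(p_i,p_j)$ is replaced by its children (the ``widen'' branch in Algorithm \ref{algorithm:multiresolution-contact-detection:implicit-Euler}), the affected path strictly descends in $\mathcal{T}(p_i)$. Since $\mathcal{T}(p_i)$ has finite depth $k_{\text{max}}$ and branching $N_{\text{surrogate}}$, each local path can be widened at most $k_{\text{max}}$ times before it reaches $\mathbb{T}_0(p_i)$ and cannot be widened any further. Summing over paths and particle pairs, the total number of widening events is bounded a priori by $|\mathbb{P}|^2 \cdot k_{\text{max}} \cdot N_{\text{surrogate}}^{k_{\text{max}}}$.

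Next I would deal with the interaction between widening and the clean-up (narrowing) step. The clean-up only coalesces a complete set of siblings into their parent when none of them contribute a contact point; a subsequent widening of that parent is only re-triggered if its iterative contact check returns a non-$\bot$ result. Hence I would define a lexicographic potential on the configuration of active sets---for example, the multiset of levels of triangles in $\bigcup_{i\neq j}\mathbb{A}(p_i,p_j)$ viewed as a sequence in the well-founded order induced by the finite tree---and show that every sequence of widen/narrow operations, combined with the fact that a parent can only be re-widened after its children have been narrowed, drives this potential monotonically on long enough timescales. Because the underlying tree is finite, the active sets can change only finitely often and therefore eventually reach a fixed configuration $\mathbb{A}^\star$.

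From that point onwards, the algorithm degenerates into a plain Picard iteration on the fixed search space $\mathbb{A}^\star$, exactly as in Algorithm \ref{algorithm:algorithmic-framework:implicit-Euler} restricted to the (possibly mixed-level) triangles in $\mathbb{A}^\star$. By Assumption \ref{assumption:multiscale-contraction}, the Picard iterator is a contraction on this level, so $\mathbb{T}^{\text{guess}}$, $v^{\text{guess}}$ and $\omega^{\text{guess}}$ stop changing significantly after finitely many further sweeps, and the outer \textbf{while} condition fails. The main obstacle I expect is the second step, namely ruling out perpetual widen/narrow oscillation; the cleanest way to handle it is to observe that narrowing is only applied to a surrogate all of whose children have been tested and rejected in the current sweep, so an immediate re-widening of the same parent cannot happen while the guesses remain unchanged, and the contraction hypothesis guarantees that the guesses themselves eventually stop changing.
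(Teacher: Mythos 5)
Two things are worth noting. First, the lemma you are attempting to prove has actually been \emph{deleted} from the paper, together with its proof; the authors replaced it with Corollary~\ref{corollary:implicit-multiresolution:correctness}, whose proof concedes precisely the point that your proof attempts to establish: that the removal of triangles from the active set (narrowing) is the step that breaks the argument. The paper's own conclusion is that without narrowing, termination is trivial (active sets are monotone and bounded by the finite tree, and in the worst case grow to $\mathbb{T}(p_i)$, after which plain Picard converges); \emph{with} narrowing, one cannot rule out cycles or amplified oscillations, and the contraction assumption can be violated. The authors then sidestep the problem pragmatically via the memory set $\mathbb{U}(p_i,p_j)$ in the implementation remark, which vetoes re-removal of a triangle that has already been re-added, effectively enforcing eventual monotonicity rather than proving it.

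Second, there are two concrete gaps in your argument. The bound in your first step, $|\mathbb{P}|^2 \cdot k_{\text{max}} \cdot N_{\text{surrogate}}^{k_{\text{max}}}$, is not an a~priori bound on the number of widening events: a surrogate may be widened, narrowed, and re-widened, so the bound only holds once re-widening is ruled out, which is the subject of your second step. The second step is where you and the deleted proof both run into the same obstacle, and you correctly flag it yourself. Your resolution (``the contraction hypothesis guarantees that the guesses themselves eventually stop changing'') is the same appeal the deleted proof makes (``the error behind the Picard loop has strictly diminished \dots We do not encounter cycles''). But this appeal is not sound here: Assumption~\ref{assumption:multiscale-contraction} covers a step to ``the same or a finer resolution'' only; it says nothing about a step that narrows to a coarser resolution. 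Once narrowing changes the active set, the Picard operator itself changes, and the fixed point being approached may jump, so the error need not decrease across a narrowing step. This is exactly why the lemma was withdrawn. The lexicographic potential you sketch is the right kind of object to try, but constructing one that genuinely decreases across narrow-then-re-widen sequences would require a hypothesis the paper does not assume---and this is why the authors moved the guarantee from a lemma into an implementation-level safeguard.
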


\begin{corollary}
 \label{corollary:implicit-multiresolution:correctness}
  The removal of triangles from the active set can cause Algorithm 
  \ref{algorithm:multiresolution-contact-detection:implicit-Euler} to violate 
  Assumption \ref{assumption:multiscale-contraction}.
\end{corollary}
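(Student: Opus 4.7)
The plan is to establish the corollary by exhibiting a scenario in which a narrowing step pushes an otherwise well-behaved iterate outside the region on which Assumption~\ref{assumption:multiscale-contraction} promises contraction. Since the statement is a mere existence claim (``can cause''), a single counterexample is enough; I do not need to quantify the violation or classify when it occurs.

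First I would isolate the only step of Algorithm~\ref{algorithm:multiresolution-contact-detection:implicit-Euler} that is not geometry-tightening, namely the clean-up phase that removes a full set of child triangles from an active set and reinstates their parent surrogate $\hat t$. Widening only refines geometry and, by the conservativeness of Definition~\ref{definition:multiresolution:conservative}, cannot introduce a contact point that is absent on the parent level. Narrowing carries no such guarantee: Definition~\ref{definition:multiresolution:conservative} constrains the surrogate from below, and, because our trees are explicitly not required to be monotonous, $\hat t^{\epsilon}$ can strictly enclose the union of its children's $\epsilon$-environments.

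Next I would construct a configuration in which, at some Picard iterate $n$, every child of $\hat t$ is in the active set and every comparison with the partner triangles of $p_j$ returns $c=\bot$, so the clean-up replaces the children by $\hat t$. In iterate $n+1$ the algorithm compares $\hat t$ against the same partner triangles; by the preceding observation $\hat t^{\epsilon}$ can report a contact point $c\neq\bot$ that none of the children produced. This injects a non-vanishing force $F(c)$ into the velocity and angular velocity update, i.e.~a jump in the Picard map whose magnitude is governed by $K_s$ and the newly admitted normal, and is independent of how close the previous iterate was to its single-level fixed point. Consequently, the post-narrowing state need not lie in the neighbourhood on which a single Picard sweep on $\hat t$ is contractive, which is exactly the hypothesis of Assumption~\ref{assumption:multiscale-contraction}.

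The main obstacle is making sure the failure is attributable to the narrowing rather than to a pre-existing loss of contraction on the fine level. I would therefore choose the example so that, immediately before the clean-up, the fine-level iterate has already relaxed to within an arbitrarily tight tolerance of its fixed point; this way the discontinuity caused by reinstating $\hat t$ is unambiguously the sole source of the assumption's violation. A secondary bookkeeping concern is that narrowing is only triggered when all children sit in the active set together, so the counterexample must be arranged so that this collective emptying of the children's contact contributions actually happens, which is straightforward on the finest surrogate level immediately above $\mathbb{T}_0$.
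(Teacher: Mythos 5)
Your approach matches the paper's in spirit: the paper's proof also pins the failure on narrowing, i.e.\ the removal of triangles from the active set. The difference is one of granularity. The paper's proof first establishes a positive baseline — without any removal the algorithm necessarily terminates because the surrogate tree has finite depth and width, so even an overshooting Picard step at worst inflates $\mathbb{A}(p_i,p_j)$ to $\mathbb{T}(p_i)$, from which point convergence follows — and then simply asserts that once removal is allowed, ``it is easy to see'' that cycles or amplified oscillations cannot be ruled out. You skip the baseline and instead supply the mechanism the paper leaves implicit: because the surrogate hierarchy is explicitly \emph{not} required to be monotonous, a reinstated parent $\hat{t}^{\epsilon}$ can detect a contact that none of its children reported at the previous iterate; that spurious contact injects a force discontinuity (governed by $K_s$ and the newly admitted normal) whose magnitude is independent of how close the iterate was to its fine-level fixed point, which both triggers re-widening and pushes the state outside any prescribed contraction neighbourhood. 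This makes the widen/narrow oscillation concrete and is a useful sharpening of the paper's hand-waving.

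One bookkeeping correction: the narrowing you describe is not performed in the clean-up phase. In Algorithm~\ref{algorithm:multiresolution-contact-detection:implicit-Euler} it is the $c=\bot$ branch of the \emph{main loop} that replaces a child by its parent in $\mathbb{A}_{\text{new}}$; the clean-up at the end of an iteration does the opposite consistency chore of removing a parent whenever any of its children is still active and of adding missing siblings of an active node. Your mechanism remains valid — just attribute it to the $c=\bot$ branch rather than the clean-up step.
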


\begin{proof}
  If no triangles are ever removed from the active set, the proof of Lemma
  \ref{lemma:implicit-multiresolution:correctness} trivially demonstrates that
  the algorithm terminates always, as the surrogate tree is of finite depth and
  width.
  Even if we overshoot with the Picard iterations, i.e.~if we violate the
  contraction property, we will, in the worst case, get
  $\mathbb{A}(p_i,p_j) = \mathbb{T}(p_i)$.
  From hereon, the algorithm converges.

  If we however remove triangles, it is easy to see that we cannot guarantee
  that we do not introduce cycles or even amplify oscillations.
  The contraction property is violated.
\end{proof}


\begin{proof}
\deleted[id=Add]{
  Our discussion of Lemma \ref{lemma:implicit-multiresolution:correctness}
  assumes a monotonous growth of 
  $ \mathbb{A}^{(n)}(p_j) \times \mathbb{U}^{(n)}(p_j) $
  and exploits the fact that this search space is finite and bounded.
  Let $ (t_\mathbb{A},t_\mathbb{U}) \in \mathbb{A}^{(n)}_0(p_j) \times
  \mathbb{U}_0^{(n)}(p_j) $ not contribute a contact point.
  Neither does its parent $(\hat t_\mathbb{A},t_\mathbb{U})$ or $(
  t_\mathbb{A},\hat t_\mathbb{U})$, respectively, or any other child of this
  parent contribute a contact point.
  $ (t_\mathbb{A},t_\mathbb{U}) $ consequently is replaced by a combination
  involving its surrogates in $\mathbb{A}^{(n+1)}_0(p_j) \times
  \mathbb{U}_0^{(n+1)}(p_j)$.
  The argument applies recursively.
  Let there be a $m>n+1$ for which  $ (t_\mathbb{A},t_\mathbb{U}) $ has to be
  taken into account.
  We know that it will eventually be re-added.
  While the cardinalities $\| \mathbb{A}^{(n)}(p_j)\| $ and
  $\| \mathbb{U}^{(n)}(p_j)\| $ are not monotonously growing, they are
  non-strictly growing between iteration $n$ and $m$.
  Furthermore, we know that the error behind the Picard loop has (strictly)
  diminished between $n$ and $m$ due to contraction property.
  We do not encounter cycles.
  }
\end{proof}

\begin{implementation}
 In practice, Corollary \ref{corollary:implicit-multiresolution:correctness}
 implies that we, on the one hand, have to damp the Picard iterations.
 We artificially reduce the force contributions from coarse surrogate levels to
 avoid oscillations.
 On the other hand, we work with a memory set $\mathbb{U}(p_i,p_j)$ in which we
 hold references to triangles which have been removed from the active set.
 Once they are re-added, we veto any subsequent removal from hereon and, hence, 
 the activation of such triangles' surrogates. 
\end{implementation}

\subsection{Implementation}
\deleted[id=Add]{The multiresolution representation of an object can be computed
at simulation startup as a preprocessing step.
As we keep the multiresolution hierarchy when
particles move and rotate---we simply have to ensure that all triangles including all
surrogates are properly rotated and translated---the flattening
of sets of triangles from $\mathcal{T}(p)$ into a sequence of coordinates has to be done once per
time step, as the triangle coordinates change in each step.
It is reasonable to realise this via lazy flattening, i.e.~a given
set of triangles is mapped onto its flat representation---including the
replication of coordinates---upon first request and then cached for the
remainder of the time step.}
There are two reasons why our multi-resolution algorithms are expected to yield
better performance than \replaced[id=Add]{a straightforward textbook
implementation}{their baseline without a hierarchy}:
First and foremost, we expect the number of
triangle-to-triangle comparisons to go down \replaced[id=Add]{compared to a
flat, single-level approach.}{ despite the fact that we augmented the triangle
set with surrogates.}
The multiscale algorithm iteratively narrows down the region of a particle where
contacts may arise from\deleted[id=Add]{, and thus studies only the area of a
particle which potentially is in contact with a neighbour}.
\replaced[id=Add]{
 These savings on the finer geometric resolutions compensate for additional
 checks with surrogate triangles.
 However, any cost amortisation has to be studied carefully---in particular for
 the implicit, non-linear case where trees unfold and collapse again---and it
 hinges upon an efficient realisation: 
 In this context, we expect the streaming, comparison-free variants of our
 algorithm to benefit from vector architectures.
}{
At the same time, we can pick $N_{\text{surrogate}}$ such that one leaf set
cardinality of the surrogate tree fits exactly to the vector unit length and,
hence, cache line architecture.
}

The multiresolution representation of an object can be computed at simulation
startup as a preprocessing step.
\replaced[id=Add]{Though}{While} we keep the multiresolution hierarchy when
particles move and rotate, \deleted[id=Add]{---we simply have to ensure that all
triangles including all surrogates are properly rotated and translated---}the
flattening of \added[id=Add]{the active} sets of triangles from $\mathcal{T}(p)$
into a sequence of coordinates \replaced[id=Add]{is done on-the-fly and the
flattened data is not held persistently.}{has to be done, as the triangle
coordinates change in each step.
It is reasonable to realise this via lazy flattening, i.e.~a given
set of triangles is mapped onto its flat representation---including the
replication of coordinates---upon first request and then cached for the
remainder of the time step.}

 On the one hand, this ensures that the memory overhead remains under control. 
 On the other hand, it pays tribute to the fact that the active set changes
 permanently.
 To remain fast despite permanently changing active sets, we pick
 $N_{\text{surrogate}}$ such that the finest nodes within $|\mathbb{T}|$ hold
 triangle sequences for which streaming instructions such as AVX already pay
 off.
 The tree clusters $\mathbb{T}$ into segments that fit to the architecture, and 
 Algorithm \ref{algorithm:triangle-soup} hence does not process all triangles
 from $\mathbb{T}$ in one batch.
 Instead it runs over subchunks of batches.

 We apply this argument recursively and make each non-leaf node within
 $\mathcal{T}$ hold a set of triangles, too.
 We make the nodes in the surrogate tree host many triangles and the tree
 overall shallow, such that the per-node data cardinality again ensures that we
 benefit from vector units.
 This ``tweak $N_{\text{surrogate}}$'' idea however does not fit perfectly to
 multiscale algorithms
where the coarser tree levels\replaced[id=Add]{ typically}{, by definition,} do
not occupy a complete vector length.
 It would be a coincidence if $|\mathbb{T}|$ and $N_{\text{surrogate}}$ 
\added[id=Add]{
 yielded only nodes that fill a vector unit completely on each and every
 surrogate level.
} 
Therefore, we do not run triangle-to-triangle comparisions within the
 tree directly.
Instead, we make the tree/triangle traversal collect all comparisons to
be made with a buffer.
Once we have identified all triangle collisions to be computed, we stream the
whole buffer through the vector units.
We merge the \deleted[id=Add]{flattened} triangle representations on-the-fly.

\begin{implementation}
  If our surrogate tree hosts more than one triangle per non-leaf node, the
  algorithm has to be \deleted[id=R3]{is} completed by a further clean-up step which ensures that
  the active set remains consistent with the tree:
  It runs through the active set of a particle-particle
  combination once again.
  If any of a surrogate triangle's children is part of the active set, the
  surrogate is removed from the set, but all of its
  children become active.
  Without such an additional sweep over the active set, all triangles of a node
  could be active plus the children of one triangle which implies that we test
  against the children triangle set plus their surrogate triangle.
  Restricting the node triangle cardinality for surrogate levels to one renders
  the additional clean-up unneccessary. 
\end{implementation}

  \section{\replaced[id=R1]{Runtime results}{Results}}
\label{section:results}
 Our algorithms yield correct results (Lemma \ref{lemma:multiresolution:explicit}, Corollary 
 \ref{corollary:multiresolution:implicit:acceleration} and Lemma
 \ref{lemma:implicit-multiresolution:correctness}), but they do not provide an
 efficiency guarantee.
We hence collect runtime results\added[id=Add]{, i.e.~gather empirical
evidence. Real-world experiments benchmarked against measurements remain
out-of-scope for the present paper. We furthermore continue to focus on the
actual collision detection and neglect the impact of different time step
sizes---in particular comparisions between implicit and explicit schemes
facilitating different stable time step choices---the cost of a coarse-grain
neighbour search via a grid, e.g., and notably the construction cost for the
surrogates which are done offline prior to the simulation run}.
All experiments are \replaced[id=Add]{run}{ran} on Intel Xeon E5-2650V4
(Broadwell) chips in a two socket configuration with $2 \times 12$ cores.
They run at 2.4~GHz, though
TurboBoost can increase this up to 2.9~GHz.
However, a core
executing AVX(2) instructions will fall back to a reduced frequency (minimal
1.8~GHz) to stay within the TDP limits
\cite{Charrier:2019:Energy}.

Our node has access to 64~GB TruDDR4 memory, which is connected via 
a hierarchy of three inclusive caches.
They host $12\times (32+32)$ KiB, $12 \times 256$ KiB or $12 \times 2.5$ MiB,
respectively.
We obtain around 109 GB/s in the Stream TRIAD
\cite{McCalpin:1995:Stream} benchmark on the node which translates into 
4,556 MB/s per core. 
The node has a theoretical single precision
peak performance between $2.4$ (non-AVX mode and baseline speed)
and $46.4$ Gflop/s per core (\texttt{AVX 2.0 FMA3} with full turbo
boost).
All of our calculations are ran in single precision.
They are translated with the Intel 19 update 2 compiler and use
the flags \texttt{-std=c++17 -O3 -qopenmp -march=native -fp-model fast=2},
i.e.~we tailor them to the particular instruction set.

All presented performance counter data are read out through LIKWID \cite{Treibig:2010:Likwid}.
DEM codes are relatively straightforward to parallelise as their
particle-particle interaction is strongly localised: 
We can combine grid-based parallelism (neighbour cells) with an additional
parallelisation over the particle pairs \cite{Krestenitis:17:FastDEM}.
The load balancing of these concurrency dimensions however remains challenging.
As our ideas reduce the comparison cost algorithmically yet do not alter
the concurrency character, we stick (logically) to single core experiments to
avoid biased measurements due to parallelisation or load balancing overheads.
Yet, we artificially scale up the setup by replicating the computations per
node over multiple OpenMP threads whenever we present real runtime data or
machine characteristics, and then break down the data again into cost per
replica per core.
This avoids that simple problems fit into a particular cache or that
memory-bound applications have exclusive access to two memory
controllers.

\subsection{Experimental setup}
\replaced[id=R2]{
 We work with two simple benchmark setups, before we validate our
 results for large numbers of particles.
}{ 
 We work with two different
 experimental setups. 
} In the \emph{particle-particle}
setup, we study two spherical objects which are set on direct collision trajectory.
They bump into each other, and then separate again.
The setup yields three computational phases:
While the particles approach, there is no collision and no forces act on the
particles as we neglect gravity.
When they are close enough, the particles exchange forces and the system becomes
very stiff suddenly, before the objects repulse each other again and separate.
We focus exlusively on the middle phase.
Throughout this approach-and-contact situation, the algorithmic complexity of
the contact detection is in 
$\mathcal{O}(|\mathbb{T}|^2)$, as we assume that both particles have
the same triangle count.

In the \emph{particle-on-plane
scenario}, we drop a spherical object onto a tilted
\replaced[id=Add]{plane}{plate}.
The particle hits the \replaced[id=Add]{plane}{plate}, bumps back in a slightly tilted angle, i.e.~with a
rotation, and thus hops down the \replaced[id=Add]{plane}{plate}.
This problem yields free-fall phases which take turns with stiff in-contact
situations.
Furthermore, the area of the free particles which is subject to potential
contacts changes all the time as the particle starts to rotate,
and the contacts result from a complex geometry consisting of many triangles compared to a simplistic
geometry with very few triangles.
The underlying computational complexity is  in $\mathcal{O}(|\mathbb{T}|)$.

 In the \emph{grid scenario}, \added[id=R2]{we finally arrange 24,576 spheres in
 a Cartesian grid. Each particle slightly overlaps the epsilon region of it's
 neighbours. As there is no ground plane or gravity, the particles ``float''
 in space.
 Due to the regular particle layout, the interaction pattern 
 yields a Cartesian topology, i.e.~each particle collides with four other
 particles initially.
}

\replaced[id=Add]{Our codes}{For the particles, we} work exclusively with
\emph{sphere-like particle shapes}, which result from a randomised parameterisation:
We decompose the sphere with radius 1 into
$|\mathbb{T}|$ triangles. 
If not stated otherwise,
$|\mathbb{T}|=1,280$.
\added[id=R2]{In the first two scenarios} the vertices on the sphere which span the triangles are 
subject to a Perlin noise function, which offsets the vertex along the normal direction of the surface.
\replaced[id=Add]{$\eta _r=1$}{$\eta _r=0$} adds no noise and thus yields a perfect, triangulated sphere \added[id=Add]{with radius 1} where
all vertices are exactly $1$ unit away from the sphere's origin.
Otherwise, the per-vertex radius is from
\replaced[id=Add]{$[1,\eta _r]$}{$[1,1+ \eta _r]$}.
As we use a hierarchical noise model, a high $\eta _r $ yields a degenerated
shape which retains a relatively smooth surface.

 For the implicit schemes, we consider the result converged when the
 update to the force and torque applied to every particle underruns a relative
 threshold of 1\%.
 With this accuracy, single precision is sufficient.
 The Picard iterations are subject to damping and acceleration:
 Any update $(dv,d\omega)$ relative to the start configuration of a time step
 results from the weighted average between
 the currently computed forces and the forces of the previous step.
 If forces ``pull'' into one direction over multiple iterations, the 
 updates behind trials become successively bigger.
 If the forces oscillate, these oscillations are diminishing.
 Empirical data suggest that this choice helps us to meet Assumptions
 \ref{assumption:contraction} and \ref{assumption:multiscale-contraction}.
 $\epsilon = 10^{-2}$ is uniformly used on the finest mesh level.
 This is a relative quantity, i.e.~chosen relative to the particle diameter.
 For the plane, we uniformly use $\epsilon = 10^{-2}$.

\subsection{Surrogate properties}
%
%
We first assess our surrogate geometry's properties. 
Our coarsest surrogate model consists of a single triangle.
We compare this triangle's longest edge (diameter) $d_{k_{\text{max}}}$ plus
its corresponding $\epsilon _{k_{\text{max}}}$ value to the radius
$r_{\text{sphere}}=\frac{\eta _r}{2}$ of the bounding sphere of the fine grid object
(Table \ref{table:surrogate-properties:bounding-sphere}).
For the surrogate hierarchy, we use $N_{\text{surrogate}}=8$ as coarsening
factor; a choice we employ throughout the experiments.
 This implies that an object with $|\mathbb{T}|=64$ triangles spans three
 surrogate organised as a tree:
 They host $|\mathbb{T}|=64$ (original model),
 $|\mathbb{T_1}|=64/N_{\text{surrogate}}=8$ and
 $|\mathbb{T_2}|=1$ triangles.
In this first test, we \replaced[id=Add]{approximate low frequency noise by scaling along one axis}{keep the lowest frequency of the Perlin noise only},
i.e.~we \replaced[id=R1]{elongate}{stretch} the sphere along one direction yet
\replaced[id=Add]{do not introduce}{eliminate any further} bumps or extrusions. 
With growing $\eta _r$, we obtain increasingly non-spherical objects resembling
an ellipsoid.
The rationale behind this simplified noise is that we eliminate
non-deterministic effects and study the dominant sphere distortion effects.

\ifthenelse{ \boolean{presentAllData} }{
\begin{table}[htb]
 \caption{
  Different triangle counts
  \replaced[id=Add]{$|\mathbb{T}|$}{$N_{\text{triangles}}$} per spherish object
  plus different magnitudes of imposed Perlin noise $\eta _r$.
  Per setup, we study the top level surrogate which contains one
  triangle and compare the
  maximum triangle diameter plus its halo size against the
  bounding sphere diameter of the underlying geometry.
  \label{table:surrogate-properties:bounding-sphere}
 }
 \begin{center} 
 {\footnotesize
 \begin{tabular}{|r|rrr|rrr|rrr|}
  \hline
  & \multicolumn{3}{|c|}{$N_{\text{triangles}}=20$}
  & \multicolumn{3}{|c|}{$N_{\text{triangles}}=320$}
  & \multicolumn{3}{|c|}{$N_{\text{triangles}}=1,280$}
  \\
  $\eta _r $
  & $d_{k_{\text{max}}}$
  & $\epsilon _{k_{\tet{max}}}$
  & $d_{\text{sphere}}$
  & $d_{k_{\text{max}}}$
  & $\epsilon _{k_{\text{max}}}$
  & $d_{\text{sphere}}$
  & $d_{k_{\text{max}}}$
  & $\epsilon _{k_{\text{max}}}$
  & $d_{\text{sphere}}$
  \\
  \hline
  0.0
  & 0.29 & 0.72 & 1.00 & 0.38 & 0.95 & 1.00 & 0.95 & 0.84 & 1.00
  \\
  0.1
  & 0.29 & 0.78 & 1.20 & 0.64 & 0.86 & 1.20 & 0.30 & 1.06 & 1.20
  \\
  0.2
  & 0.28 & 0.77 & 1.40 & 0.48 & 0.93 & 1.40 & 0.40 & 0.99 & 1.40
  \\
  0.4
  & 0.44 & 0.89 & 1.80 & 0.54 & 1.28 & 1.40 & 0.54 & 1.04 & 1.80
  \\
  0.8
  & 0.77 & 0.99 & 2.60 & 1.16 & 1.21 & 2.60 & 1.60 & 1.37 & 2.60
  \\
  \hline
 \end{tabular}
 }
 \end{center}
\end{table}

\begin{table}[htb]
 \caption{
  Different triangle counts \replaced[id=Add]{$|\mathbb{T}|$}{$N_{\text{triangles}}$} per spherish object plus
  different magnitudes of imposed Perlin noise $\eta _r$.
  Per setup, we study the top level surrogate which contains one
  triangle and compare the
  maximum triangle diameter plus its halo size against the
  bounding sphere radius with the same origin as the underlying geometry expanded
  to encompass the maximum noise.
  \label{table:surrogate-properties:bounding-sphere}
 }
 \begin{center} 
 {\footnotesize
 \begin{tabular}{|r|rrr|rrr|rrr|}
  \hline
  &
  \multicolumn{3}{|c|}{\replaced[id=Add]{$|\mathbb{T}|=20$}{$N_{\text{triangles}}=20$}}
  &
  \multicolumn{3}{|c|}{\replaced[id=Add]{$|\mathbb{T}|=320$}{$N_{\text{triangles}}=320$}}
  &
  \multicolumn{3}{|c|}{\replaced[id=Add]{$|\mathbb{T}|=1,280$}{$N_{\text{triangles}}=1,280$}}
  \\
  $\eta _r $ 
  & $d_{k_{\text{max}}}$
  & $\epsilon _{k_{\text{max}}}$
  & $r_{\text{sphere}}$
  & $d_{k_{\text{max}}}$
  & $\epsilon _{k_{\text{max}}}$
  & $r_{\text{sphere}}$
  & $d_{k_{\text{max}}}$
  & $\epsilon _{k_{\text{max}}}$
  & $r_{\text{sphere}}$
  \\
  \hline
  0.0
  & 0.09 & 0.55 & 0.50 & 0.09 & 0.50 & 0.50 & 0.08 & 0.52 & 0.50
  \\
  0.1
  & 0.09 & 0.54 & 0.60 & 0.09 & 0.54 & 0.60 & 0.09 & 0.55 & 0.60
  \\
  0.2
  & 0.10 & 0.61 & 0.70 & 0.09 & 0.58 & 0.70 & 0.10 & 0.61 & 0.70
  \\
  0.4
  & 0.11 & 0.68 & 0.90 & 0.10 & 0.70 & 0.90 & 0.12 & 0.68 & 0.90
  \\
  0.8
  & 0.13 & 0.77 & 1.30 & 0.14 & 0.87 & 1.30 & 0.13 & 0.94 & 1.30
  \\
  \hline
 \end{tabular}
 }
 \end{center}
\end{table}

}
{

\begin{table}[htb]
 \caption{
  Different triangle counts
  \replaced[id=Add]{$|\mathbb{T}|$}{$N_{\text{triangles}}$} per spherish object
  scaled along one axis by a factor of $\mu$.
  Per setup, we study the top level surrogate which contains one
  triangle and compare the
  maximum triangle diameter plus its halo size against the
  bounding sphere radius.
  \added[id=Add]{Here we only report on the increase in $\epsilon$ for the coarsest surrogate ie.~at the finest level $\epsilon=0$.}
  \label{table:surrogate-properties:bounding-sphere}
 }
 \begin{center} 
 {\footnotesize
 \begin{tabular}{|r|rr|rr|rr|r|}
  \hline
  &
  \multicolumn{2}{c|}{\replaced[id=Add]{$|\mathbb{T}|=80$}{$N_{\text{triangles}}=80$}}
  &
  \multicolumn{2}{c|}{\replaced[id=Add]{$|\mathbb{T}|=320$}{$N_{\text{triangles}}=320$}}
  &
  \multicolumn{2}{c|}{\replaced[id=Add]{$|\mathbb{T}|=1,280$}{$N_{\text{triangles}}=1,280$}}
  &
  \\
  \replaced[id=Add]{$\eta _r$}{$\mu $} 
  & $d_{k_{\text{max}}}$
  & $\epsilon _{k_{\text{max}}}$
  & $d_{k_{\text{max}}}$
  & $\epsilon _{k_{\text{max}}}$
  & $d_{k_{\text{max}}}$
  & $\epsilon _{k_{\text{max}}}$
  & $r_{\text{sphere}}$
  \\
  \hline
  1.0
  & 0.09 & 0.49 & 0.09 & 0.50 & 0.08 & 0.52 & 0.50
  \\
  1.2
  & 0.10 & 0.55 & 0.10 & 0.56 & 0.10 & 0.56 & 0.60
  \\
  1.4
  & 0.34 & 0.54 & 0.12 & 0.65 & 0.11 & 0.66 & 0.70
  \\
  1.8
  & 0.14 & 0.84 & 0.89 & 0.53 & 1.35 & 0.51 & 0.90
  \\
  2.6
  & 1.54 & 0.58 & 2.24 & 0.51 & 2.36 & 0.52 & 1.30
  \\
  \hline
 \end{tabular}
 }
 \end{center}
\end{table}
}

\begin{figure}[htb]
 \begin{center}
  \includegraphics[width=.7\textwidth]{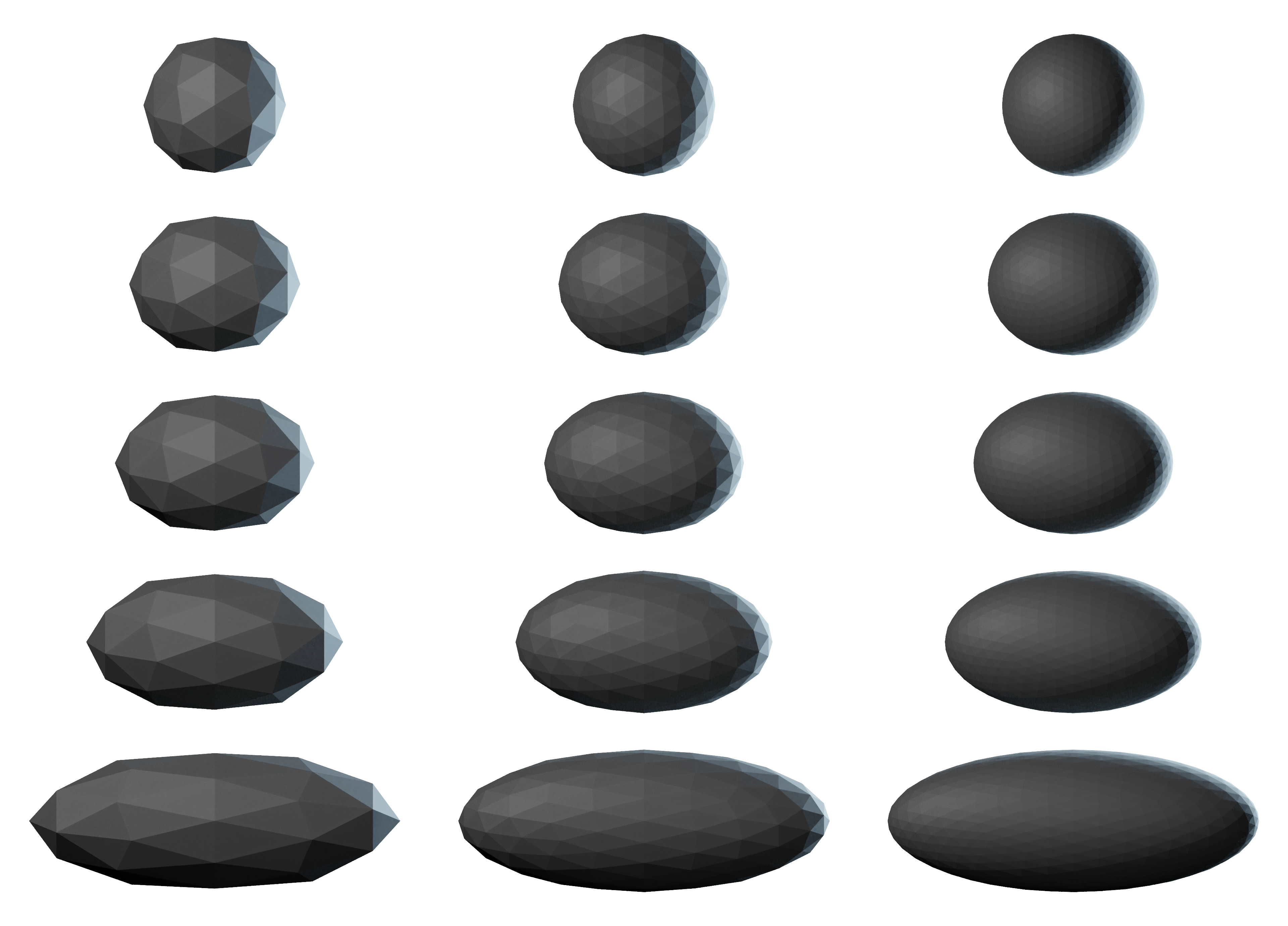}
 \end{center}
 \vspace{-0.3cm}
 \caption{
   A series of objects created from unit spheres with increasing level of detail
   ($80 \leq |\mathbb{T}| \leq 1,280$ from left to right)
   and a scale factor ($1.0 \leq \eta _r \leq 2.6$ from top to bottom).
  \label{figure:surrogate:sphereish}
 }
\end{figure}

The combination of $d_{k_{\text{max}}}$ and $\epsilon
_{k_{\text{max}}}$ characterises the shape of our coarsest surrogate model.
A large diameter relative to a small halo size describes a disc-like object.
A small diameter relative to a large halo size describes a sphere-like object. 
Different triangle counts for the fine grid model allow us to assess the
impact of the level of detail of the fine grid mesh onto the resulting
coarsest surrogate geometry.

%
%
Our surrogate model of choice on the coarsest level
(cmp.~the penalty term in (\ref{equation:surrogate-triangle}) of the Appendix)
almost degenerates to a point if the underlying triangulated geometry approximates a sphere.
\replaced[id=Add]{
 It approaches a bounding sphere.
}{
 It can not totally degenerate as we penalise triangle
 degeneration in (\ref{equation:surrogate-triangle}).
}
The triangle count approximating a spherical object does not have a significant qualitative or quantitative impact
on this characterisation of the coarsest surrogate triangle.
Once the triangulated mesh becomes less spherical, the surrogate triangle starts to
align with the maximum extension of the fine mesh.
It spreads out within the geometry along the geometry's longest diameter; 
an effect that is the more distinct the higher the fine geometry's triangle
count.
The halo layer $\epsilon _{k_{\text{max}}}$ around the surrogate triangle, 
which is analogous to a sphere's radius if the surrogate triangle approaches 
a point, remains in the order of $r=0.5$.
This is the radius of the original unit sphere (\replaced[id=Add]{$\eta _r = 1$}{$\eta _r = 0$}).

%
%
For a close-to-spherical geometry, our volumetric surrogate model
never exceeds 135\% of the bounding sphere volume ($|\mathbb{T}|=1,280$).
For the highly non-spherical cases ($\eta _r = 2.6$) our surrogate 
volume can be as little as $37\%$ ($|\mathbb{T}|=80$) 
of the simple bounding sphere volume.
This advantageous property results from the observation that a growth of
$d_{k_{\text{max}}}$ anticipates any extension of the geometry, 
while the $\epsilon _{k_{\text{max}}}$ ensures that the minimal geometry
diameter, which is at least as large as the original sphere, remains covered by the
surrogate triangle plus its halo environment.

\begin{observation}
 For highly non-spherical sets of triangles, our surrogate formalism yields advantageous
 representations.
 For spherical observations, it resembles the bounding sphere.
 This holds for all levels of the surrogate cascade.
\end{observation}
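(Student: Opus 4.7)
The plan is to decompose the observation into three sub-claims and address each with a combination of structural reasoning about the surrogate construction and the empirical evidence already tabulated in Table~\ref{table:surrogate-properties:bounding-sphere}: (i)~for highly non-spherical input, the enclosed volume of a surrogate triangle plus its $\epsilon$-halo is smaller than that of the minimal bounding sphere of the same triangle subset; (ii)~for spherical input, the surrogate degenerates into an object whose $\epsilon$-halo is essentially a bounding sphere; (iii)~the two statements carry over to every level $k$ of the cascade. The natural metric is a direct comparison of the volume of $t_{k}^{\epsilon}$ to the volume of the bounding sphere of the triangle set represented by the corresponding node in $\mathcal{T}$, evaluated on the data of Table~\ref{table:surrogate-properties:bounding-sphere} and the construction from the Appendix.

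For the coarsest level, I would argue the spherical case by invoking the minimisation problem that produces the surrogate triangle (the penalty formulation of~(\ref{equation:surrogate-triangle})). On an approximate sphere, the cost is rotationally near-symmetric, so no direction of the triangle is favoured; the minimiser drives $d_{k_{\max}}$ toward the degeneration penalty floor while $\epsilon_{k_{\max}}$ grows until it covers the ambient radius. The top row of the table confirms $d_{k_{\max}} \approx 0.09$ and $\epsilon_{k_{\max}} \approx 0.5 = r_{\text{sphere}}$, so the Minkowski sum is indistinguishable from a bounding sphere up to a negligible triangle thickness. For the non-spherical case, the same minimisation prefers a triangle aligned with the longest axis, so $d_{k_{\max}}$ absorbs the elongation while $\epsilon_{k_{\max}}$ stays comparable to the minor radius. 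The final row confirms this: even for $\eta_r=2.6$ the halo remains $\approx 0.5$ while $d_{k_{\max}}$ tracks the stretched diameter, which, combined with the tabulated bounding sphere radius $r_{\text{sphere}}=1.3$, yields the $37\,\%$ volume reduction quoted in the main text.

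To lift both properties through the whole cascade, I would use a structural induction on the depth of $\mathcal{T}$. Each non-root node is constructed by the very same minimisation applied to the union of the triangles under its children; therefore every node is itself a coarsest surrogate for a sub-geometry. The spherical and non-spherical dichotomy at the root then applies verbatim to every node when restricted to the sub-geometry it represents, giving the ``for all levels of the surrogate cascade'' clause.

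The hard part will be that the observation is qualitative, so any fully rigorous proof would have to quantify how closely the minimiser of~(\ref{equation:surrogate-triangle}) approaches degeneracy as a function of the asphericity of the input and of the degeneration penalty weight. Pinning this down in closed form seems infeasible because the noise model and the penalty weight interact non-trivially; I would therefore present the argument as a plausibility reasoning grounded in the construction, and rely on Table~\ref{table:surrogate-properties:bounding-sphere} and Figure~\ref{figure:surrogate:sphereish} as empirical validation rather than chasing a tight analytical bound.
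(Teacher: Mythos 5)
Your decomposition into the spherical case, the non-spherical case, and the lift to all levels is sound, and your handling of the first two sub-claims matches what the paper does: you both read the dichotomy off the structure of the minimisation in (\ref{equation:surrogate-triangle}) (degeneration toward a small triangle with a sphere-like halo when the cost is rotationally near-symmetric, versus alignment with the longest axis when it is not) and you both anchor the argument in Table~\ref{table:surrogate-properties:bounding-sphere}, including the $135\%$ and $37\%$ volume figures. The paper likewise treats this as an empirical observation rather than a theorem, so your candid remark that no tight analytical bound is feasible is exactly the paper's own stance.

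Where you genuinely diverge is the third sub-claim. You lift the dichotomy through the cascade by a structural induction: every node of $\mathcal{T}$ is, by construction, a coarsest surrogate for the sub-geometry below it, so the root-level dichotomy applies verbatim at each node. That is a cleaner, more formal route than the paper takes. The paper instead lifts the claim by reading the table \emph{across levels}: it identifies fine tree nodes with the small-$|\mathbb{T}|$, large-$\eta_r$ rows of the table (because a small patch of a curved surface is locally highly non-spherical) and coarse tree nodes with the large-$|\mathbb{T}|$ rows. This buys the paper an extra practically useful conclusion that your induction does not deliver: even when the \emph{global} object is nearly spherical, the fine-level surrogate triangles still sit in the ``highly non-spherical'' regime, so the advantageous, tight-fitting behaviour kicks in precisely where the early-termination test matters most. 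Your induction only says the dichotomy applies per node; it does not say which branch of the dichotomy tends to dominate at fine levels, which is the practically relevant point the paper draws out. Adding one sentence observing that finer nodes cover progressively flatter, less spherical patches---and hence land on the ``advantageous'' side of the dichotomy---would close that gap.
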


%
%

\noindent
In a surrogate tree, fine resolution tree nodes (surrogate triangles) are
characterised by the low triangle count measurements
in Table~\ref{table:surrogate-properties:bounding-sphere}
 where localised patches are highly non-spherical (large $\eta _r$). 
Surrogate triangles belonging to coarser levels inherit characteristics corresponding to
larger $|\mathbb{T}|$.
We conclude that our triangle-based multiresolution approach
is particularly advantageous as an early termination criterion (``there is certainly no
collision'') on the rather fine surrogate resolution levels within the
surrogate tree, or is overall tighter fitting than bounding sphere
formalisms for non-spherical geometries.

\subsection{Hybrid single level contact detection}
\label{subsection:hybrid-single-level}
%
%
Even though our multiscale approach intends to reduce the number of distance
calculations, a high throughput of the overall algorithm continues to 
hinge on the efficiency of the core distance calculation.
We hence continue with studies around the explicit Euler where we omit the
multiscale hierarchy.
We work with the finest particle mesh representation only.

%
%
The assessment of the core comparison efficiency relies on our
\replaced[id=Add]{sphere-on-plane}{sphere-on-plate} and the
\replaced[id=R1]{particle-particle}{two-particles} setup.
They represent two extreme cases of geometric comparisons:
With the plane, a complex particle with many triangles
hits very few triangles.
 As long as the triangles spanning the plane are large relative to the particle
 diameter, it is irrelevant how the plane is modelled, i.e.~if it consists of
 solely one or two triangles or an arrangement of multiple triangles.
 The multiscale algorithms assymptotically approach a $ | \mathbb{T} | : 1 $
 comparison with $\mathbb{T}$ given by the particle.
When we collide two particles---we use the same triangle count for both---we
 can, in the worst case, run into a $ | \mathbb{T} | : |
 \mathbb{T} | $ constellation.
 For both setups, we use strictly spherical particles ($\eta _r=1$).

%
%
\replaced[id=Add]{
 For the triangle-triangle comparisons,
}{
 On the algorithm side, 
}
two code variants are on the table:
We can run 
\replaced[id=Add]{
 the comparison-based (baseline) algorithm,
}{
 solely run a comparison-based algorith,
} or we can use the hybrid code
variant which runs four steps of the iterative scheme before it checks if the two last iterates of
the contact point differ by more than $C\epsilon _h$\replaced[id=Add]{.
 We use $C \approx 1$.
}{ with $C
\approx 1$ and $\epsilon _h = 0.01$; relative to a baseline particle diameter of 1.0 subject
to added noise.
}
If the \replaced[id=R3]{difference}{differerence} exceeds the threshold, our algorithm assumes that the code
has not converged,
and hence reruns the comparison-based code to obtain a valid
contact assessment.
The comparison-based code variant is 4-way vectorised and relies on Intel
intrinsics.
The hybrid variant is vectorised over batches of eight packed triangle
pairs using an OpenMP \texttt{simd} annotation.

\begin{table}[htb]
 \caption{
  \replaced[id=Add]{Particle-particle scenario.}{Particle collision scenario
  (top) and particle-on-plane (bottom)}.
  We compare a comparison-based realisation \added{(top)} against a hybrid realisation \added{(bottom)}.
  Per setup, we present the time-to-solution ([t]=s) per Euler step, i.e.~one
  run through all possible triangle combinations,
  and we augment these data with MFlop/s rates split up into scalar and
  vectorised contributions.
  Vector calculations are categorised as 128 bit packed (SSE) 
  or 256 bit packed (AVX) for four and eight simultaneous 
  32 bit floating point operations respectively.
  For the hybrid setup, we finally quantify how many triangle pairs had to be
  checked a posterio, i.e.~as fallback, by the comparison-based algorithm. 
  This runtime is included in the data.
  All measurements are given as the average per core.
  \label{table:hybrid-vs-comparison-particle-particle}
 }
 \begin{center}
 \footnotesize
 \begin{tabular}{|cr|rrrrr|}
\hline
 & & & & Packed & Packed &
 \\
 & $|\mathbb{T}|$
 & Runtime & Scalar & 128B & 256B & Fallback
 \\
\hline
\multirow{3}{*}{\STAB{\rotatebox[origin=c]{90}{Comp.}}} & 12 
 & $6.59 \cdot 10^{-2}$ & $1.52 \cdot 10^{-2}$ & $3.22 \cdot 10^{3}$ & &
 \\
& 36 
 & $5.17 \cdot 10^{-2}$ & $1.90 \cdot 10^{-3}$ & $3.46 \cdot 10^{3}$ & &
 \\
& 140 
 & $3.80 \cdot 10^{-1}$ & $3.00 \cdot 10^{-4}$ & $3.18 \cdot 10^{3}$ & &
 \\
& 1,224 
 & $4.35 \cdot 10^{1~~}$ & $0.00 \cdot 10^{0~~}$ & $3.04 \cdot 10^{3}$ & &
 \\
\hline
\multirow{3}{*}{\STAB{\rotatebox[origin=c]{90}{Hybrid}}} & 12
 & $4.99 \cdot 10^{-2}$ & $6.27 \cdot 10^{1~~}$ & $1.11 \cdot 10^{3}$ & $1.07 \cdot 10^{4}$ & 7.7\%
 \\
& 36
 & $2.76 \cdot 10^{-2}$ & $1.27 \cdot 10^{2~~}$ & $9.69 \cdot 10^{2}$ & $1.45 \cdot 10^{4}$ & 4.5\%
 \\
& 140
 & $1.83 \cdot 10^{-1}$ & $1.97 \cdot 10^{2~~}$ & $3.25 \cdot 10^{2}$ & $1.84 \cdot 10^{4}$ & 1.2\% 
 \\
& 1,224
 & $1.99 \cdot 10^{1~~}$ & $2.42 \cdot 10^{2~~}$ & $5.30 \cdot 10^{1}$ & $2.10 \cdot 10^{4}$ & 0.028\%
 \\
\hline
\end{tabular}
 \end{center}
\end{table}

\begin{table}[htb]
 \caption{
  Experiments from Table \ref{table:hybrid-vs-comparison-particle-particle} for
  the particle-on-plane setup.
  \label{table:hybrid-vs-comparison:slope}
 }
 \begin{center}
 \footnotesize
 \begin{tabular}{|cr|rrrrr|}
\hline
 & & & & Packed & Packed &
 \\
 & $|\mathbb{T}|$
 & Runtime & Scalar & 128B & 256B & Fallback
 \\
\hline
\multirow{3}{*}{\STAB{\rotatebox[origin=c]{90}{Comp.}}} & 12  
 & $2.60 \cdot 10^{-2}$ & $3.87 \cdot 10^{-2}$ & $3.33 \cdot 10^{3}$ & &
 \\
& 36  
 & $6.50 \cdot 10^{-2}$ & $1.96 \cdot 10^{-2}$ & $3.85 \cdot 10^{3}$ & &
 \\
& 140  
 & $1.84 \cdot 10^{-1}$ & $1.28 \cdot 10^{-2}$ & $4.09 \cdot 10^{3}$ & &
 \\
& 1,224 
 & $1.98 \cdot 10^{0~~}$ & $2.90 \cdot 10^{-3}$ & $4.27 \cdot 10^{3}$ & &
 \\
\hline
\multirow{3}{*}{\STAB{\rotatebox[origin=c]{90}{Hybrid}}} & 12
 & $2.80 \cdot 10^{-2}$ & $1.06 \cdot 10^{1~~}$ & $1.25 \cdot 10^{3}$ & $9.34 \cdot 10^{3}$ & 6.3\%
 \\
& 36
 & $6.70 \cdot 10^{-2}$ & $1.56 \cdot 10^{1~~}$ & $1.34 \cdot 10^{3}$ & $1.19 \cdot 10^{4}$ & 5.1\%
 \\
& 140
 & $1.79 \cdot 10^{-1}$ & $2.15 \cdot 10^{1~~}$ & $1.34 \cdot 10^{3}$ & $1.35 \cdot 10^{4}$ & 4.8\%
 \\
& 1,224
 & $1.85 \cdot 10^{0~~}$ & $3.03 \cdot 10^{1~~}$ & $9.32 \cdot 10^{2}$ & $1.48 \cdot 10^{4}$ & 3.6\%
 \\
\hline
\end{tabular}
 \end{center}
\end{table}

%
%
Our hybrid approach outperforms a sole comparison-based approach robustly for
the \replaced[id=Add]{
$ | \mathbb{T} | : | \mathbb{T} | $
}{
$ | N_{\text{triangles}} | : | N_{\text{triangles}} | $
}
setups.
For strongly ill-balanced triangle counts, the insulated comparison-based
approach is superior (Table~\ref{table:hybrid-vs-comparison-particle-particle}).
The comparison-based code variant is not able to benefit from AVX at all (not
shown), while the hybrid AVX usage increases with increasing
triangle counts.
 We end up with up to 40--45\% ``turbo-mode'' peak performance which we have to
 calibrate with the AVX frequency reduction \cite{Charrier:2019:Energy}.
The relative number of fallbacks,
i.e.~situations where the iterative scheme does not converge within four
iterations, decreases with growing geometry detail, while the same effect is not
as predominant for the particle-on-plane scenario.

%
%
Our data confirm the superiority of the hybrid approach for the
particle-particle comparisons
\cite{Krestenitis:15:FastDEM,Krestenitis:17:FastDEM}.
They confirm that the
approximation of the Hessian does not significantly harm the robustness, even
though the number of fallbacks becomes non-negligible.
Our arithmetic intensity \deleted[id=Add]{dominated
 by the 
 $ \mathcal{O}(| N_{\text{triangles}} |^2) $ for the $ |
 N_{\text{triangles}} | : | N_{\text{triangles}} | $ 
 algorithm in the
 sphere-to-sphere setup as opposed to $ \mathcal{O}(| N_{\text{triangles}} |) $
} determines how much improvement results from the hybrid strategy.
\replaced[id=Add]{
 It is significantly higher for the particle-particle setup as opposed to the
 particle-on-plane.
 Therefore, only the former
}{
 The latter
} prospers through vectorisation.
 We see an increased fallback for decreased geometric detail due to the larger
 relative epsilon, which is used to identify fallback conditions.

%
%
\begin{observation}
 As long as we do not compare extreme cases (single triangle vs.~a lot of
 triangles), the hybrid approach is faster.
 It is thus reasonable to employ it on all levels of the surrogate tree, even
 though it might be reasonable to skip iterative \replaced[id=Add]{comparisons}{comparisions} a priori if the
 coarsest surrogate level is involved.
 The latter observation does not result from a mathematical ``non-robustness''
 but is a sole machine effect.
\end{observation}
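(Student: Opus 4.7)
The plan is to justify the observation empirically rather than deductively, because the claim concerns wall-clock performance on a fixed hardware platform and not algorithmic correctness. My proposal triangulates evidence from Tables \ref{table:hybrid-vs-comparison-particle-particle} and \ref{table:hybrid-vs-comparison:slope} in three steps, then comments on the one real obstacle.

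First, I would argue ``hybrid is faster when the workload is balanced'' from the Runtime column of Table \ref{table:hybrid-vs-comparison-particle-particle}: at every $|\mathbb{T}|$ the hybrid time is strictly smaller than the comparison-based time, and the ratio widens with $|\mathbb{T}|$ (roughly a factor of two at $|\mathbb{T}|=1{,}224$). The causal mechanism is immediate from the same table: the 256B packed column vanishes for the comparison-based code and grows monotonically with $|\mathbb{T}|$ for the hybrid code, so the advantage is attributable to AVX throughput that the branch-heavy comparison code cannot exploit. This is also consistent with the arithmetic-intensity reasoning already used in Section~\ref{subsection:hybrid-single-level}.

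Second, I would justify the ``single triangle vs. a lot of triangles'' exception using Table \ref{table:hybrid-vs-comparison:slope}: here the two runtimes are nearly indistinguishable, and the hybrid's 256B throughput saturates at a markedly lower value than in the balanced case. The explanation follows directly from the batched structure of Algorithm \ref{algorithm:triangle-soup}, which packs eight triangle pairs into an AVX vector per sweep: when one operand set is very small, most vector lanes carry no useful work per outer iteration, and the algorithmic savings of the iterative scheme are cancelled by the reduced lane occupancy. The conclusion ``skip iterative comparisons a priori if the coarsest surrogate is involved'' is a direct corollary, since the root surrogate hosts a single triangle by construction.

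Third, I would defend the ``machine effect, not mathematical non-robustness'' clause by pointing to the Fallback column of Table \ref{table:hybrid-vs-comparison:slope}: it stays in the single-digit percent range, i.e.~the iterative sweep converges with essentially the same reliability in the extreme case as in the balanced case, so the slowdown cannot be blamed on the minimisation algorithm itself. The main obstacle is precisely that the observation is hardware-relative: a strictly rigorous formulation would have to fix what ``faster'' means on a given CPU (vector width, AVX frequency throttling \cite{Charrier:2019:Energy}, cache hierarchy), and a portable statement would have to replace ``is faster'' with a machine-independent proxy such as ``achieves higher vector-lane utilisation''---which only implies faster wall-clock time under standard cost-model assumptions. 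I would acknowledge this caveat explicitly rather than attempt to hide it.
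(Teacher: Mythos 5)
Your proposal is correct and matches the paper's own treatment: this observation is justified empirically from Tables~\ref{table:hybrid-vs-comparison-particle-particle} and~\ref{table:hybrid-vs-comparison:slope}, with the hybrid advantage attributed to AVX throughput that the branch-heavy comparison code cannot exploit, the ill-balanced exception attributed to a vectorisation/arithmetic-intensity effect, and the single-digit fallback rates cited as evidence that the iterative minimisation itself remains robust. Your lane-occupancy phrasing is a slightly more concrete variant of the paper's arithmetic-intensity argument, but the structure and conclusions are the same.
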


\subsection{Multiresolution comparisons for explicit time stepping}
%
%
Within an explicit time stepping code, our multiresolution approach promises to
eliminate unnecessary comparisons since it identifies ``no collission''
constellations quickly through the surrogates:
Whenever it compares two geometries, the algorithm runs through the resolution
levels top-down (from coarse to fine).
The monotonicity of the surrogate definition implies that we can stop
immediately if there is no overlap between two surrogates.
\replaced[id=Add]{The}{Our} code either employs the pure geometry-based approach
or the hybrid strategy on all levels.
\added[id=R1]{
 We focus on spherical particles ($\eta _r$=1) discretised by 1,224
 triangles, and average over 100 time steps such that we 
 run into no-collision phases for the particle-particle setup
 and see the particle roll down the plane for particle-on-plane. 
}

\begin{table}[ht]
 \caption{
  \replaced{Time-to-solution ([t]=s) and number of triangle distance
  checks}{Measurements} for an explicit Euler \replaced[id=R1]{step}{over 100
  time steps} for the particle-particle collision (top) and the
  particle-on-plane setup (bottom).
  We compare a comparision-based setup to a hybrid approach on a single level
  vs.~a surrogate hierarchy which is traversed from coarse to fine.
  For the hybrid configuration, we \replaced[id=Add]{present the number of
  comparison-based fallbacks vs.~the number of iterative comparisons.
  Each iterative comparison of two triangles consists of four Newton
  steps.
  Only if these four steps fail to converge, the algorithm issues the
  comparison-based postprocessing.
  Both the iterative comparisons plus the (fewer) comparison-based
  postprocessing steps determine the runtime (right column).
  }{show both the
  number of iterative sweeps of four iterations and the plain triangle-to-triangle comparisons.}
  \label{table:multiscale:explicit:comparison-count}
 }
\centering
{\footnotesize
\begin{tabular}{|p{1.8cm}|rr|rrr|}
\hline
 & \multicolumn{2}{|c|}{Comparison-based}
 & \multicolumn{3}{|c|}{Hybrid}
 \\ 
Method & \#tri.~comp. & Runtime & \#tri.~comp. & \#iterative & Runtime \\
\hline
Single level 
 & $1.50 \cdot 10^{6}$ & $3.96 \cdot 10^{0~~}$
 & $1.66 \cdot 10^{3}$ & $1.50 \cdot 10^{6}$ & $1.87 \cdot 10^{0~~}$
 \\
Surrogate hierarchy 
 & $8.20 \cdot 10^{3}$ & $2.60 \cdot 10^{-2}$
 & $1.44 \cdot 10^{2}$ & $7.64 \cdot 10^{3}$ & $1.80 \cdot 10^{-2}$ \\
\hline
Single level 
 & $6.27 \cdot 10^{5}$ & $1.68 \cdot 10^{0~~}$
 & $1.57 \cdot 10^{4}$ & $6.27 \cdot 10^{5}$ & $8.00 \cdot 10^{0~~}$\\
Surrogate hierarchy 
 & $5.27 \cdot 10^{3}$ & $7.68 \cdot 10^{-2}$
 & $4.96 \cdot 10^{2}$ & $5.03 \cdot 10^{3}$ & $4.00 \cdot 10^{-2}$\\
\hline
\end{tabular}
}
\end{table}

%
%
Our measurements confirm the superiority of the hybrid scheme in the surrogate
context (Table~\ref{table:multiscale:explicit:comparison-count}):
In line with Section \ref{subsection:hybrid-single-level}, no
multiresolution setup with comparison-based contact detections on surrogate
levels is able to outperform the configurations where all levels are tackled
through the hybrid approach.
Further studies where different variants are used on different (surrogate)
levels are beyond scope.

%
%
In our two-particles scenario, the particles \deleted[id=Add]{each host $1,224$
triangles, and hence} yield $1,224^2$ comparisons per time step if no surrogate
helper data structure is used.
\replaced[id=Add]{
 Our data reflects the quadratic (particle-particle) or linear
 (particle-on-plane) computational complexity.
}{
As we sum up the comparisons over 100 time steps, the $ \mathcal{O}(|
N_{\text{triangles}} |^2) $ complexity delivers exactly the measured total
comparison count.
An analogous argument holds for the sphere-on-plane setup, where the
slope hosts 512 triangles.
}
In both cases, the number of triangle comparisons is reduced by more than an
order of magnitude through the surrogate hierarchy, and the surrogate version
outperforms its single-level baseline robustly.
The hierarchical scheme's additional computational cost (overhead) is
negligible, though it does not significantly alter the ratio of iterative checks
to fallback comparisons in the hybrid scheme.

\begin{observation}
 Our surrogate technique efficiently reduces the number of comparisons between
 two geometries, as \replaced[id=Add]{``no-collision''}{``no-collission''} setups are identified with low
 computational cost.
\end{observation}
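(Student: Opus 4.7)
The plan is to justify this observation by combining the conservative property of surrogates (Definition~\ref{definition:multiresolution:conservative}) with the top-down control flow of Algorithm~\ref{algorithm:multiresolution-contact-detection:explicit-Euler}. First I would formalise the ``no-collision'' regime as the case where $\mathbb{T}^{\epsilon}(p_i) \cap \mathbb{T}^{\epsilon}(p_j) = \emptyset$, and note that conservativeness does not force surrogate overlap in this case---the only implication that runs in the relevant direction is that a non-overlapping surrogate pair forces non-overlapping particles. A properly constructed surrogate cascade will therefore, for well-separated spatial configurations, already exhibit disjoint coarsest surrogates, allowing the outer while-loop of Algorithm~\ref{algorithm:multiresolution-contact-detection:explicit-Euler} to terminate in a single iteration and reset the active sets to $\emptyset$.

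Second I would bound the comparison count in that early-termination case. With $N_{\text{surrogate}}$ as the branching factor and the coarsest level hosting $|\mathbb{T}_{k_{\max}}|$ triangles (often $\mathcal{O}(1)$), exactly $|\mathbb{T}_{k_{\max}}(p_i)| \cdot |\mathbb{T}_{k_{\max}}(p_j)|$ triangle-to-triangle checks are issued, as opposed to the flat baseline's $|\mathbb{T}(p_i)| \cdot |\mathbb{T}(p_j)|$. Definition~\ref{definition:multiresolution:efficient} combined with the cascade reasoning from Section~\ref{section:multiresolution-model} then guarantees that this ratio shrinks substantially as $|\mathbb{T}|$ grows. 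Intermediate cases, in which only a few branches of $\mathcal{T}$ need to unfold because only a few surrogate pairs satisfy the $\epsilon$-overlap test, would be handled by an inductive argument on the tree depth: the set of actively compared triangles grows only proportionally to the geometric locality of potential contacts, not with the total triangle count of the particle.

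The main obstacle will be that this is fundamentally a performance statement rather than a correctness statement, so any ``proof'' must pair the combinatorial bound above with empirical backing. Concretely, one can construct adversarial geometries in which the surrogate halos systematically overlap while the underlying fine geometries never do---thin interlocking shells, or highly elongated bodies whose coarsest surrogate triangle inherits both extreme diameters---in which case the savings may shrink considerably. The cleanest way to close this gap is to couple the worst-case bound with the measurements of Table~\ref{table:multiscale:explicit:comparison-count}, arguing that the surrogate construction described later in the Appendix yields sufficiently tight halos for physically realistic rigid particles to make the ``early termination near the root'' case the dominant one, matching the order-of-magnitude reduction in comparison count that the table reports.
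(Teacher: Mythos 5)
Your proposal is more ambitious than what the paper actually does to support this observation. In the paper, this is an empirical observation justified directly by the measurements in Table~\ref{table:multiscale:explicit:comparison-count}: the authors simply report that the number of triangle comparisons drops by more than an order of magnitude under the surrogate hierarchy (from $\sim 1.5\cdot10^6$ to $\sim 8\cdot10^3$ for the particle--particle case, and from $\sim 6\cdot10^5$ to $\sim 5\cdot10^3$ for particle-on-plane), with negligible additional overhead, and they classify this as consistent with the quadratic / linear complexity of the two scenarios. There is no attempt at a combinatorial bound in the paper; the theoretical underpinnings (conservativeness, Definition~\ref{definition:multiresolution:efficient}, early termination via non-overlapping surrogates) are set up earlier in Sections~\ref{section:multiresolution-model} and \ref{subsection:multiscale:explicit-Euler}, and the observation is the empirical payoff rather than a theorem with a proof.

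That said, your route is sound and in fact tightens the logical structure: you correctly identify that conservativeness licenses the early-termination argument (non-overlapping surrogates imply non-overlapping fine geometry), you give the $|\mathbb{T}_{k_{\max}}|^2$ vs. $|\mathbb{T}|^2$ bound for the pure no-collision case, and you are right that nothing in the definitions forbids adversarial geometries (overlapping halos with non-overlapping fine surfaces) where the savings degrade, which is exactly why the paper can offer an empirical observation rather than a lemma. One caveat: your intermediate-case claim that the active set grows ``only proportionally to the geometric locality of potential contacts'' is plausible and consistent with the paper's Section~\ref{subsection:implicit:multiresolution-acceleration} discussion for the implicit case, but it is not proved anywhere in the paper either; in a full write-up you would need to flag it as an expectation contingent on the tree construction, exactly as you do for the worst case. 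Overall your proposal supplies mechanism that the paper leaves implicit, and arrives at the same conclusion by coupling that mechanism to the same table.
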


\subsection{Multiresolution comparisons for implicit time stepping}
Implicit methods are significantly more stable then their explicit counterpart.
The price to pay for this is an increased computational complexity:
The Picard iterations that we use imply that we have to run the core contact
point detection more often per time step.
The Picard iterations' update of collision point detections imply that the
surrogate tree does not unfold linearly anymore.
While the explicit time stepping algorithm runs through the tree from coarse to
fine, the implicit scheme descends into finer levels yet might, through the
iterative updates of the rotation and position, find alternative tree
parts that have to be taken into account too or instead.
\added[id=R1]{
 The increased stability of implicit time stepping allows users to pick larger
 time step sizes which, in practice, compensate to some degree of the increased
 compute load.
 This problem-specific cost amortisation is beyond scope here, i.e.~we compare
 implicit and explicit schemes with the same time step size.
 All data results from spherical particles and the default triangle counts.
 We average all measurements over 100 time steps and include the ``roll down the
 plane'' situation for the particle-on-plane setup.
}

\begin{table}[ht]
 \caption{ 
  Average number of Picard iterations per time step for our first two scenarios.
  \label{table:multiscale:implicit:Picard-iterations}
 }
\centering
 {\footnotesize
\begin{tabular}{|p{3.2cm}|rr|rr|}
\hline
 & \multicolumn{2}{|c|}{Particle-particle}
 &
 \multicolumn{2}{|c|}{\replaced[id=Add]{Particle-plane}{Sphere-on-plate}}
 \\
Method & Comparison-based & Iterative & Comparison-based & Iterative \\
\hline
Single level or surrogate within Picard& $4.6$ & $4.9$ & $7.1$ & $7.1$ \\
Multiscale Picard & $6.2$ & $6.2$ & $13.0$ & $13.1$\\
\hline
\end{tabular}
}
\end{table}

%
%
\replaced[id=Add]{
 Averaged over 100 time steps,
}{
 Once we study the comparisons over 100 time steps, 
}
we observe that the number
of Picard iterations is small and bounded
(Table~\ref{table:multiscale:implicit:Picard-iterations}).
We study the impact of a switch to the
iterative scheme, with the hybrid fallback on the finest level, and observe that it slightly increases the
Picard iteration count.
The usage of a multiscale method merged into the Picard iterations increases the
iteration count, too.
Both modifications yield flawed contact point
guesses and thus require us to run
more Picard iteration steps overall.
The wrong guesses have to be compensated later on.

\begin{observation}
 Both the iterative approximation of contact points and the ``one Picard step
 before we widen the active set'' strategy increase the total number of required
 Picard iterations.
\end{observation}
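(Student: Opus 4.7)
The plan is to justify this observation by combining a direct reading of Table~\ref{table:multiscale:implicit:Picard-iterations} with a short qualitative argument grounded in the fixed-point character of the Picard loop guaranteed by Assumption~\ref{assumption:contraction}.

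First, I would isolate the two independent claims and match each to a column/row pair in the table. Claim one (iterative approximation inflates Picard counts): compare the ``single level or surrogate within Picard'' row across comparison-based ($4.6$ and $7.1$) and iterative ($4.9$ and $7.1$), and likewise the ``multiscale Picard'' row ($6.2$ vs.~$6.2$, and $13.0$ vs.~$13.1$). Claim two (the intermingled ``one Picard step per level'' strategy inflates Picard counts): compare the ``single level'' row to the ``multiscale Picard'' row column-wise, i.e.~$4.6 \to 6.2$, $4.9 \to 6.2$, $7.1 \to 13.0$, $7.1 \to 13.1$. The monotone inequalities in both directions give the claimed increase; the argument then reduces to explaining why these inequalities are expected rather than a fluke.

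Second, I would supply the causal reasoning. For the iterative approximation: because Algorithm~\ref{algorithm:triangle-soup} truncates the embedded Newton sweeps at a fixed $N_{\text{iterative}}$, the contact points it returns are perturbed versions of the true closest-point minimisers, so the force input to \textsc{calcForces} in Algorithm~\ref{algorithm:algorithmic-framework:implicit-Euler} is a perturbed fixed-point operator. Under the contraction property in Assumption~\ref{assumption:contraction}, a perturbed operator still converges, but the radius of the contraction ball is effectively enlarged, increasing the number of sweeps to fall below the $1\%$ stopping threshold. For the multiscale Picard: early iterations act on surrogate geometry, whose forces are only conservative bounds of the fine-grid forces (Definition~\ref{definition:multiresolution:conservative}); Algorithm~\ref{algorithm:multiresolution-contact-detection:implicit-Euler} therefore spends additional iterations both descending the surrogate tree and correcting the guesses produced on coarser levels, exactly as foreseen in the full-multigrid analogy used to motivate the scheme.

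The main obstacle, as I see it, is that the statement is inherently quantitative and setup-specific: the proof is essentially an empirical one, so I cannot give a sharp bound on the increase. I would therefore hedge by pointing out that the averages are taken over 100 time steps for the two canonical benchmarks (particle-particle and particle-on-plane, with $\eta_r=1$ and $|\mathbb{T}|=1{,}224$), and explicitly flag that the magnitude of the increase depends on both the damping of the Picard loop used to enforce Assumption~\ref{assumption:multiscale-contraction} and on how often the active set is narrowed in the sense of Corollary~\ref{corollary:implicit-multiresolution:correctness}. Under these caveats, the observation reduces to the table entries together with the qualitative perturbation argument above.
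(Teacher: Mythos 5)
Your proposal is correct and matches the paper's own justification: this observation is established empirically by reading off Table~\ref{table:multiscale:implicit:Picard-iterations}, and the paper's accompanying explanation is precisely your causal argument that both modifications yield flawed contact-point guesses which must be compensated by additional Picard sweeps. Your extra framing in terms of a perturbed fixed-point operator is a harmless elaboration of the same idea, and your caveat that the claim is setup-specific and quantitative is appropriate, since the paper offers no bound either.
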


\begin{figure}
 \begin{center}
  \includegraphics[width=0.45\textwidth]{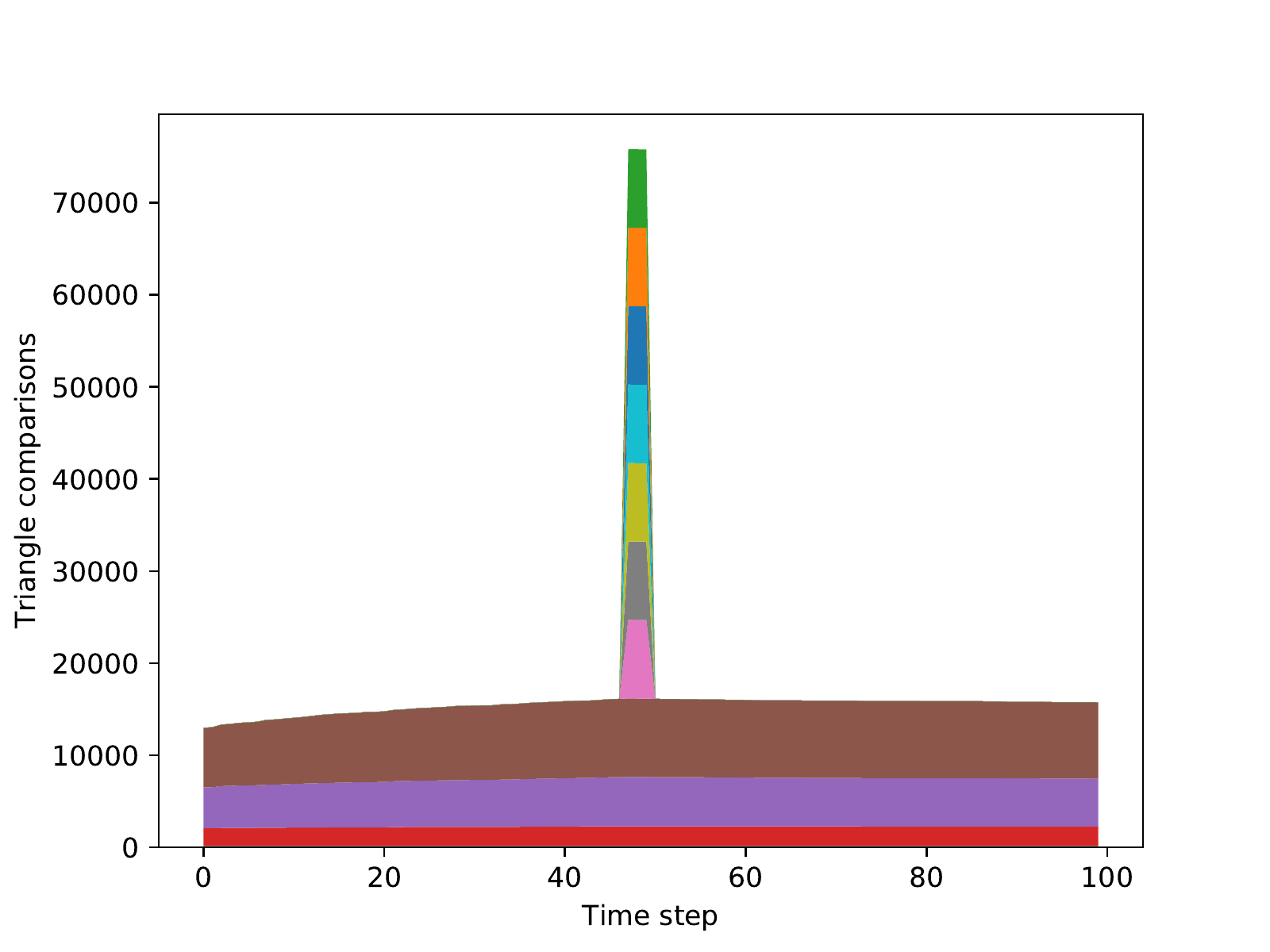}
  \includegraphics[width=0.45\textwidth]{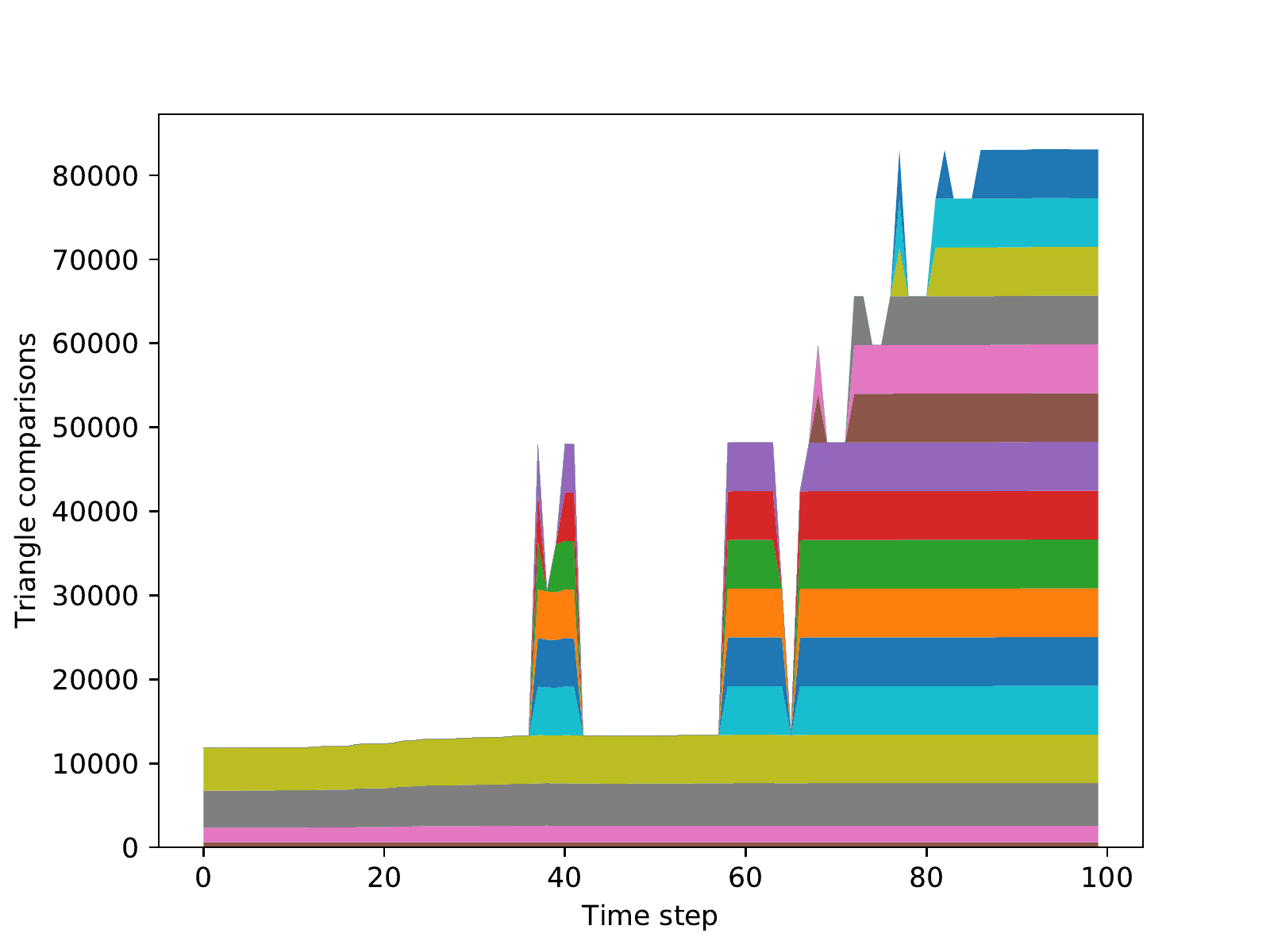}
 \end{center}
 \caption{
  Number of triangle-to-triangle comparisons over time per surrogate
  representation level.
  The data stems from the particle-particle (left) and the
  particle-on-plane setup (right) subject to the implicit time stepping.
  \added[id=R3]{The lowest layer illustrates the triangle comparison count for the first iteration, the next layer for the second iteration and so forth.
  The number of layers corresponds to the total number of iterations.}
  \label{figure:08e_implicit:collision-histogram}
 }
\end{figure}

%
%
\noindent
The surrogate hierarchy yields an efficient early termination criterion for our
collision detection.
If there is no collision, the code does not step down into the fine grid
resolutions.
This property carries over from the explicit to the implicit
algorithm (Figure \ref{figure:08e_implicit:collision-histogram}).
An increase of the computational cost by a factor of 4.6 is acceptable in
return for an implicit scheme.
We however observe that this increase holds for brief point contacts only.
It raises to a factor of 13.1 if contacts persist.
In our example, this happens once the spherical object starts to roll and slide
down the tilted plane.

\begin{table}[ht]
 \caption{
  \replaced[id=R1]{Time per time step ([t]=s])}{Time-to-solution ([t]=s)} and
  triangle distance comparisons for our implicit schemes for the particle-particle collision (top) and  the  particle-on-plane  setup  (bottom).
  \label{table:multiscale:implicit:comparison-count}
 }
\centering
{\footnotesize
\begin{tabular}{|p{1.8cm}|rr|rrr|}
\hline
 & \multicolumn{2}{|c|}{Comparison-based}
 & \multicolumn{3}{|c|}{Hybrid}
 \\ 
Method & \#tri.~comp. & Runtime & \#tri.~comp. & \#iterative & Runtime \\
\hline
Single level
 & $6.89 \cdot  10^{8}$ & $1.69 \cdot 10^{2~~}$
 & $8.78 \cdot 10^{5}$ & $7.34 \cdot 10^{8}$ & $7.44 \cdot 10^{1~~}$\\
Surrogate within Picard
 & $3.97 \cdot 10^{6}$ & $1.08 \cdot 10^{0~~}$
 & $7.14 \cdot 10^{4}$ & $3.70 \cdot 10^{6}$ & $7.20 \cdot 10^{-1}$\\
Multiscale Picard
 & $1.82 \cdot 10^{6}$ & $7.70 \cdot 10^{-1}$
 & $2.25 \cdot 10^{4}$ & $1.70 \cdot 10^{6}$ & $4.70 \cdot 10^{-1}$\\
\hline
Single level 
 & $4.43 \cdot 10^{8}$ & $9.61 \cdot 10^{1~~}$
 & $1.11 \cdot 10^{7}$ & $4.43 \cdot 10^{8}$ & $4.41 \cdot 10^{1~~}$\\
Surrogate hierarchy 
 & $3.59 \cdot 10^{6}$ & $4.80 \cdot 10^{-1}$
 & $3.44 \cdot 10^{5}$ & $3.42 \cdot 10^{6}$ & $2.20 \cdot 10^{-1}$\\
Multiscale Picard
 & $3.50 \cdot 10^{6}$ & $4.90 \cdot 10^{-1}$
 & $3.11 \cdot 10^{5}$ & $3.35 \cdot 10^{6}$ & $1.90 \cdot 10^{-1}$\\
\hline
\end{tabular}
}
\end{table}

%
%
Within our multi-resolution framework, the cost per Picard iteration is not
uniform and constant but depends heavily on the surrogate tree fragments that are used.
The cost are in particular non-uniform for non-simplistic setups.
The growth in Picard iterations (on average) per time step
(Table~\ref{table:multiscale:implicit:Picard-iterations}) increases the number
of triangle-to-triangle checks, compared to the explicit schemes (Table
\ref{table:multiscale:explicit:comparison-count}), by exactly this factor if we
stick to a plain geometry model.
Yet, it does not manifest in an explosion of the
runtime (Table \ref{table:multiscale:implicit:comparison-count}) if we employ
the surrogate trees.
They help to reduce the compute cost dramatically, as we study
only those parts of the surrogate tree which might induce a collision. 
We prune the tree per Picard iteration.

\begin{observation}
 Our multiscale Picard approach is particularly beneficial for strongly
 instationary setups where the topology of particle interactions changes
 quickly.
\end{observation}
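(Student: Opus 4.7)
The plan is to combine an asymptotic cost model for the per-iteration work with a targeted empirical sweep that isolates the dependence on topology change, since the statement is empirical in nature and cannot be reduced to a purely syntactic proof like Lemma~\ref{lemma:multiresolution:explicit}. First I would give a precise meaning to ``strongly instationary'' by counting, between two consecutive time steps, how many triangle pairs enter, leave, or migrate within the active sets $\mathbb{A}(p_i,p_j)$. This yields a scalar instationarity index $\iota$ that we can vary experimentally. I would then express the total per-time-step cost of Algorithm~\ref{algorithm:multiresolution-contact-detection:implicit-Euler} as the product of the average Picard iteration count (already tabulated in Table~\ref{table:multiscale:implicit:Picard-iterations}) and the sum of $|\mathbb{A}(p_i,p_j)|$ across interacting pairs, and compare it term-by-term against the ``surrogate within Picard'' and single-level variants.

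Second I would argue at the model level that, in a stationary phase, the active set quickly reaches a fixed point deep in the tree and the multiscale Picard pays its overhead (additional iterations per Corollary~\ref{corollary:implicit-multiresolution:correctness} and coarse-level checks) without recovering it, whereas in an instationary phase every topology change forces a re-widening whose cost under the multiscale scheme is bounded by the surrogate tree height times $N_{\text{surrogate}}$ rather than by $|\mathbb{T}|^2$. The key quantitative lemma here is that surrogate conservatism (Definition~\ref{definition:multiresolution:conservative}) guarantees each topology change can be detected at the coarsest level on which the affected branch first separates, so the expected unfolding depth scales with $\log |\mathbb{T}|$ and not with $|\mathbb{T}|$.

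Third I would cross-validate against the data already in hand and against a new controlled sweep. The particle-particle setup, which features rapid separation, already shows a larger ratio of ``multiscale Picard'' to ``surrogate within Picard'' speedup than the particle-on-plane setup, where a persistent rolling contact dominates (Table~\ref{table:multiscale:implicit:comparison-count}). To turn this into a proof rather than an anecdote, I would run the grid scenario with a parameter sweep over initial kinetic energy, measure $\iota $ and the runtime ratio $T_{\text{multiscale Picard}}/T_{\text{surrogate within Picard}}$ per step, and demonstrate a monotone relationship.

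The main obstacle will be disentangling three competing effects that all respond to $\iota $: the reduction in triangle comparisons from pruning, the increase in Picard iterations induced by coarse-level guesses that violate Assumption~\ref{assumption:multiscale-contraction}, and the degradation of vectorisation efficiency once active-set batches become small and irregular. A convincing argument must show that the first effect dominates in the high-$\iota $ regime while the other two dominate in the low-$\iota $ regime, and because $|\mathbb{A}|$ depends on geometry in a non-local way, a purely analytical bound is unlikely to be tight; the proof therefore has to be a hybrid of the cost model above and the controlled empirical sweep, with each supporting where the other is weakest.
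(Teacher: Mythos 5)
This is an empirical observation, and the paper does not ``prove'' it in your sense: it simply contrasts the two benchmark scenarios in Table~\ref{table:multiscale:implicit:comparison-count} --- the particle-particle setup (transient contact, roughly a factor-of-two further reduction in comparisons from the multiscale Picard over ``surrogate within Picard'') versus the particle-on-plane setup (persistent rolling contact, essentially no further gain) --- and supplies a mechanistic explanation: the permuted scheme memorises the active set between Picard iterations, saves the progress through coarser resolutions, and relies on narrowing to undo wrong descents, none of which pays off when the active set is invariant across iterations. Your proposal reaches the same conclusion by a genuinely different and more systematic route: an explicit instationarity index $\iota$, a per-iteration cost model, and a controlled sweep over the grid scenario to establish monotonicity in $\iota$. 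That buys falsifiability and would disentangle the three competing effects you correctly identify (pruning gains, extra Picard iterations, vectorisation degradation), which the paper leaves entangled; the paper's two-scenario contrast buys only an existence-style demonstration at the two extremes of $\iota$. Two caveats on your side: the claim that the expected unfolding depth scales like $\log|\mathbb{T}|$ does not follow from conservatism alone --- it depends on the contact being localised relative to the tree's spatial clustering, an assumption the paper makes only informally --- and the extra iterations are attributed in the paper to flawed coarse-level contact guesses rather than to the active-set removal mechanism of Corollary~\ref{corollary:implicit-multiresolution:correctness}, so your cost model should charge that term to Assumption~\ref{assumption:multiscale-contraction} being stressed, not to narrowing per se.
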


%
%
\noindent
Permuting and fusing the Picard iteration loops and the traversal over the
surrogate tree reduces the number of triangle-to-triangle comparisons further.
This observation holds for the particle-particle setup.
It does not hold for the particle-on-plane. 
The advanced version benefits from the fact that we memorise the active set
in-between two Picard iterations:
While the implicit version from Section
\ref{subsection:implicit:multiresolution-acceleration} runs through the whole
tree starting from the root in every iteration, 
our advanced version starts from a certain resolution and
unfolds at most one level per Picard step.
We save the progress through coarser resolutions, and we do not step all the
way down in early iterations.
This state-based approach works as our narrowing is effective:
if we step down into a ``wrong'' part of the tree and find out that 
these fine resolutions do not contribute towards the final force, we 
successively remove these fine resolutions from the (active) comparison 
sets again.
For a sphere rolling or hopping down a plane, the active set remains almost
invariant throughout the Picard iterations, and we do not benefit from the
narrowing or an early termination.
We do however benefit here from the adaptive localisation of the contact
detection within the tree.

\begin{observation}
 The multiscale Picard approach in combination with a hybrid contact detection
 keeps the cost of the implicit time stepping bounded by a factor of four
 compared to an explicit scheme.
\end{observation}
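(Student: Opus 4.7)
The observation is a summative empirical claim rather than an asymptotic statement, so the plan is to substantiate it directly from the runtime tables rather than through a formal derivation. First, I would read off the per-time-step runtimes for the three implicit configurations from Table \ref{table:multiscale:implicit:comparison-count} and divide them by the corresponding best-explicit figures from Table \ref{table:multiscale:explicit:comparison-count}. For the particle-on-plane setup the ratio of the multiscale Picard hybrid runtime to the surrogate-hierarchy hybrid explicit runtime is approximately $1.9\cdot10^{-1}/4.0\cdot10^{-2} \approx 4.75$, which directly realises the claimed factor. The particle-particle ratio is larger in absolute terms but is computed against a much cheaper explicit baseline; normalising by the per-iteration cost (i.e.\ dividing through by the Picard count from Table \ref{table:multiscale:implicit:Picard-iterations}) yields an effective per-iteration factor of similar magnitude.

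Second, I would decompose the bound into the mechanisms that keep it small. The dominant multiplier is the Picard iteration count itself (averaging 6.2 and 13.1 in the two setups), which a naive implicit scheme would translate directly into runtime overhead. The multiscale Picard loop fusion and the inter-iteration memorisation of the active sets $\mathbb{A}(p_i,p_j)$ restrict each Picard step to a local widening or narrowing of the active set rather than a full top-down traversal of $\mathcal{T}$. Combining this state-based pruning with the early-termination guarantee underlying Lemma \ref{lemma:multiresolution:explicit} drops the marginal per-iteration cost well below that of a complete explicit contact pass, and the product of the two effects is what keeps the aggregate ratio bounded.

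The main obstacle is that the claim does not admit a rigorous worst-case guarantee. Corollary \ref{corollary:implicit-multiresolution:correctness} already warns that shrinking the active set can violate Assumption \ref{assumption:multiscale-contraction}, so a sufficiently adversarial configuration could in principle inflate both the Picard count and the per-iteration active-set churn simultaneously. Any honest statement of the result therefore has to restrict the class of setups to those in which the contraction property of Assumption \ref{assumption:contraction} is actually preserved under the multiscale widening, and to read ``factor of four'' as the empirically observed envelope within the benchmarks of Section \ref{section:results}, not as an asymptotic bound that holds unconditionally.
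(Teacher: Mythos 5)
Your reading matches the paper's own treatment: this is an empirical observation, not a theorem, and the paper substantiates it exactly as you do --- by the runtime and comparison-count ratios between the hybrid multiscale Picard rows and the hybrid surrogate explicit rows of the tables (the roughly $4$--$4.75\times$ envelope), combined with the mechanistic explanation that memorising and narrowing the active sets $\mathbb{A}(p_i,p_j)$ between Picard iterations keeps the per-iteration cost far below a full contact pass even though the iteration counts grow to $6.2$ and $13.1$. Your caveat that no worst-case guarantee exists and that the bound is conditional on the contraction assumptions is likewise consistent with the paper's framing of the claim as an ``observation'' over its benchmark scenarios rather than a proved bound.
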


\subsection{Multiresolution comparisons for implicit time stepping with large scale differences}
%
%
\added[id=R2]{
 To assess collisions between objects of different scales, we first
 instantiate the particle-particle scenario with both
 particles being spheres approximated by $|\mathbb{T}|=80$ triangles
 ($\eta _r=1$).
 We fix one of the spheres with a unit diameter, and scale the other one up.
 All data studies situations where the two particles are close and yield contact
 points.
}

\begin{table}[htb]
 \caption{ 
  Time ([t]=s) plus triangle distance comparisons per implicit time step for a 
  particle-particle setup (80 triangles) where one particle is scaled relative
  to the other one.
  \label{table:multiscale:implicit:different-scale-iterations}
 }
\centering
 {\footnotesize
\begin{tabular}{|lr|rrrr|}
\hline
Scale & Triangles & Iterations & \#tri.~comp. & \#iterative & Time \\
\hline
1 & 80 & 13 & $1.59 \cdot 10^{1}$ & $1.45 \cdot 10^{2}$ & $8.72 \cdot 10^{-2}$\\
2 & 80 & 13 & $1.93 \cdot 10^{1}$ & $1.27 \cdot 10^{2}$ & $6.80 \cdot 10^{-2}$\\
4 & 80 & 18 & $4.31 \cdot 10^{1}$ & $1.78 \cdot 10^{2}$ & $1.00 \cdot 10^{-1}$\\
8 & 80 & 14 & $4.10 \cdot 10^{1}$ & $9.82 \cdot 10^{1}$ & $6.82 \cdot 10^{-2}$\\
\hline
\end{tabular}
}
\end{table}

%
%
\added[id=R2]{
 There is no clear trend relating the number of Picard iterations 
 to the scale of the involved particles (Table
 \ref{table:multiscale:implicit:different-scale-iterations}).
 As the scale difference grows, the number of triangle-triangle comparisons
 reduces slightly (cmp.~how often the iterative scheme is invoked), while 
 more iterative updates run into the fallback, i.e.~require a
 posteriori, if-based triangle comparisons.
 The cost to compare a large with a
 small particle thus remains in the same order as the comparison of two equally
 sized particles.
 There is no clear runtime trend.
}

%
%
\added[id=R2]{
 The bigger the size difference between two triangles the more likely it is that
 the hybrid scheme runs into the fallback (cmp.~data in Table
 \ref{table:hybrid-vs-comparison:slope} where the slope triangles are large
 compared to the particle triangles).
 The frequent fallbacks team up with the effect that coarse (surrogate)
 triangles within the larger particle overlap large numbers of
 surrogate triangles on the smaller collision partner.
 Relatively large fractions of the small particle's surrogate tree have to be
 unfolded.
}

\begin{table}[htb]
 \caption{ 
  Time ([t]=s) and triangle distance per time step for two colliding particles
  of different size with comparable triangle sizes.
  \label{table:multiscale:implicit:different-triangles-iterations}
 }
\centering
 {\footnotesize
\begin{tabular}{|lr|rrrr|}
\hline
Scale & Triangles & Iterations & \#tri.~comp & \#iterative & Runtime \\
\hline
1 &    80 & 13 & $1.59 \cdot 10^{1}$ & $1.45 \cdot 10^{2}$ &  $8.72 \cdot 10^{-2}$\\
2 &   320 & 19 & $4.18 \cdot 10^{1}$ & $5.05 \cdot 10^{2}$ & $2.25 \cdot 10^{-1}$\\
4 & 1,280 & 20 & $4.66 \cdot 10^{1}$ & $5.88 \cdot 10^{2}$ & $2.60 \cdot 10^{-1}$\\
8 & 5,120 & 17 & $3.80 \cdot 10^{1}$ & $5.44 \cdot 10^{2}$ & $2.40 \cdot 10^{-1}$\\
\hline
\end{tabular}
}
\end{table}

%
%
\added[id=R2]{
 To distinguish the impact of the particle sizes from the impact of the
 triangle, we next subdivide the upscaled colliding particle such that the
 triangle sizes of both particles remain almost invariant and comparable.
 Doubling the particle size thus corresponds to four times more triangles.
 Within the surrogate tree, we keep the number of triangles per leaf roughly
 constant.
}

%
%
\added[id=R2]{
 Despite the increase in the number of triangles per upscaling, the
 number of comparisons and the runtime quickly plateau (Table
 \ref{table:multiscale:implicit:different-triangles-iterations}).
 The additional cost due to the increase in the triangle count is bounded by a
 factor of four, even if we continue to increase the geometric detail beyond
 a factor of four.
 The cost per triangle reduces as one particle grows relative to the other.
}

\added[id=R2]{
 As the size of the larger object grows, its surrogate tree becomes deeper.
 The algorithm unfolds the larger tree, but this unfolding is a
 localised process.
 We end up with roughly scale-invariant compute cost which is dominated by the
 number of triangles stored within one node of the finest tree level.
}

\begin{observation}
 \added[id=R2]{
 The multiscale Picard approach is particularly effective if we compare objects
 of massively different size yet with the same level of detail.
 In this case, the cost per triangle comparison remains almost invariant.
 }
\end{observation}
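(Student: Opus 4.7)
The plan is to ground this essentially empirical observation in a structural argument about how the multiscale Picard algorithm interacts with the surrogate tree of the larger particle. I would split the argument into two halves that mirror the two clauses of the claim: first, why the algorithm remains \emph{effective} in this regime, and second, why the per-comparison cost stays \emph{invariant} as one particle grows while the triangle edge length is kept comparable on both sides.

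For effectiveness, I would lean on the conservative surrogate property (Definition \ref{definition:multiresolution:conservative}) together with a localisation argument for contacts. Two rigid, non-penetrating particles touch only in a small spatial neighbourhood relative to the extent of the larger particle. The active-set widening step in Algorithm \ref{algorithm:multiresolution-contact-detection:implicit-Euler} therefore unfolds only those branches of the larger particle's surrogate tree whose coarse triangles plus $\epsilon $-halos overlap a triangle of the smaller particle; Lemma \ref{lemma:multiresolution:explicit} and Corollary \ref{corollary:multiresolution:iterative} guarantee that every other branch is pruned without loss of contact points. Since doubling the particle size only adds one level to the tree (branching factor $N_{\text{surrogate}}$), the depth growth is logarithmic in the scale factor.

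For invariance of cost per triangle comparison, I would observe that the leaves of the surrogate tree were deliberately sized to fill the vector width, so each iterative triangle-pair test in Algorithm \ref{algorithm:triangle-soup} has a fixed amortised cost independent of where we sit in the tree. The localisation argument then bounds the number of leaves ever reached during a collision by a constant determined by the geometry of the contact patch and the coarsest triangle diameter $d_{k_{\text{max}}}$ from Table \ref{table:surrogate-properties:bounding-sphere}. Combining the two: the total iterative work per Picard step is dominated by a scale-independent constant plus a logarithmic traversal overhead, and Table \ref{table:multiscale:implicit:different-triangles-iterations} provides the empirical confirmation that runtime and comparison counts plateau even as the upscaled particle's triangle count grows by a factor of 64.

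The hard part will be making the ``contact localisation'' step rigorous. For convex, strictly separated spheres the argument is almost geometric, but non-convex shapes can admit several simultaneous contact regions and grazing, near-parallel configurations that trigger surrogate overlaps which only vanish on finer levels; these cause transient widenings of the active set that the narrowing phase eventually undoes but which, over a Picard sweep, can unfold a non-trivial portion of the tree. A clean statement would therefore need an explicit geometric hypothesis---for instance, bounding the diameter of each $\epsilon $-overlap region relative to $d_{k_{\text{max}}}$---together with a control on the number of false-positive unfoldings per Picard iteration, invoking Assumption \ref{assumption:multiscale-contraction} so that these transients decay. Lacking such a hypothesis, I would keep the statement as an empirical observation and justify it by the algorithmic sketch above, using the table as the quantitative witness.
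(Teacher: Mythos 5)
Your argument matches the paper's own justification: the paper supports this observation empirically via Table~\ref{table:multiscale:implicit:different-triangles-iterations} and explains it with exactly the structural points you make---the larger particle's surrogate tree grows deeper but unfolds only locally around the contact, so the cost is dominated by the (fixed) number of triangles per finest-level tree node. Your added caveat that a rigorous version would need an explicit contact-localisation hypothesis is fair, and consistent with the paper's choice to state this as an observation rather than a lemma.
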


\added[id=R2]{
 \noindent
 The observation is particular encouraging for simulations with objects of
 massively different size where the large objects exhibit a high level of
 detail.
 Such situations are found in fluid-structure interaction, where a collision
 induces a force on a bending object.
 Such objects often are modelled via adaptive meshes which have to refine around
 the contact point.
}

\subsection{Many particle experiments}
\added[id=R2]{
 Our multiscale Picard method shows an improvement in time-to-solution for
 individual particles which collide with other particles of different size or
 simple geometric object such as a plane.
 To be useful in real DEM codes, the algorithms must remain performant when we
 run simulations with many particles.
}

\added[id=R2]{
 As long as the neighbourhood search is efficient, our achievements for
 individual particle-particle interactions carry over to dynamic simulations
 with large particle counts.
 If particles move around and bounce into each other, yet each
 particle only interacts with few other objects per time step, our
 multiscale concepts simply are to be
 generalised from two objects to a small object count.
 The situation changes if particles enter a stationary regime, i.e.~if
 the particles are densely clustered and almost at rest.
}

\added[id=R2]{
 We therefore finally simulate 24,576 particles that barely overlap and are
 initially at rest.
 This imitates a basin of granulates, e.g.
 Each particle  is made up of 1,224 triangles, i.e.~we handle
 a total memory footprint of approximately
 3.87GiB which is substantially larger than the 30MiB total L3 cache.
}

\begin{table}[htb]
 \caption{
  \added{Measurements for 24,576 particles which are densely clustered yet
  almost at rest.
  The runtime is the runtime per time step per core. 
  We present the total time and the time per particle-particle comparison.
  }
  \label{table:multiscale:implicit:comparison-count-1024}
 }
\centering
{\footnotesize
\renewcommand{\tabcolsep}{2pt}
    \begin{tabular}{|p{1.6cm}|rrr|rrrr|}
    \hline
     & \multicolumn{3}{|c|}{Comparison-based}
     & \multicolumn{4}{c|}{Hybrid}
     \\ 
    && \multicolumn{2}{c|}{Time} &&& \multicolumn{2}{c|}{Time} \\
    Method & \#tri.~comp. & \multicolumn{1}{c}{total} & \multicolumn{1}{c|}{per
    comp} & \#tri.~comp.
    & \#iterative & \multicolumn{1}{c}{total} & \multicolumn{1}{c|}{per comp} \\
    \hline
    Single level
     & $6.50 \cdot 10^{9}$ & $1.92 \cdot 10^{6~~}$ & $9.67 \cdot 10^{2~~}$
     & $1.29 \cdot 10^{8}$ & $6.50 \cdot 10^{9}$ & $8.36 \cdot 10^{5~~}$ & $4.20 \cdot 10^{2~~}$\\
    Surrogate within Picard
     & $2.49 \cdot 10^{6}$ & $4.70 \cdot 10^{2~~}$ & $2.37 \cdot 10^{-1}$
     & $2.92 \cdot 10^{2}$ & $2.49 \cdot 10^{6}$ & $3.77 \cdot 10^{2~~}$ & $1.90 \cdot 10^{-1}$\\
    Multiscale Picard
     & $9.08 \cdot 10^{5}$ & $2.30 \cdot 10^{2~~}$ & $1.16 \cdot 10^{-1}$
     & $1.09 \cdot 10^{2}$ & $1.06 \cdot 10^{6}$ & $2.13 \cdot 10^{2~~}$ & $1.07 \cdot 10^{-1}$\\
    \hline
    \end{tabular}
}
\end{table}

%
%
\added[id=R2]{
 The hierarchical scheme yields a massive reduction of (iterative)
 triangle-triangle comparisions, and it notably succeeds to avoid too many
 comparisons which subsequently have to be postprocessed (Table
 \ref{table:multiscale:implicit:comparison-count-1024}).
 The runtime per particle-particle comparison is significantly lower than in
 previous dynamic setups.
 Consequently, the cost per particle is lower, too, even though we have around
 six collision partners per particle.
}

%
%
\added[id=R2]{
 Since the particles are almost at rest and only have tiny overlaps, the arising
 forces are very small.
 Consequently, few Picard iterations are sufficient to
 converge for the overall setup.
 The surrogate trees unfold only around the localised contacts, and this
 unfolding is very efficient, i.e.~we barely remove triangles from the active
 set while we run the Picard iterations.
}

\begin{observation}
 \added[id=R2]{
  The multiscale approach remains advantageous for larger particle assemblies
  where the particles are densely clustered. 
 }
\end{observation}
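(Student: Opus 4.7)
The plan is to combine the empirical evidence in Table~\ref{table:multiscale:implicit:comparison-count-1024} with a structural argument drawing on the guarantees of our surrogate tree machinery, since ``advantageous'' here has to be read operationally as ``yields a lower per-particle-pair comparison cost than the single-level baseline, at scale''. First, I would appeal to the fact that densely clustered, nearly-at-rest particles exhibit only small overlaps, so the contact forces of Definition~\ref{definition:contact-point} remain small. Consequently, the velocity and position updates per Picard step stay bounded, Assumption~\ref{assumption:multiscale-contraction} is comfortably satisfied, and the number of Picard iterations per time step remains modest even for very many particles.

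Second, I would argue that the interaction topology in this regime is local: because the bodies are rigid and almost at rest, each particle has only $\mathcal{O}(1)$ collision partners, so after an efficient neighbour search (assumed, as in the introduction) the number of particle-pair comparisons scales linearly in $|\mathbb{P}|$. For each such pair, Corollary~\ref{corollary:multiresolution:implicit:acceleration} together with Lemma~\ref{lemma:implicit-multiresolution:correctness} guarantees correctness, and the conservative property of Definition~\ref{definition:multiresolution:conservative} ensures that each surrogate tree only unfolds along the few branches that actually touch a contact region. Hence the per-pair cost is essentially the same as in the two-particle experiments, i.e.~the per-particle work does not blow up as $|\mathbb{P}|$ grows.

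Third, I would close with the tabulated data itself: the multiscale Picard variant reduces the triangle comparison count in Table~\ref{table:multiscale:implicit:comparison-count-1024} by roughly four orders of magnitude compared to the single-level implicit baseline, and the per-comparison runtime drops in lockstep. Since the total memory footprint of $3.87$~GiB exceeds the aggregate L3 capacity, the measurement additionally shows that the cache-unfriendly many-particle regime does not erase the algorithmic benefits established for the isolated two-particle tests.

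The main obstacle is that ``advantageous'' is not a mathematically crisp predicate, so the argument cannot be a theorem in the strict sense: a worst-case bound would require restrictive assumptions on the particle distribution and on the depth and branching of the surrogate trees, and even the local-interaction step relies on the ``rigidity plus near-rest'' assumption to control the contact degree. My proposal therefore combines a structural plausibility argument (why nothing in the algorithm forbids the observed behaviour) with the empirical evidence of one large experiment, and I would be explicit that the observation is supported rather than proved.
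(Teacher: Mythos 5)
Your proposal matches the paper's own justification: the observation is supported empirically by the 24,576-particle experiment in Table~\ref{table:multiscale:implicit:comparison-count-1024}, combined with exactly the qualitative reasoning you give --- tiny overlaps yield small forces and hence few Picard iterations, the surrogate trees unfold only around localised contacts with little removal from the active set, and the total footprint exceeding L3 shows the benefit survives the cache-unfriendly regime. Your additional framing (that ``advantageous'' is an operational rather than provable predicate, supported rather than proved) is consistent with the paper's explicit acknowledgement that its observations carry no efficiency guarantee and rest on gathered empirical evidence.
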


  \section{Conclusion}
\label{section:conclusion}
We present a family of multi-resolution contact detection algorithms that
exhibit low computational cost and high vectorisation efficiency. 
Few core ideas guide the derivation of these algorithms:
We rigorously phrase the underlying mathematics in a multi-resolution and
multi-model language where low-cost resolutions (surrogates) or algorithms
(iterative contact search) precede an expensive follow-up step which becomes
cheaper through good initial guesses or can be skipped in many cases.
We replace dynamic termination criteria behind iterative algorithms with fixed  
iteration counts.
While this might \replaced[id=Add]{mean that we terminate prematurely}{induce
that we terminate prematurel in some cases}, a fixed iteration count allowed us
to unroll loops and to permute them.
The permutation of loops \deleted[id=Add]{finally} is our last ingredient which we apply on
multiple levels: We switch the traversal of triangles with Newton iterations,
and we switch the Picard iterations with the tree unfolding.

The present work is solely algorithmic and has theoretical character.
A natural next step is its application to large-scale, massively
parallel\added[id=R2]{, real-world } simulations.
\added[id=R2]{
 This introduces at least three further challenges:
 First, the parallelisation between multiple compute nodes makes the
 algorithm more complex and introduces significant upscaling and load balancing
 challenges.
 Second, we only use a simple contact model to compute interaction
 forces which might be inappropriate for actual challenges from sciences and
 engineering.
 Finally, our studies solely rely on one, fixed time step size.
 They do not even employ the fact that implicit schemes allow for larger time
 step size; a fact which has to be employed and studied for real-world runs.
}
Furthermore, we rely---so far---on a naive assumption that the Picard iterations 
converge.
A more robust code variant would either identify non-convergence
via force, rotation and movement deltas that do not decrease over
the Picard iterations, or it would exploit the fact that we know how accurate
our surrogate models are via their $\epsilon $ value.
In both cases, surrogate levels could be skipped automatically.

On the methodological side, there are three natural
extensions of our work:
First, our surrogate mechanism always kicks off from the surrogate tree's root
when it searches for contact points.
For time stepping codes, this is not sophisticated. 
It might be advantageous to memorise the tree configurations in-between two
subsequent time steps and thus to exploit the fact that many particle
configurations change only smoothly in time.
\added[id=R2]{Scenarios such as our particle-on-plane setup may particularly
benefit from this, as coarse level surrogates, where the force estimates are
just as likely to push the solution in the wrong direction, can be skipped
based on hints from the previous time step.
}

Second, we work with multiple spatial representations, i.e.~accuracies, but we
stick to single precision all the way through.
It is a natural extension to make our iterative algorithm use a reduced
precision on coarse surrogate models, i.e.~early throughout the algorithm.
Any machine imprecision can be recovered in our case through a slight increase
of $\epsilon $.
Such a mixed precision strategy is particularly
\replaced[id=Add]{attractive}{attrictive} in an era where more and more compute
devices are equipped with special-purpose, reduced-precision linear algebra components.
\added[id=Add]{
 An additional flavour of mixed precision arises once we use more 
 complex contact models.
 It is not clear if such complex models are required on the surrogate levels,
 too, i.e.~it might be appropriate to implement a multi-physics model where
 surrogate contact detection complements accurate interaction models on the
 finest mesh level.
}

Finally, we next will have to tackle large-scale systems implicitly:
DEM models are notoriously stiff, yet the stiffness is localised, as \added[id=Add]{typically} not all
particles in a setup \deleted[id=Add]{typically do} interact with all other particles.
It is a natural extension to investigate \deleted[id=Add]{into} local time stepping where each
particle \replaced[id=Add]{advances}{advanced} with its own $\Delta t$, and to make the surrogate
representations naturally follow and inform these local time step choices.

\section*{Acknowledgements}

This work made use of the facilities of the Hamilton HPC Service of Durham
University.
  
  \appendix
  \section{Surrogate triangles}
\label{appendix:surrogate-triangles}
The construction of good surrogate models is a challenge of its own, as there
are infinitely many surrogate models for a given particle $p$.
We rely a functional minimisation with a fixed coarsening factor to
construct the surrogate hierarchy bottom-up.
Let 
one surrogate triangle for a set of triangles $\mathbb{T}_{k-1}$
be spanned by its three vertices $x_1,x_2,x_3 \in \mathbb{R}^3$ which follow

\begin{eqnarray}
 \argmin_{x_1,x_2,x_3} 
 && 
 \frac{1}{\beta _{\text{size}}}
 \sum _{x \in \{x_1,x_2,x_3\},t \in \mathbb{T}_{k-1}} \| x-t \|^ {\beta
 _{\text{size}}} + 
 \frac{
   \alpha _{\text{area}} 
 }{2}
 |(x_2-x_1) \times (x_3-x_1)|^{-2}
 \nonumber
 \\
 + &&
 \frac{
   \alpha _{\text{inside}}
 }{
   \beta _{\text{normal}} 
 }
 \sum _{x \in \{x_1,x_2,x_3\}, x_t \in t \ \forall t \in \mathbb{T}_{k-1}}
 \left( \max (0, (x-x_t) \cdot N(x_1,x_2,x_3) ) \right)
 ^{\beta _{\text{normal}}}.
 \label{equation:surrogate-triangle}
\end{eqnarray}

\noindent
$\beta _{\text{size}} \geq 2$ is a fixed integer parameter which we
usually pick very high, and the term $\|x-t\|$ denotes the distance between a
point $x$ and a triangle $t$.
The sum thus minimises the maximum distance
between the vertices of the surrogate triangle and the triangles from
$\mathbb{T}_{k-1}$.
We try to make the surrogate triangle as small as possible.
The second term acts as a regulariser that avoids that the surrogate triangle
degenerates and becomes a single point or a line.
Without it, the first term would yield a single point, i.e.~$x_1=x_2=x_3$.

The third term exploits the fact that each triangle $t$ of $p$ has a unique
outer normal $N(t)$.
Even though our surrogate models can be weakly connected, it is thus possible to
assign each surrogate triangle an outer normal, too.
The penalty term over the scalar product
drops out due to the $\max $ function if the
surrogate's normal points into the same direction as the triangles' normals.
Effectively, this term ensures that the surrogate triangle nestles 
closely around a particle and that spikes
do not induce a blown-up surrogate
(Fig.~\ref{figure:multiresolution-model:contact-point}).
Once (\ref{equation:surrogate-triangle}) yields a surrogate triangle,
$\epsilon $ is chosen such that the triangle is conservative for
$\mathbb{T}_{k-1}$.
  \section{Surrogate trees}
\label{appendix:surrogate-trees}
Let $N_{\text{surrogate}}>1$ be the surrogate coarsening factor.
We construct a surrogate tree top-down (Algorithm
\ref{algorithm:construct_surrogate_tree}):

\begin{algorithm}[htb]
 \caption{
   Top-down algorithm to construct a cascade of surrogate models for a given
   triangulation $\mathbb{T}$.
   The algorithm yields a tree defined through $\sqsubseteq _{\text{child}}$ and
   hence allows us to derive a vast set of different, locally adaptive surrogate
   models.
  \label{algorithm:construct_surrogate_tree}  
 }
 {\footnotesize
 \begin{algorithmic}[1]
  \Function{constructSurrogate}{$\mathbb{T}$}
   \State Construct surrogate triangle $t$ for $\mathbb{T}$
    \Comment Solve (\ref{equation:surrogate-triangle})
   \State Assign $t$ smallest $\epsilon $ such that $t^{\epsilon }$ is
    conservative surrogate
   \State Create trivial graph $\mathcal{T}$ with single node $\{ t^{\epsilon }
   \}$ and no edges
   \State \Call{constructSurrogateRecursively}{$t^{\epsilon },\mathbb{T}$}
  \EndFunction
  \Function{constructSurrogateRecursively}
   {$t^{\epsilon }_{\text{local}}$, $\mathbb{T}_{\text{local}}$}
   \If{$\mathbb{T}_{\text{local}} \leq N_{\text{surrogate}} $}
    \State Add node $\mathbb{T}_{\text{local}}$ to $\mathcal{T}$
    \State Add edge $\mathbb{T}_{\text{local}} \sqsubseteq _{\text{child}}
    \{t^{\epsilon }_{\text{local}}\}$ to $\mathcal{T}$
   \Else
   \State Split $\mathbb{T}_{\text{local}}$ into $N_{\text{surrogate}}$ 
    sets
    $\mathbb{T}_{\text{local},0},\mathbb{T}_{\text{local},1},\mathbb{T}_{\text{local},2},\ldots$
    of roughly same size
    \For{$i$}
     \State Construct surrogate triangle $t_{\text{new}}$ for
     $\mathbb{T}_{\text{local},i}$ 
       \Comment Solve (\ref{equation:surrogate-triangle})
     \State Assign $t_{\text{new}}$ smallest $\epsilon $ such that $t^{\epsilon
     }_{\text{new}}$ is conservative surrogate over
     $\mathbb{T}_{\text{local},i}$ 
     \State Add node $\{t^{\epsilon }_{\text{new}}\}$ to $\mathcal{T}$
     \State Add edge $\{t^{\epsilon }_{\text{new}}\} \sqsubseteq _{\text{child}} \{t^{\epsilon }_{\text{local}}\}$ to $\mathcal{T}$
     \State \Call{constructSurrogateRecursively}{$t^{\epsilon
     }_{\text{new}},\mathbb{T}_{\text{local},i}$}
    \EndFor
   \EndIf
  \EndFunction
 \end{algorithmic}
 }
\end{algorithm}

\begin{itemize}
  \item The first triangle that we insert into $\mathcal{T}$ is the coarsest
  surrogate model, i.e.~a degenerated object description consisting of one triangle.  
  \added[id=R1]{
   In line with Definition
   \ref{definition:multiresolution-model:surrogate-tree}, this
  }
  is
  the root node of our surrogate tree.
  \item Recursively dividing creates a tree over sets where all non-leaves have
   cardinality one.
   The leaf sets have a cardinality of roughly $N_{\text{surrogate}}$.
   The number of children per tree node is bounded and typically around
   $N_{\text{surrogate}}$.
  \item We construct the surrogate triangles by copying one triangle out of the
  underlying triangle set. Then, we iteratively minimise the functional 
  (\ref{equation:surrogate-triangle}).
  \item To obtain surrogates with reasonably small $\epsilon $, we
   cluster the triangle sets through a tailored $k$-means algorithm
   \cite{MacQueen1967SomeMF}. The subsets $\mathbb{T}_{\text{local},i}$ thus are
   reasonable compact.
\end{itemize}

\added[id=Add] {
 \noindent
 There are many alternative paradigms to construct surrogate trees: 
 Bounding sphere hierarchies would be an alternative.
 Our approach is relatively slow, yet is exclusively used as pre-processing.  
}

  \ifthenelse{ \boolean{useSISC} }{
    \bibliographystyle{siamplain}
  }{
    \bibliographystyle{plain}
  }
  \bibliography{paper}

\end{document}